\documentclass{statsoc}

\usepackage[a4paper]{geometry}
\usepackage{amssymb, amsmath, multirow}
\usepackage{etoolbox}

\makeatletter
\patchcmd{\@makecaption}
  {\parbox}
  {\advance\@tempdima-\fontdimen2} 
  {}{}
\makeatother  

\usepackage{graphicx}
\usepackage[caption = false]{subfig} 
\usepackage{enumerate}
\usepackage{comment}
\usepackage{natbib, float}

\usepackage{epsfig, amsmath, amssymb, latexsym, url, xcolor, bbm}

\usepackage{float} 
\floatstyle{plain}
\newfloat{Algorithm}{thp}{lop}
\floatname{Algorithm}{Algorithm}

\setlength{\bibsep}{0.0pt}
\RequirePackage[colorlinks,citecolor=blue,urlcolor=blue]{hyperref}

\newtheorem{theorem}{Theorem}
\newtheorem{corollary}{Corollary}
\newtheorem{definition}{Definition}
\newtheorem{lemma}{Lemma}

                        
\def\tr{\text{\rm tr}}
\def\vec{\text{\rm vec}}
\def\d{\text{\rm d}}
\def\cov{\text{\rm cov}}
\def\var{\text{\rm var}}
\def\logit{\text{\rm logit}}
\def\KL{\text{\rm KL}}
\def\ML{\text{\rm ML}}

\def\argmax{\text{\rm argmax}}
\def\diag{\text{\rm diag}}
\def\dg{\text{\rm dg}}
\def\vech{v}
\def\hnabla{\widehat{\nabla}}
\def\heta{\hat{\eta}}
\def\hg{\hat{g}}
\def\hH{\hat{H}}
\def\tilb{\tilde{b}}
\def\tiltheta{\tilde{\theta}}

\def\L{{\mathcal{L}}}
\def\N{{\mathcal{N}}}

\graphicspath{{img/}}

\title[Reparametrized variational Bayes]{Use of model reparametrization to improve\\
variational Bayes}

\author[L. S. L. Tan]{Linda. S. L. Tan\footnote{Linda Tan was supported by the start-up grant R-155-000-190-133.}}

\address{National University of Singapore, Singapore}

\email{statsll@nus.edu.sg}

\begin{document}

\begin{abstract}
We propose using model reparametrization to improve variational Bayes inference for hierarchical models whose variables can be classified as global (shared across observations) or local (observation specific). Posterior dependence between local and global variables is minimized by applying an invertible affine transformation on the local variables. The functional form of this transformation is deduced by approximating the posterior distribution of each local variable conditional on the global variables by a Gaussian density via a second order Taylor expansion. Variational Bayes inference for the reparametrized model is then obtained using stochastic approximation. Our approach can be readily extended to large datasets via a divide and recombine strategy. Using generalized linear mixed models, we demonstrate that reparametrized variational Bayes (RVB) provides improvements in both accuracy and convergence rate compared to state of the art Gaussian variational approximation methods.
\end{abstract}

\keywords{Model reparametrization; Variational approximation; Stochastic variational inference; Generalized linear mixed models; Centering}

\section{Introduction} \label{sec:Introduction}
Hierarchical models account for the statistical dependence of observations within clusters, by organizing them in a multilevel structure and introducing parameters which are cluster-specific or shared across clusters. Such models are immensely flexible and have found applications in many fields, ranging from social science \citep{Gelman2006}, medicine \citep{McCormick2012} to ecology \citep{Cressie2009}. The linear mixed model (LMM) and generalized linear mixed model (GLMM), for instance, are highly popular in epidemiological and biomedical studies where observations are often correlated and hierarchical in nature \citep{Liu2016}. However, hierarchical modeling poses computational challenges for practitioners, whether a Bayesian or maximum likelihood approach is used for inference. Here we consider a Bayesian framework, which accounts for uncertainty and allows for the incorporation of prior information from previous studies. A variety of hierarchical models can be fitted in the Bayesian framework using Markov chain Monte Carlo (MCMC) methods via standard software such as BUGS \citep{Lunn2009} or Stan \citep{Carpenter2017}. Alternatively, fast approximate Bayesian inference can be obtained for latent Gaussian models by using integrated nested Laplace approximation \citep[INLA,][]{Rue2009}, and for a broader class of models by using variational approximation \citep{Jacobs1991, Blei2017}. 

The parametrization of a hierarchical model has a huge impact on the performance of the statistical method used for fitting it, and two popular approaches are {\em centering} and {\em noncentering}. As illustration, the normal hierarchical model, $y|x \sim N(x,\sigma_y^2)$ and $x|\theta \sim N(\theta, \sigma_x^2)$, where $y$ is observed and $\theta$ is the unknown parameter, is {\em centered} as the latent variable $x$ is centered about $\theta$. The equivalent model, written as $y|\tilde{x}, \theta \sim N(\tilde{x} + \theta,\sigma_y^2)$ and $\tilde{x} \sim N(0, \sigma_x^2)$ is {\em noncentered} as $\tilde{x}$ is independent of $\theta$ a priori. \cite{Gelfand1995, Gelfand1996} observe that MCMC samplers for fitting LMMs and GLMMs converge slowly when there is weak identifiability or high correlation among variables, and reparametrizations in the form of hierarchical centering improve the rate of convergence greatly. However, \cite{Papaspiliopoulos2003, Papaspiliopoulos2007} show that centering performs well only if the data are highly informative about the latent process, and noncentering is preferred if the latent process is only weakly identified. For the normal hierarchical model, $y$ is informative about $x$ if $\sigma_y^2 \ll \sigma_x^2$ (centering is preferred), and $x$ is weakly identified if $\sigma_y^2 \gg \sigma_x^2$ (noncentering performs better). In fact, centering and noncentering are complementary as the Gibbs sampler will converge faster under one parametrization if it converges slowly under the other. For the best of both worlds, \cite{Yu2011} introduce an ancillarity-sufficiency strategy that interweaves the centered and noncentered parametrizations to improve the efficiency of MCMC algorithms. 

While centering and noncentering are motivated by improving the prior geometry of latent variables to minimize ``missing" information, reparametrizations that target the posterior geometry may yield even better results. In the MCMC context, \cite{Papaspiliopoulos2003, Papaspiliopoulos2007} introduce a {\em partially noncentered} parametrization for local variables in LMMs so that they are a posteriori independent of each other and the global variables. This parametrization lies on the continuum between centering and noncentering and selects automatically the best parametrization. \cite{Christensen2006} propose data-based reparametrizations for spatial GLMMs that reduce posterior correlation among variables so that mixing and convergence of MCMC algorithms are more robust. INLA, which focuses on accurate marginal posterior approximations of latent Gaussian models, also uses a reparametrization of the global parameters (based on their posterior mode and covariance) to correct for scale and rotation, thus aiding exploration of the posterior marginal and simplifying numerical integration. In this article, we propose a reparametrization of the local variables that improves variational Bayes \citep[VB,][]{Attias1999} inference for hierarchical models, by applying an invertible affine transformation to minimize posterior correlation among local and global variables.

 \subsection{Variational Bayes inference}
Variational approximation has become an increasingly popular alternative to MCMC methods for estimating posterior densities due to their ability to scale up to high-dimensional data. Suppose $y$ is the observed data and $\theta$ is the set of variables in a model. In variational approximation, some restriction is placed on the density $q(\theta)$ approximating the true posterior so that it is more tractable. The Kullback-Leibler (KL) divergence between $q(\theta)$ and $p(\theta|y)$, $D_\KL(q||p) = \int q(\theta) \log \{q(\theta)/p(\theta|y)\} d\theta$, is then minimized subject to these restrictions \citep{Ormerod2010}. Common restrictions assume a parametric density such as a Gaussian or a product density, where $q(\theta) = \prod_{i=1}^n q_i(\theta_i)$ for some partition $\theta=\{\theta_1, \dots, \theta_n\}$. The latter, known as VB, has many advantages such as being low-dimensional, quick to converge, having closed form updates (for conditionally conjugate models) and scalability to large data \citep{Hoffman2013}. However, the resulting approximation can be poor if strong posterior dependencies exist among $\{\theta_i\}$. 

\cite{Tan2013} demonstrate that partial noncentering can improve VB inference for GLMMs in terms of convergence rate as well as posterior approximation accuracy. Suppose $\beta$, $b_i$ and $\Omega$ denote the fixed effects, random effects and random effects precision matrix for subject $i =1, \dots, n$ in a GLMM, and there are random effects corresponding to each fixed effect. In VB, we may consider $q(\theta) = q(\Omega)q(\beta)\prod_{i=1}^n q(b_i)$, where $\theta=\{\Omega, \beta, b\}$ and $b = [b_1^T, \dots, b_n^T]^T$. However, this approximation fails to capture the usually strong posterior dependence among $\{\beta, b\}$. If we transform $b_i$ such that $\tilb_i = b_i - w_i \beta$, then $w_i$ can be tuned so that posterior dependence between $\tilb_i$ and $\beta$ is minimized. An approximation $q(\tiltheta) = q(\Omega)q(\beta)\prod_{i=1}^n q(\tilb_i)$, where $\tiltheta = \{\Omega, \beta, \tilb_1, \dots, \tilb_n\}$ will then reflect the dependence structure in $p(\tiltheta|y)$ more accurately. However, this transformation does not account for posterior dependence between $b_i$ and $\Omega$, and difficulties remain in estimating the posteriors of $\Omega$ and regression coefficients in $\beta$ which cannot be centered accurately. \cite{Lee2016} and \cite{Rohde2016} consider an alternative VB approximation for GLMMs which groups $\beta$ and $b$ together in a single factor $q(\beta, b)$, and makes use of inherent block-diagonal matrix structures to derive efficient updates.

We consider VB inference for hierarchical models where variables are classified as {\em global} (shared across observations) or {\em local} (observation specific). Our goal is to obtain a low-dimensional posterior approximation scalable to large data, by reparametrizing the local variables so that posterior dependence between local and global variables is minimized. We apply an invertible affine transformation that is a function of the global variables, and the functional form is deduced by considering a second order Taylor approximation to the posterior distribution of the local variables conditional on the global ones. This transformation is shown to be a generalization of the partially noncentered parametrization when location and scale parameters are unknown. We then consider independent Gaussian approximations to the posteriors of global and local variables. As closed form updates are not feasible, variational parameters are optimized using stochastic gradient ascent \citep{Titsias2014}. The assumption of posterior independence among transformed local and global variables allow our approach to be extended to large data easily using a divide and recombine strategy \citep{Broderick2013, Tran2016}. Application of this methodology is illustrated using GLMMs. However, our methods can be extended to a wider class of models including discrete choice models and spatial GLMMs. This approach of minimizing posterior dependence between global and local variables using model reparametrization prior to applying VB is referred to as {\em reparametrized variational Bayes} (RVB). 

Other approaches to relax the independence assumption in VB exist in the literature \citep[e.g.][]{Gershman2012, Salimans2013, Rezende2015}. \cite{Hoffman2015} develop structured stochastic variational inference for conditionally conjugate models, which allows local variables to depend explicitly on global variables in the variational posterior through some function $\gamma(\cdot)$ that is optimized numerically by maximizing a local lower bound. \cite{Titsias2017} propose model reparametrization for improving variational inference, and parameters of the affine transformation are optimized by minimizing the KL divergence between the variational approximation and a density obtained after a variable number of MCMC steps. Our approach is different as the functional form of our affine transformation is deduced from a Taylor approximation to the conditional posteriors of the local variables and remain fixed during optimization of the variational parameters. \cite{Kucukelbir2016} develop automatic differentiation variational inference in Stan, where the variational density can be a full Gaussian approximation. However, there may be difficulties in inferring such high-dimensional approximations. \cite{Tan2018} consider a Gaussian variational approximation (GVA) where posterior dependence is captured using sparse precision matrices.

Our article is organized as follows. Section \ref{sec_notation} introduces the notation, and Section \ref{sec_model} specifies the model and defines the affine transformation. Application of the affine transformation to GLMMs is illustrated in Section \ref{sec: app_GLMMs}. The stochastic variational algorithm is described in Section \ref{sec_variational} and Section \ref{sec_gradient} explains how stochastic gradients are computed. Extension of the variational algorithm to large datasets is discussed in Section \ref{sec:ext} and results for GLMMs are presented in Section \ref{sec_results}. Section \ref{sec_conclusion} concludes.

\section{Notation} \label{sec_notation}
For any $r \times r$ matrix $A$, let $\diag(A)$ denote the diagonal of $A$, $\dg(A)$ be the diagonal matrix derived from $A$ by setting non-diagonal elements to zero and $\bar{A}$ be the lower triangular matrix derived from $A$ by setting all superdiagonal elements to zero. Let $\text{vec}(A)$ denote the vector of length $r^2$ obtained by stacking the columns of $A$ under each other from left to right and $\vech(A)$ be the vector of length $r(r+1)/2$ obtained from $\vec(A)$ by eliminating all superdiagonal elements of $A$. Let $E_r$ denote the $r(r+1)/2 \times r^2$ elimination matrix such that $E_r \vec(A) = \vech(A)$ and $K_r$ be the $r^2 \times r^2$ commutation matrix such that $K_r \vec(A) = \vec(A^T)$. Let $N_r =  (K_r + I_{r^2})/2$. If $A$ is lower triangular, then $E_r^T \vech(A) = \vec(A)$. If $A$ is symmetric, then $D_r \vech(A) = \vec(A)$, where $D_r$ is the $r^2 \times r(r+1)/2$ duplication matrix. The Kronecker product between any two matrices is denoted by $\otimes$. Scalar functions applied to vector arguments are evaluated elementwise.

\section{Reparametrization using affine transformations} \label{sec_model}
 Suppose $y_i = [y_{i1},\dots,y_{in_i}]^T$ is the response vector and $b_i$ is the $r \times 1$ vector of latent variables for observation $i=1, \dots, n$. Let  $y=[y_1^T, \dots, y_n^T]^T$, $b=[b_1^T, \dots, b_n^T]^T$, $\theta_G$ be the $g \times 1$ vector of global variables, $\theta=[\theta_G^T, b^T]^T$ and $d= g + nr$ be the length of $\theta$. We focus on models whose joint density can be written as
\begin{equation} \label{mod}
p(y, \theta) = p(\theta_G) \prod_{i=1}^n p(b_i|\theta_G) p(y_i| b_i, \theta_G),
\end{equation}
where $p(\theta_G)$ is the prior of $\theta_G$, $p(b_i|\theta_G)$ is the density of $b_i$ conditional on $\theta_G$ and $p(y_i|b_i, \theta_G)$ is the likelihood of observing $y_i$ given $b_i$ and $\theta_G$. The posterior of $\theta$ has the structure $p(\theta|y) = p(\theta_G|y) \prod_{i=1}^n  p(b_i|\theta_G, y_i)$, where posteriors of the local variables are independent conditional on the global variables. Consider a variational approximation to $p(\theta|y)$ of the form
\begin{equation} \label{pdt1}
q(\theta) = q(\theta_G) \prod_{i=1}^n q(b_i).
\end{equation}
As $q(\theta)$ assumes posterior independence among $\{\theta_G, b_1, \dots, b_n\}$, this can be a poor choice if strong dependencies exist. To improve the variational approximation, we reparametrize the model by applying an invertible affine transformation to each local variable $b_i$, 
\begin{equation} \label{affine_transf}
\tilb_i = f(b_i|\theta_G) = L_i^{-1} (b_i - \lambda_i),
\end{equation}
where $\lambda_i$ (vector of length $r$) and $L_i$ ($r \times r$ lower triangular matrix with positive diagonal entries) are functions of $\theta_G$. The inverse transformation is $b_i =  f^{-1}(\tilb_i|\theta_G)  = L_i \tilb_i + \lambda_i$, and the functional forms of $\{\lambda_i, L_i\}$ will be deduced from $p(b_i|\theta_G, y_i)$. The motivation is as follows. Suppose $p(b_i|\theta_G, y_i)$ can be approximated by $N(\lambda_i, \Lambda_i) $, and $L_iL_i^T $ is the unique Cholesky decomposition of $\Lambda_i$. Then $\tilb_i|\theta_G, y \sim N(0, I_r)$ approximately, which implies that $\tilb_i$ is independent of $\theta_G$ in the posterior. Suppose we estimate the posterior of $\tiltheta = [\theta_G^T, \tilb^T]^T$ using
\begin{equation} \label{pdt2}
q(\tiltheta) = q(\theta_G) \prod_{i=1}^n q(\tilb_i),
\end{equation}
where $\tilb = [\tilb_1^T, \dots, \tilb_n^T]^T$. The product density assumption in \eqref{pdt2} is less restrictive than \eqref{pdt1} as the posterior dependence of each $\tilb_i$ on $\theta_G$ has been minimized. It reflects more accurately the dependency structure in $p(\tiltheta|y)$ and hence is expected to be more accurate.

\subsection{Generalization of the partially noncentered parametrization}
We use the LMM to illustrate how functional forms of $\{\lambda_i, L_i\}$ are deduced and show that the affine transformation is a generalization of the partially noncentered parametrization \citep{Tan2013} when variance components are unknown. For $i=1, \dots, n$, $j=1, \dots, n_i$, let $y_{ij} \sim N(\mu_{ij}, \sigma^2)$, where $\sigma^2$ is assumed known for simplicity,
\begin{equation*}
\mu_{ij} = X_{ij}^T \beta + Z_{ij}^T b_i, \quad\text{and} \quad b_i \sim N(0, \Omega^{-1}).
\end{equation*}
Here $\beta$ is a $p \times 1$ vector of fixed effects, $b_i$ is a $r\times 1$ vector of random effects and $X_{ij}$ and $Z_{ij}$ are covariates of length $p$ and $r$ respectively. Let $X_i = [X_{i1}, \dots, X_{in_i}]^T$ and $Z_i = [Z_{i1}, \dots, Z_{in_i}]^T$.  As $p(b_i|\beta, \Omega, y_i) \propto p(y_i|b_i, \beta, \Omega) p(b_i|\Omega)$, $b_i| \beta, \Omega, y_i \sim N(\lambda_i, \Lambda_i)$, where 
\begin{equation} \label{forms_LMM}
\Lambda_i^{-1} = \Omega + Z_i^T Z_i/\sigma^2, \quad
\lambda_i = \Lambda_i Z_i^T (y_i - X_i \beta)/\sigma^2.
\end{equation}
We seek a transformation $\tilb_i = f(b_i|\beta, \Omega)$ to minimize posterior dependence of $\tilb_i$ on $\{\beta, \Omega \}$. Suppose $\Omega$ is known and $X_i = Z_i$. Then $b_i$ depends on $\beta$ only through its mean and it suffices to consider 
\begin{equation} \label{transfmean}
\tilb_i = b_i + \Lambda_i X_i^T X_i \beta/\sigma^2, 
\end{equation}
where $L_i = I_r$ and $\lambda_i = - \Lambda_i X_i^T X_i \beta /\sigma^2$ in \eqref{affine_transf}. Then $\tilb_i$ is independent of $\beta$ a posteriori. Even if we assume $q(\tiltheta) = q(\beta) \prod_{i=1}^n q(\tilb_i)$, the optimal $q(\tiltheta)$ which minimizes $D_{\KL}(q(\tiltheta) || p(\tiltheta|y))$ is equal to $p(\tiltheta|y)$ as this product density structure is obeyed in $p(\tiltheta|y)$. 

\cite{Tan2013} introduce a partially noncentered parametrization,
\begin{equation*}
\tilb_i = b_i +(I_r - W_i) \beta,
\end{equation*}
where $W_i$ is a $r \times r$ tuning parameter. When $W_i = 0$, the parametrization is {\em centered} as $\tilb_i$ has a normal prior centered at $\beta$. When $W_i = I_r$,  $\tilb_i$ is independent of $\beta$ a priori and the parametrization is {\em noncentered}. The optimal value of $W_i$ is $\Lambda_i \Omega$, which leads to instant convergence and recovery of the true posterior in VB. This reparametrization is equivalent to \eqref{transfmean} as $I_r - W_i = \Lambda_i (\Lambda_i^{-1} - \Omega) = \Lambda_iX_i^T X_i/\sigma^2$. The centered and noncentered parametrizations correspond to $L_i=I_r$ and $\lambda_i = -\beta$ and $0$ respectively. Thus the optimal partially noncentered parametrization which minimizes $D_{\KL}(q(\tiltheta) || p(\tiltheta|y))$ actually transforms $b_i$ so that it is independent of $\beta$ in the posterior.

Extending this notion of minimizing posterior correlation to the case where $\Omega$ is unknown and $X_i \neq Z_i$, we can apply the transformation in \eqref{affine_transf} with $\lambda_i$ and $\Lambda_i$ given in \eqref{forms_LMM}. For LMMs, it is easy to identify $\lambda_i$ and $\Lambda_i$ as $p(b_i|\beta, \Omega, y_i)$ is Gaussian. More generally, we use Taylor expansions to obtain a Gaussian approximation to $p(b_i|\theta_G, y_i)$.

\section{Application to generalized linear mixed models} \label{sec: app_GLMMs}
For $i=1, \dots, n$, $j=1, \dots, n_i$, let $y_{ij}$ be generated from a distribution in the exponential family and $\mu_{ij} =  E(y_{ij})$ depend on the linear predictor, 
\begin{equation*}
\begin{aligned}
\eta_{ij} = X_{ij}^T \beta + Z_{ij}^T b_i,
\end{aligned}
\end{equation*}
through the link function $g(\cdot)$ such that $g(\mu_{ij}) = \eta_{ij}$. Here $\beta$, $b_i$, $X_{ij}$, $Z_{ij}$, $X_i$ and $Z_i$ are defined as for LMMs. We consider a diffuse prior $N(0, \sigma_\beta^2I_p)$ for $\beta$ where $\sigma_\beta^2$ is large, and the Wishart prior $W(\nu, S)$ for the precision matrix $\Omega$ where $\nu > r-1$ and $S$ is a $r \times r$ positive definite scale matrix. The hyperparameters $\nu$ and $S$ are determined based on the default conjugate prior proposed by \cite{Kass2006} and details are given in the supplementary material Section S1. Let $\eta_i = X_i \beta + Z_i b_i$ and $WW^T$ be the unique Cholesky decomposition of $\Omega$, where $W$ is lower triangular with positive diagonal entries. Define the $r \times r$ matrix $W^*$ such that $W_{ii}^* = \log(W_{ii})$, $W_{ij}^* = W_{ij}$ if $i \neq j$ and $\omega = \vech(W^*)$. The vector of global variables is thus $\theta_G = [\beta^T, \omega^T]^T$. 

We focus on GLMMs with canonical links, and the one-parameter exponential family whose natural parameter is $\eta_{ij} \in \mathbbm{R}$. Some examples are given in Table \ref{GLMM_defn}. 
\begin{table} 
\caption{\label{GLMM_defn} Poisson and binomial GLMMs with canonical links. $\text{Poisson}(\mu_{ij})$ denotes the Poisson distribution with mean $\mu_{ij}$ and $\text{Binomial}(m_{ij}, p_{ij})$ denotes the binomial distribution with $m_{ij}$ trials and success probability $p_{ij}$.} 
\centering
\fbox{
\begin{tabular}{lccc}
Model for $y_{ij}$ & $E(y_{ij})$ & Canonical link: $g(\mu_{ij})$ & $h_{ij}(\eta_{ij})$ \\
\hline
$\text{Poisson}(\mu_{ij})$ & $\mu_{ij}$ & $\log(\mu_{ij})$ & $\exp(\eta_{ij})$  \\
$\text{Binomial}(m_{ij}, p_{ij})$ & $m_{ij}p_{ij}$ & $\text{logit} (\mu_{ij}/m_{ij})$ & $m_{ij} \log\{ 1+ \exp(\eta_{ij}) \}$  
\end{tabular}}
\end{table}
The joint density is $p(y, \theta) = p(\beta) p(\omega) \prod\nolimits_{i=1}^n p(b_i|\omega) \prod_{j=1}^{n_i} p(y_{ij}|\eta_{ij})$
and 
\begin{equation*}
\begin{aligned}
\log p(y, \theta) &=\sum_{i=1}^n \bigg[ \sum_{j=1}^{n_i}  \{y_{ij} \eta_{ij} - h_{ij}(\eta_{ij})\} - \frac{1}{2}b_i^T \Omega b_i \bigg] + \frac{n}{2} \log|\Omega|  - \frac{\beta^T \beta}{2\sigma_\beta^2} + \log p(w) + C_0,
\end{aligned}
\end{equation*}
where $h_{ij}(\cdot)$ is the log partition function and $C_0$ is a constant independent of $\theta$. The induced prior $p(\omega) = p(\Omega) |\partial \vech(\Omega)/\partial \omega|$ and Section S2 of the supplementary material shows that
\begin{equation*}
\log p(\omega) = \log p(\Omega) + r \log 2 + \sum_{i=1}^r (r-i+2) \log(W_{ii}),
\end{equation*}
where $\log p(\Omega) = \tfrac{\nu-r-1}{2} \log|\Omega| - \tfrac{1}{2} \tr(S^{-1}\Omega) + C_1$ and $C_1$ is a constant independent of $\Omega$. Let $h_{ij}'(\cdot)$, $h_{ij}''(\cdot)$ and $h_{ij}'''(\cdot)$ denote the first, second and third derivatives of $h_{ij}(\cdot)$ respectively. From properties of the exponential family, $E(y_{ij}) = h_{ij}'(\eta_{ij})$ and $\var(y_{ij}) = h_{ij}''(\eta_{ij}) \geq 0$. Define 
\begin{equation*} 
g_i(\eta_i)= [h_{i1}'(\eta_{i1}), \dots, h_{in_i}'(\eta_{in_i})]^T, \quad 
H_i(\eta_i)= \diag ([h_{i1}''(\eta_{i1}), \dots, h_{in_i}''(\eta_{in_i})]^T).
\end{equation*}

\subsection{Gaussian approximation of $p(b_i|\theta_G, y_i)$: first approach}
In the first approach, we first find a Gaussian approximation to the likelihood $p(y_i|\eta_i)$ by considering a second-order Taylor expansion about some estimate $\hat{\eta}_i$. This is then combined with $p(b_i|\omega)$ to obtain a Gaussian approximation to $p(b_i|\theta_G, y_i)$. We have $\nabla_{\eta_i} \log p(y_i|\eta_i) = y_i - g_i(\eta_i)$ and $\nabla_{\eta_i}^2 \log p(y_i|\eta_i) = -H_i(\eta_i)$. Thus
\begin{equation*}
\log p(y_i|\eta_i) \approx \log p(y_i|\hat{\eta}_i)  + (\eta_i - \hat{\eta}_i)^T (y_i - g_i(\hat{\eta}_i)) - \tfrac{1}{2}(\eta_i - \hat{\eta}_i)^T H_i(\hat{\eta}_i) (\eta_i - \hat{\eta}_i).
\end{equation*}
As $p(b_i|\theta_G, y_i) \propto \exp\{ \log p(y_i|\eta_i) + \log p(b_i|\omega)  \}$
\begin{multline*}
\propto \exp\{ b_i^T Z_i^T (y_i - g_i(\hat{\eta}_i))  -  \tfrac{1}{2} (X_i \beta + Z_i b_i - \hat{\eta}_i)^T H_i(\hat{\eta}_i) (X_i \beta + Z_i b_i - \hat{\eta}_i) - \tfrac{1}{2}b_i^T \Omega b_i \},
\end{multline*}
$b_i|\theta_G, y_i \sim N(\lambda_i, \Lambda_i)$ approximately, where
\begin{equation} \label{lambdas1}
\Lambda_i = (\Omega + Z_i^T H_i(\hat{\eta}_i) Z_i)^{-1}, \quad 
\lambda_i = \Lambda_i  Z_i^T \{ y_i - g_i(\hat{\eta}_i) + H_i(\hat{\eta}_i)(\heta_i - X_i \beta) \}.
\end{equation}

\subsubsection{Estimate of natural parameter}
Next, we select $\heta_i$ at which the Taylor expansion is evaluated. One possibility is to let $\heta_i$ be the mode of $\log p(y_i|\eta_i)$ or the maximum likelihood estimate (MLE). Then $\nabla_{\eta_i} \log p(y_i|\eta_i)|_{\eta = \heta_i} = y_i - g_i(\heta_i) = 0$ so that the expected value of $y_i$ is equal to its observed value and $\lambda_i$ simplifies to $\Lambda_i Z_i^T H_i(\hat{\eta}_i) (\heta_i - X_i \beta)$. As the $n_i$ observations in $y_i$ are independent, we can evaluate independently each element,  $\heta_{ij}^\ML= \argmax_{\eta_{ij} \in \mathbbm{R}} \;p(y_{ij}|\eta_{ij})$. The estimate $\heta_{ij}^\ML$ is data-based (depends only on $y_{ij}$). However, $\heta_{ij}^\ML$ may not exist for boundary values of $y_{ij}$. For instance, $\heta_{ij}^\ML = \logit(y_{ij})$ for the Bernoulli model, which does not exist when $y_{ij} = 0$ or 1. In such cases, we may consider the ``MLE" at infinity. Lemma \ref{lemA} states the limits of some important quantities in $\lambda_i$ as $\eta_{ij} \rightarrow \pm \infty$, which leads to the definition of $c_i$ as the ``MLE" on the extended real line. From Theorem \ref{thmA}, the limit of $\lambda_i$ always exist as $\heta_i \rightarrow c_i$ and is zero under certain conditions. Proofs of Lemma \ref{lemA} and Theorem \ref{thmA} are given in the supplementary material.

\begin{lemma}\label{lemA}
For $i=1, \dots, n$, $j=1. \dots, n_i$, if $\heta_{ij}^\ML$ does not exist for an observation $y_{ij}$, then $\lim_{\eta_{ij} \rightarrow c} h_{ij}'(\eta_{ij}) = y_{ij}$, $\lim_{\eta_{ij} \rightarrow c} h_{ij}''(\eta_{ij}) = 0$ and $\lim_{\eta_{ij} \rightarrow c}  h_{ij}''(\eta_{ij}) \eta_{ij} = 0$ for either $c=-\infty$ or $c=\infty$.
\end{lemma}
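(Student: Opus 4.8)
The plan is to first characterize exactly when $\heta_{ij}^\ML$ fails to exist and then read off the three limits at the appropriate boundary $c\in\{-\infty,+\infty\}$. Since $\nabla_{\eta_{ij}}\log p(y_{ij}|\eta_{ij}) = y_{ij} - h_{ij}'(\eta_{ij})$, any finite maximizer must solve the score equation $h_{ij}'(\eta_{ij}) = y_{ij}$. Because $h_{ij}''(\eta_{ij}) = \var(y_{ij}) \ge 0$, the map $\eta_{ij}\mapsto h_{ij}'(\eta_{ij})$ is nondecreasing and, away from degeneracy, strictly increasing; hence it is a bijection from $\mathbbm{R}$ onto the open interval $(\mu_-,\mu_+)$, where $\mu_- = \lim_{\eta_{ij}\to-\infty} h_{ij}'(\eta_{ij})$ and $\mu_+ = \lim_{\eta_{ij}\to+\infty} h_{ij}'(\eta_{ij})$ are the endpoints of the interior of the convex hull of the support. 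The MLE therefore exists precisely when $y_{ij}\in(\mu_-,\mu_+)$, so non-existence forces $y_{ij}$ to equal one of the (attained) boundary values; I set $c=-\infty$ when $y_{ij}=\mu_-$ and $c=+\infty$ when $y_{ij}=\mu_+$.

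With $c$ fixed this way, the first limit $\lim_{\eta_{ij}\to c} h_{ij}'(\eta_{ij}) = y_{ij}$ is immediate from monotone convergence. For the second limit, the key point is that as $\eta_{ij}\to c$ the exponentially tilted family $p(y_{ij}|\eta_{ij})$ concentrates all of its mass at the boundary value $y_{ij}$, so its variance $h_{ij}''(\eta_{ij})$ collapses to zero. The third limit is more delicate: knowing only that $h_{ij}'$ converges gives $\int^{c} h_{ij}''(\eta)\,\d\eta < \infty$, and mere integrability of the nonnegative function $h_{ij}''$ does not by itself force $h_{ij}''(\eta_{ij})\,\eta_{ij}\to 0$ against the diverging factor $\eta_{ij}$. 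What is really needed is a decay rate for the variance near each boundary.

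I expect this rate estimate to be the main obstacle in full generality, and I would resolve it by direct computation for the canonical links in Table \ref{GLMM_defn}, which simultaneously delivers all three limits. For the Poisson model $h_{ij}'(\eta)=h_{ij}''(\eta)=\exp(\eta)$, non-existence occurs only at $y_{ij}=0$ with $c=-\infty$, and $\exp(\eta)\to 0$ together with $\eta\exp(\eta)\to 0$ as $\eta\to-\infty$. For the binomial model $h_{ij}''(\eta)=m_{ij}\exp(\eta)/\{1+\exp(\eta)\}^2$, non-existence occurs at $y_{ij}\in\{0,m_{ij}\}$ with $c=-\infty$ and $c=+\infty$ respectively, and in both cases the variance decays like $\exp(-|\eta|)$, which dominates the linear factor so that $h_{ij}''(\eta)\,\eta\to 0$. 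Since the exponential decay of $h_{ij}''$ at the relevant boundary is precisely what makes the third limit vanish, verifying this decay link by link closes the argument.
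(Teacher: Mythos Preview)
Your setup for identifying $c$ and the first limit is essentially the same as the paper's, and your concentration argument for the second limit is a reasonable variant. The real divergence is in the third limit: you correctly diagnose that integrability of $h_{ij}''$ alone does not force $h_{ij}''(\eta_{ij})\,\eta_{ij}\to 0$, but your resolution is to verify it model by model for the two families in Table~\ref{GLMM_defn}. That is adequate for the paper's applications, but it does not prove the lemma as stated, which is for the general one-parameter exponential family with canonical link.

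The paper closes this gap in one line with L'H\^opital's rule, which you did not exploit. First, since $h_{ij}'(\eta_{ij})\to L$ (finite) while $|\eta_{ij}|\to\infty$, applying L'H\^opital to $h_{ij}'(\eta_{ij})/\eta_{ij}$ gives $\lim_{\eta_{ij}\to c} h_{ij}''(\eta_{ij})=0$, recovering the second limit without any concentration argument. Then the same trick applied to $h_{ij}'(\eta_{ij})\,\eta_{ij}/\eta_{ij}$ yields
\[
L=\lim_{\eta_{ij}\to c}\frac{h_{ij}'(\eta_{ij})\,\eta_{ij}}{\eta_{ij}}=\lim_{\eta_{ij}\to c}\bigl\{h_{ij}''(\eta_{ij})\,\eta_{ij}+h_{ij}'(\eta_{ij})\bigr\},
\]
so $\lim_{\eta_{ij}\to c} h_{ij}''(\eta_{ij})\,\eta_{ij}=L-L=0$. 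This bypasses the decay-rate estimate entirely and works uniformly across the family, whereas your case analysis buys concreteness but sacrifices the generality the lemma claims.
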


\begin{definition}
For $i=1, \dots, n$, $c_i = [c_{i1}, \dots, c_{in_i}]^T$, where $c_{ij} = \heta_{ij}^\ML$ if $\heta_{ij}^\ML$ exists and $c_{ij} = \lim_{h_{ij}'(\eta_{ij}) \rightarrow y_{ij}} \eta_{ij}$ if $\heta_{ij}^\ML$ does not exist, for $j=1, \dots, n_i$.
\end{definition}

\begin{theorem} \label{thmA}
For $i=1, \dots, n$, the limit of $\lambda_{i}$ exists as $\heta_i \rightarrow c_i$, and is zero if $\heta_{ij}^\ML$ does not exist for each $j=1, \dots, n_i$.
\end{theorem}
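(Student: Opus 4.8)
The plan is to analyze the limit of
$\lambda_i = \Lambda_i Z_i^T \{ y_i - g_i(\heta_i) + H_i(\heta_i)(\heta_i - X_i \beta)\}$
componentwise in $j$, separating the coordinates for which $\heta_{ij}^\ML$ exists (where $c_{ij}$ is finite) from those for which it does not (where $c_{ij} = \pm\infty$). Writing the bracketed vector as $y_i - g_i(\heta_i) + H_i(\heta_i)\heta_i - H_i(\heta_i)X_i\beta$, I would track the limit of each of these four terms together with that of the prefactor $\Lambda_i = (\Omega + Z_i^T H_i(\heta_i) Z_i)^{-1}$ as $\heta_i \to c_i$.

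First I would show that each diagonal entry of $H_i(\heta_i)$ converges: where $\heta_{ij}^\ML$ exists, continuity gives $h_{ij}''(\heta_{ij}) \to h_{ij}''(c_{ij})$, and where it does not, Lemma \ref{lemA} gives $h_{ij}''(\heta_{ij}) \to 0$. In either case the limit is nonnegative, since $h_{ij}'' = \var(y_{ij}) \ge 0$, so $H_i(\heta_i)$ tends to a positive semidefinite diagonal matrix $H_i(c_i)$. Because $\Omega$ is positive definite, the limiting matrix $\Omega + Z_i^T H_i(c_i) Z_i$ is positive definite, hence invertible, and continuity of matrix inversion yields a finite limit for $\Lambda_i$. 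Next I would note that $y_{ij} - h_{ij}'(\heta_{ij}) \to 0$ in both cases: at a finite MLE the stationarity condition $y_{ij} - h_{ij}'(\heta_{ij}^\ML) = 0$ gives $h_{ij}'(c_{ij}) = y_{ij}$, while Lemma \ref{lemA} gives $h_{ij}'(\heta_{ij}) \to y_{ij}$ otherwise; thus $y_i - g_i(\heta_i) \to 0$. Likewise $H_i(\heta_i)X_i\beta \to H_i(c_i)X_i\beta$ is finite. The remaining term $H_i(\heta_i)\heta_i$ has $j$th entry $h_{ij}''(\heta_{ij})\heta_{ij}$, which tends to $h_{ij}''(c_{ij})c_{ij}$ at a finite $c_{ij}$ and, by the third part of Lemma \ref{lemA}, to $0$ at an infinite one. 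Every factor therefore converges to a finite limit, establishing existence of $\lim \lambda_i$.

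The main obstacle, and precisely the reason Lemma \ref{lemA} is invoked, is this last product $H_i(\heta_i)\heta_i$: when $\heta_{ij} \to \pm\infty$ the factor $\heta_{ij}$ diverges, and only the guaranteed vanishing rate of the variance $h_{ij}''(\heta_{ij})$ keeps the product bounded. For the second assertion I would specialize to the case where $\heta_{ij}^\ML$ fails to exist for every $j$, so that Lemma \ref{lemA} applies in all coordinates and gives $y_i - g_i(\heta_i) \to 0$, $H_i(\heta_i) \to 0$, and $H_i(\heta_i)\heta_i \to 0$. Consequently the entire bracketed vector tends to $0$ while $\Lambda_i \to \Omega^{-1}$, so $\lambda_i \to \Omega^{-1} Z_i^T \cdot 0 = 0$, as claimed.
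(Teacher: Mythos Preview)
Your proposal is correct and follows essentially the same route as the paper's own proof: both decompose the $j$th entry of the bracketed vector as $y_{ij} - h_{ij}'(\heta_{ij}) + h_{ij}''(\heta_{ij})\heta_{ij} - h_{ij}''(\heta_{ij})X_{ij}^T\beta$, invoke continuity at finite $c_{ij}$ and Lemma~\ref{lemA} at infinite $c_{ij}$, and then multiply by a finite limiting $\Lambda_i$. Your treatment is slightly more explicit than the paper's in two respects---you spell out why $\Lambda_i$ itself converges (via positive definiteness of $\Omega$ and continuity of matrix inversion) and you note that $y_{ij}-h_{ij}'(c_{ij})=0$ even at a finite MLE by the first-order condition---but these are elaborations of the same argument rather than a different approach.
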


\begin{corollary} \label{corA}
For $i=1, \dots, n$. if $\heta_i \rightarrow c_i$, then $\lambda_i \rightarrow 0$
\begin{enumerate}[(i)]
\item for the Poisson GLMM with log link if $y_i = 0$.
\item for the binomial GLMM with logit link if the elements in $y_i$ are either $0$ or $m_{ij}$.
\end{enumerate}
\end{corollary}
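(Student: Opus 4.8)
The plan is to recognize that Corollary \ref{corA} is a direct specialization of Theorem \ref{thmA}: the theorem already guarantees $\lambda_i \to 0$ as $\heta_i \to c_i$ whenever $\heta_{ij}^\ML$ fails to exist for every $j = 1, \dots, n_i$. Thus the entire task reduces to verifying, for each of the two models, that the stated condition on $y_i$ is precisely the condition under which the componentwise MLE $\heta_{ij}^\ML$ fails to exist for all $j$.

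First I would recall the score equation defining the MLE. Since the $n_i$ observations are independent, $\heta_{ij}^\ML$ solves $\nabla_{\eta_{ij}} \log p(y_{ij}|\eta_{ij}) = y_{ij} - h_{ij}'(\eta_{ij}) = 0$, i.e. $h_{ij}'(\heta_{ij}^\ML) = y_{ij}$. Reading off $h_{ij}'$ from the expressions in Table \ref{GLMM_defn}, for the Poisson model $h_{ij}'(\eta_{ij}) = \exp(\eta_{ij})$, so the score equation gives $\heta_{ij}^\ML = \log(y_{ij})$; for the binomial model $h_{ij}'(\eta_{ij}) = m_{ij}\exp(\eta_{ij})/\{1 + \exp(\eta_{ij})\}$, so $\heta_{ij}^\ML = \logit(y_{ij}/m_{ij})$.

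Next I would determine when these expressions fail to define a finite value. In the Poisson case $\log(y_{ij})$ is finite for every $y_{ij} > 0$ and fails to exist (diverging to $-\infty$) exactly when $y_{ij} = 0$; hence the hypothesis $y_i = 0$ is equivalent to $\heta_{ij}^\ML$ not existing for each $j$. In the binomial case $\logit(y_{ij}/m_{ij})$ is finite whenever $0 < y_{ij} < m_{ij}$ and fails to exist (diverging to $-\infty$ or $+\infty$) exactly when $y_{ij} = 0$ or $y_{ij} = m_{ij}$; hence the hypothesis that each element of $y_i$ equals $0$ or $m_{ij}$ is equivalent to $\heta_{ij}^\ML$ not existing for each $j$.

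Finally, in both cases the condition of Theorem \ref{thmA} is met, and applying that theorem yields $\lambda_i \to 0$ as $\heta_i \to c_i$. I do not anticipate any real obstacle here, since the argument is a bookkeeping verification built directly on Theorem \ref{thmA}; the only point requiring a little care is that the theorem needs the MLE to be nonexistent for \emph{every} coordinate $j$ simultaneously, which is exactly why the hypotheses are phrased as conditions on all components of $y_i$ rather than on a single observation.
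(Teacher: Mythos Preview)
Your proposal is correct and follows essentially the same approach as the paper: identify $\heta_{ij}^\ML = \log(y_{ij})$ for the Poisson model and $\heta_{ij}^\ML = \logit(y_{ij}/m_{ij})$ for the binomial model, observe that these are undefined exactly at the stated boundary values of $y_{ij}$, and then invoke Theorem~\ref{thmA}. Your version is slightly more explicit in deriving the MLE from the score equation and in stressing that nonexistence must hold for every coordinate $j$, but the argument is the same.
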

\begin{proof}
For the Poisson GLMM with log link, $\heta_{ij}^\ML = \log(y_{ij})$ which is undefined if $y_{ij} = 0$. For the binomial GLMM with logit link, $\heta_{ij}^\ML = \logit(y_{ij}/m_{ij})$ which is undefined if $y_{ij}$ is 0 or $m_{ij}$. From Theorem \ref{thmA}, $\lambda_i \rightarrow 0$ if $\heta_{ij}^\ML$ does not exist for each $j=1, \dots, n_i$.
\end{proof}

From Corollary \ref{corA}, the (location) noncentered parametrization is preferred for Poisson and binomial models when $y_{ij}$ lies on the support boundary, and is always preferred for Bernoulli models. If we set $\heta_i \rightarrow c_i$, then both $\Lambda_i$ and the limit of $\lambda_i$ can be evaluated in principle from Theorem \ref{thmA}. We have tried this approach but found that it results in unstable stochastic variational algorithms if parameter initialization is far from convergence. For instance, if $y_i=0$ for the Poisson GLMM, then $H_i(\heta_i) = \diag(y_i) =0$, which implies that $\lambda_i \rightarrow 0$, $\Lambda_i = \Omega^{-1}$ and $b_i = L_i \tilb_i$. If the estimate of $\omega$ is too small, then estimates for $\Lambda_i = \Omega^{-1} = (WW^T)^{-1}$ and $b_i$ may experience overflow. This issue is alleviated if $H_i(\heta_i)$ is positive definite. While this problem can likely be resolved with better parameter initialization, we prefer non-problem specific initialization. 

To avoid undefined estimates of $\eta_{ij}$ at the support boundary of $y_{ij}$, we ``regularize" the MLE using a Bayesian approach. Adopting a non-informative Jeffreys prior for $\mu_{ij}$, 
\begin{equation*}
p(\mu_{ij}|y_{ij})  \propto p(y_{ij}|\mu_{ij}) p(\mu_{ij})
\propto p(y_{ij}|\mu_{ij}) \sqrt{|\mathcal{I}(\mu_{ij})|},
\end{equation*}
and we use the posterior mean $E(\eta_{ij}|y_{ij}) = E(g(\mu_{ij})|y_{ij})$ to estimate $\eta_{ij}$. \cite{Firth1993} showed that for an exponential family model, the posterior mode corresponding to Jeffreys prior is equivalent to a penalized MLE, where the penalty of $\frac{1}{2} \log |\mathcal{I}(\mu_{ij})|$ helps to remove the asymptotic bias in the MLE. Consider $y_{ij} \sim \text{Poisson}(\mu_{ij})$ with $\log \mu_{ij} = \eta_{ij}$. The Jeffreys prior for $\mu_{ij}$ is $p(\mu_{ij}) \propto \mu_{ij}^{-1/2}$ and $p(\mu_{ij}|y_{ij})$ is Gamma$(y_{ij} +0.5,1)$. Hence 
\begin{equation*}
\heta_{ij}  = E(\eta_{ij}|y_{ij}) = E(\log(\mu_{ij})|y_{ij}) = \psi(y_{ij} +0.5),
\end{equation*}
where $\psi(\cdot)$ is the digamma function. If $y_{ij} \sim \text{binomial}(m_{ij}, p_{ij})$, then the Jeffreys prior for $p_{ij}$ is Beta(0.5, 0.5) and $p(p_{ij}|y_{ij})$ is Beta$(y_{ij} +0.5, m_{ij} - y_{ij} + 0.5)$. Hence
\begin{multline*}
\heta_{ij} = E(\eta_{ij}|y_{ij}) = E(\text{logit}(p_{ij}) |y_{ij}) = E(\log(p_{ij})|y_{ij}) - E(\log(1-p_{ij})|y_{ij}) \\
= \psi(y_{ij} + 0.5) - \psi(m_{ij} - y_{ij} + 0.5).
\end{multline*}
If $y_{ij} \sim \text{Bernoulli}(p_{ij})$, then $m_{ij}=1$ and $\heta_{ij} = \psi(y_{ij} + 0.5) - \psi(1.5 - y_{ij})$. Figure \ref{fig_reg_est} shows that the regularized estimates of $\heta_{ij}$ are indistinguishable from the MLEs for values of $y_{ij}$ that do not lie on the support boundary. However, by using these regularized estimates, $\lambda_i$ will be close to but is not zero even if $y_i$ lies on the support boundary for the Poisson and binomial GLMMs (see Table \ref{tab_reg_est}). 
\begin{figure}
\centering
\includegraphics[height=220pt, angle=270]{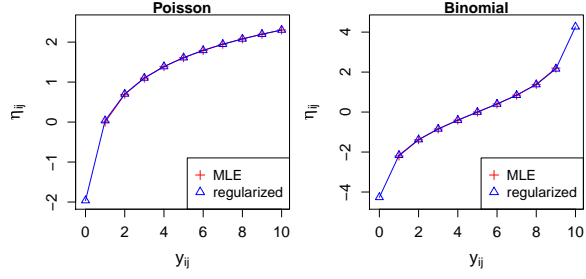}
\caption{Plot of maximum likelihood and regularized estimates of $\heta_{ij}$ against $y_{ij}$ for Poisson and binomial ($m_{ij} =10$) models.} \label{fig_reg_est}
\end{figure}
\begin{table}
\caption{\label{tab_reg_est} Values of some quantities in Lemma \ref{lemA} evaluated using regularized estimates of $\heta_{ij}$ compared to the ``MLEs" at infinity (in brackets).} 
\centering
\fbox{
\begin{tabular}{lccccc}
& Poisson & \multicolumn{2}{c}{Binomial(10, $p_{ij}$)}  & \multicolumn{2}{c}{Bernoulli($p_{ij}$)}\\
& $y_{ij}=0$ & $y_{ij}=0$ & $y_{ij}=m_{ij}$ & $y_{ij}=0$ & $y_{ij}=1$ \\
\hline
$\heta_{ij}$ & -1.96 ($-\infty$) & -4.27 ($-\infty$) & 4.27 ($\infty$) & -2  ($-\infty$) & 2  ($\infty$)\\
$h_{ij}'(\heta_{ij})$ & 0.14 (0) & 0.14 (0) & 9.86 (10) & 0.12 (0) & 0.88 (1) \\
$h_{ij}''(\heta_{ij})$ & 0.14 (0) & 0.14 (0) & 0.14 (0) & 0.10 (0) & 0.10 (0) \\
$h_{ij}''(\heta_{ij}) \heta_{ij}$ & -0.28 (0)  & -0.58 (0) & 0.58 (0) & -0.21 (0) & 0.21 (0)\\
\end{tabular}}
\end{table}

\subsubsection{Difficulties with the first approach} \label{Sec_diff}
From our experiments, a Gaussian approximation of $p(b_i|y_i, \theta_G)$ based on Taylor expansion of $p(y_i|\eta_i)$ about regularized estimate $\heta_i$ generally works better for Poisson and binomial models than Bernoulli models. To understand this phenomenon, consider the Fisher information $\mathcal{I}(\eta_{ij})$, which measures the amount of information $y_{ij}$ carries about $\eta_{ij}$. For the one-parameter exponential family $p(y_{ij}|\eta_{ij})$, $\mathcal{I}(\eta_{ij}) = h''(\eta_{ij}) = \var(y_{ij})$. We have $\mathcal{I}(\eta_{ij}) = \mu_{ij}$ for Poisson($\mu_{ij}$) and $\mathcal{I}(\eta_{ij}) = m_{ij} p_{ij} (1-p_{ij})$ for Binomial($m_{ij}, p_{ij}$). Therefore, $y_{ij}$ is more informative about $\eta_{ij}$ when $\mu_{ij}$ is large for Poisson models, and when $m_{ij}$ is large and $p_{ij}$ is far from 0 and 1 for binomial models. Hence, we expect the first approach to perform well when $y_{ij}$ is large for Poisson models, and when $m_{ij}$ is large and $y_{ij}/m_{ij}$ is far from 0 and 1 for binomial models. For the Bernoulli model, $m_{ij}$ is at its lowest value of 1. Hence $y_{ij}$ carries less information about $\eta_{ij}$ for Bernoulli models and it is difficult to obtain an accurate approximation of $p(b_i|y_i, \theta_G)$ using the first approach. The second approach below is computationally more intensive but it works better for Bernoulli models, binomial models with small $m_{ij}$ or $y_{ij}/m_{ij}$ close to 0 or 1, and Poisson models with many zero counts.

\subsection{Gaussian approximation of $p(b_i|\theta_G, y_i)$: second approach}
Consider a Taylor expansion of $\log p(b_i|\theta_G, y_i)$  about some estimate $\hat{b}_i$ of $b_i$. We have
\begin{equation*}
\begin{gathered}
\nabla_{b_i} \log p(b_i|\theta_G, y_i) = Z_i^T (y_i - g_i(\eta_i))-  \Omega b_i,  \quad 
\nabla_{b_i}^2 \log p(b_i|\theta_G, y_i) = - Z_i^T H_i(\eta_i) Z_i -  \Omega,
\end{gathered}
\end{equation*}
and
\begin{equation*}
\begin{aligned}
p(b_i|\theta_G, y_i) 
&\propto \exp\big[   b_i^T \{Z_i^T (y_i -  g_i(X_i \beta + Z_i \hat{b}_i)) - \Omega \hat{b}_i\} \\
& \quad - \tfrac{1}{2}(b_i - \hat{b}_i) ^T \{Z_i^T H_i (X_i \beta + Z_i \hat{b}_i) Z_i +  \Omega\} (b_i - \hat{b}_i)\big].
\end{aligned}
\end{equation*}
This implies a Gaussian approximation with $\Lambda_i^{-1} = Z_i^T H_i(X_i \beta+ Z_i \hat{b}_i) Z_i +  \Omega$ and
$\lambda_i = \hat{b}_i + \Lambda_i [Z_i^T \{y_i -  g_i(X_i \beta+ Z_i \hat{b}_i)\} - \Omega \hat{b}_i]$. If we choose $\hat{b}_i$ to be the mode of $p(b_i|\theta_G, y_i)$, then $\hat{b}_i$ satisfies $ Z_i^T (y_i - g_i(X_i \beta + Z_i \hat{b}_i)) = \Omega \hat{b}_i$ and
\begin{equation} \label{lambdas2}
\begin{gathered}
\lambda_i = \hat{b}_i, \quad \Lambda_i = \{Z_i^T H_i(X_i \beta+ Z_i \hat{b}_i) Z_i +  \Omega\}^{-1}.
\end{gathered}
\end{equation}
This approach yields more accurate estimates of $\lambda_i$ and $\Lambda_i$, which translates into better variational approximations, but additional computation is required to find the mode $\hat{b}_i$. Using the Newton-Raphson method, we start with initial estimate $b_i^{(0)}$ and iterate
\begin{equation*}
\begin{aligned}
b_i^{(t+1)} = b_i^{(t)} +  (Z_i^T H_i(X_i \beta + Z_i b_i^{(t)}) Z_i +  \Omega)^{-1} \{ Z_i^T (y_i - g_i(X_i \beta + Z_i b_i^{(t)}))-  \Omega b_i^{(t)} \}
\end{aligned}
\end{equation*}
until convergence. If $n_i < r$, we set $b_i^{(0)}= 0$. Otherwise, we obtain $b_i^{(0)}$ by using the regularized estimate $\hat{\eta}_i$ described in the first approach. As $\hat{\eta}_i \approx X_i \beta + Z_i b_i$, an estimate of $b_i$ is $b_i^{(0)} = (Z_i^TZ_i)^{-1}Z_i^T (\hat{\eta}_i - X_i \beta)$. We find this initialization to be highly useful in avoiding convergence issues. Monitoring $\log p(b_i|\theta_G, y_i)$ at each iteration, updates are terminated if the rate of increase of $\log p(b_i|\theta_G, y_i)$ is smaller than $10^{-4}$.

\section{Reparametrized variational Bayes inference} \label{sec_variational}
We consider a variational approximation for the reparametrized model of the form in \eqref{pdt2}, where $q(\theta_G)$ and $q(\tilb_i)$ are Gaussian. Hence, $q(\tiltheta)$ is $N(\mu, \Sigma)$, where $\Sigma$ is a block diagonal matrix; the first $n$ blocks are of order $r$ each and the last block is of order $g$. Let $C C^T$ be the unique Cholesky factorization of $\Sigma$ where $C$ is lower triangular with positive diagonal elements. Then $q(\tiltheta) = q(\tiltheta | \mu, C)$ and $\{\mu, C \}$ are optimized so that $D_{\KL}(q||p) = \int q(\tiltheta | \mu, C) \log \{ q(\tiltheta | \mu, C) / p(\tiltheta|y) \} \d\tiltheta$ is minimized. Since $D_{\KL}(q||p) \geq 0$, 
\begin{equation} \label{lowerbound}
\begin{aligned}
\log p(y) \geq E_q \{\log p(y,\tiltheta) - \log q(\tiltheta | \mu, C) \} =  \L (\mu, C) = \L,
\end{aligned}
\end{equation}
where $E_q$ denotes expectation with respect to $q(\tiltheta | \mu, C)$. Minimizing the KL divergence is equivalent to maximizing a lower bound $\L$ on $\log p(y)$ with respect to $\{\mu, C\}$.

For non-conjugate models such as GLMMs, $E_q\{ \log p(y, \tiltheta) \}$ cannot be evaluated in closed form and $\L$ is intractable. To overcome this limitation, we maximize $\L$ with respect to $\{\mu, C \}$  using stochastic variational inference \citep{Titsias2014}. This approach is based on stochastic approximation \citep{Robbins1951}, where at each iteration $t$, the variational parameters are updated by taking a small step $\rho_t$ in the direction of the stochastic gradients,
\begin{equation} \label{updates}
\begin{aligned}
\mu^{(t)} = \mu^{(t-1)} + \rho_t \hnabla_{\mu} \L,  \quad 
\vech(C^{(t)}) = \vech(C^{(t-1)}) + \rho_t \hnabla_{\vech(C)} \L.
\end{aligned}
\end{equation}
The stochastic gradients $\hnabla_{\mu} \L$ and $\hnabla_{\vech(C)} \L$ are unbiased estimates of the true gradients $\nabla_{\mu} \L$ and $\nabla_{\vech(C)} \L$ respectively. Under mild regularity conditions, the stochastic approximation algorithm will converge to a local maximum if the stepsize $\{\rho_t\}$ satisfy $\sum_t \rho_t = \infty$, $\sum_t \rho_t^2 < \infty$ and the lower bound is concave \citep{Spall2003}. 

From \eqref{lowerbound}, unbiased estimates of the true gradients can be constructed by sampling $\tiltheta$ from $q(\tiltheta|\mu, C)$. However, this approach often results in estimators with high variance \citep{Paisley2012}. Hence we employ the ``reparametrization trick" \citep{Kingma2014, Rezende2014, Titsias2014}, which introduces an invertible transformation $s = C^{-1} (\tiltheta  - \mu)$. The density of $s$, denoted by $\phi(s)$, is $N(0, I_d)$. Let $\ell(\tiltheta) = \log p(y, \tiltheta)$. From \eqref{lowerbound}, 
\begin{equation} \label{lowerbound2}
\begin{aligned}
\L(\mu, C) & = E_{\phi} \{ \ell(\tiltheta) \} - E_{\phi} \{\log q(\tiltheta | \mu, C)\}.
\end{aligned}
\end{equation}
where $E_\phi$ denotes expectation with respect to $\phi(s)$ and $\tiltheta = C s + \mu$. Unbiased estimates of the true gradients can thus be constructed by sampling $s$ from $\phi(s)$ instead of $\tiltheta$ from $q(\tiltheta| \mu, C)$. The advantage of this reparametrization is that $\ell(\tiltheta)$ is now a function of $\{\mu, C\}$, and this enables gradient information from $\ell(\tiltheta)$ to be utilized effectively.

\subsection{Unbiased estimates of stochastic gradients}
We show that several unbiased estimators of $\nabla_{\mu} \L$ and $\nabla_{\vech(C)} \L$ can be constructed from $\eqref{lowerbound2}$ but some has nicer properties at convergence. The estimators below are based on a single sample $s$ generated from $\phi(s)$, as one sample often provides sufficient gradient information. It is straightforward to extend the estimators to a larger number of samples by averaging. Derivation details are given in the supplementary material.

If we evaluate the second term in \eqref{lowerbound2} analytically, then $\L = E_{\phi} \{\ell(\tiltheta)\} + \log|C| + c'$, where $c'$ is a constant that does not depend on $\{\mu, C\}$ and
\begin{equation*}
\nabla_{\mu} \L= E_{\phi} \{\nabla_{\tiltheta} \ell(\tiltheta)\}, \quad 
\nabla_{\vech(C)} \L = E_{\phi} [\vech\{ \nabla_{\tiltheta} \ell(\tiltheta) s^T + C^{-T} \}].
\end{equation*}
 This leads to unbiased estimators, 
\begin{equation} \label{est1}
\hnabla_{\mu} \L_1= \nabla_{\tiltheta} \ell(\tiltheta), \quad 
\hnabla_{\vech(C)} \L_1 = \vech\{ \nabla_{\tiltheta} \ell(\tiltheta) s^T + C^{-T} \}.
\end{equation}
Alternatively, we do not evaluate $E_{\phi} \{\log q(\tiltheta | \mu, C)\}$ analytically but approximate both terms in \eqref{lowerbound2} using the same samples. As $\log q(\tiltheta|\mu, C)$ depends on $\{ \mu, C\}$ directly as well as through $\tiltheta$, we apply chain rule to obtain
\begin{align}
\nabla_{\mu} \L &= E_{\phi} \{\nabla_{\tiltheta} \ell(\tiltheta) - \nabla_{\tiltheta} \log q(\tiltheta|\mu, C) - \nabla_{\mu} \log q(\tiltheta|\mu, C)\},  \label{g1}\\
\nabla_{\vech(C)} \L &= E_{\phi} [\vech \{ \nabla_{\tiltheta} \ell(\tiltheta) s^T - \nabla_{\tiltheta} \log q(\tiltheta|\mu, C) s^T\} - \nabla_{\vech(C)} \log q(\tiltheta|\mu, C) ], \label{g2}
\end{align}
where we have $\nabla_{\tiltheta} \log q(\tiltheta|\mu, C) = - \nabla_{\mu} \log q(\tiltheta|\mu, C)\} = - C^{-T} s$ and $\nabla_{\vech(C)} \log q(\tiltheta|\mu, C) = \vech\{C^{-T} (ss^T - I_d)\}$. If we use all the terms in \eqref{g1} and \eqref{g2}, the estimators in \eqref{est1} will be recovered after simplification. However, as $E_{\phi}(s) = 0$ and $E_{\phi} (ss^T) = I_d$, 
\begin{equation*}
\begin{gathered}
E_{\phi} \{ \nabla_{\tiltheta} \log q(\tiltheta|\mu, C) \} = 
E_{\phi} \{ \nabla_{\mu} \log q(\tiltheta|\mu, C) \} = 0, \quad E_{\phi} \{\nabla_{\vech(C)} \log q(\tiltheta|\mu, C)\} = 0.
\end{gathered}
\end{equation*}
Alternatively, note that $E_{\phi} \{ \nabla_{\mu} \log q(\tiltheta|\mu, C) \} = 0$ and $E_{\phi} \{\nabla_{\vech(C)} \log q(\tiltheta|\mu, C)\} = 0$ as they form the expectation of the score. Unbiased estimators can thus be constructed by omitting the last term in \eqref{g1} and \eqref{g2}. We thus obtain
\begin{equation} \label{est2}
\begin{aligned}
\hnabla_{\mu} \L_2 = \nabla_{\tiltheta} \ell(\tiltheta) + C^{-T} s, \quad
\hnabla_{\vech(C)} \L_2 = \vech \{\nabla_{\tiltheta} \ell(\tiltheta) s^T+ C^{-T}s s^T\}.
\end{aligned}
\end{equation}
A third unbiased estimator, $\hnabla_{\mu} \L_3 = \nabla_{\tiltheta} \ell(\tiltheta) - C^{-T} s$, can be obtained by omitting the second term in \eqref{g1}. Next, we show that $\hnabla_{\mu} \L_2$ and $\hnabla_{\vech(C)} \L_2 $ have smaller variance at convergence.

Consider a second-order Taylor approximation to $\ell(\tiltheta)$ at the posterior mode $\tiltheta^*$,
\begin{equation*}
\ell(\tiltheta) \approx \ell(\tiltheta^*) + (\tiltheta - \tiltheta^*)^T \nabla^2_{\tiltheta} \ell(\tiltheta^*) (\tiltheta - \tiltheta^*)/2.
\end{equation*}
This implies that $p(\tiltheta|y)$ can be approximated by $N(\tiltheta^*, -\{\nabla^2_{\tiltheta} \ell(\tiltheta^*)\}^{-1})$. Differentiating the Taylor approximation with respect to $\tiltheta$ yields
\begin{equation} \label{grad}
\nabla_{\tiltheta} \ell(\tiltheta) \approx \nabla^2_{\tiltheta} \ell(\tiltheta^*) (\tiltheta - \tiltheta^*).
\end{equation}
Since $q(\tiltheta|\mu, \Sigma)$ is a Gaussian approximation to $p(\tiltheta|y)$, $\mu \approx \tiltheta^*$ and $\Sigma \approx -\{\nabla^2_{\tiltheta} \ell(\tiltheta^*)\}^{-1}$ at convergence. Thus, $\nabla_{\tiltheta} \ell(\tiltheta) \approx  - \Sigma^{-1}(\tiltheta - \mu) = - C^{-T} s$ and
\begin{equation*}
\begin{aligned}
\hnabla_{\mu} \L_1 &\approx - C^{-T} s,\\
\hnabla_{\mu} \L_2 &\approx 0, 
\end{aligned} 
\qquad
\begin{aligned}
\hnabla_{\vech(C)} \L_1 & \approx \vech\{ -C^{-T} s s^T + C^{-T} \}, \\
\hnabla_{\vech(C)} \L_2 & \approx 0.
\end{aligned}
\end{equation*}
The estimators $\hnabla_{\mu} \L_2$ and $\hnabla_{\vech(C)} \L_2 $ are close to zero as the contributions from $\ell(\tiltheta)$ and $\log q(\tiltheta|\mu, C)$ cancel out each other when the stochastic approximation algorithm is close to convergence. However $\hnabla_{\mu} \L_1$ and $\hnabla_{\vech(C)} \L_1 $ still contain a certain amount of noise and $\hnabla_{\mu} \L_3 \approx - 2C^{-T} s$ is even noisier than $\hnabla_{\mu} \L_1$. From \eqref{grad}, 
\begin{equation*}
\begin{aligned}
\cov_{\phi}(\hnabla_{\mu} \L_1) & = \cov_q (\nabla_{\tiltheta} \ell(\tiltheta)) \approx \nabla^2_{\tiltheta} \ell(\tiltheta^*) \Sigma \{\nabla^2_{\tiltheta} \ell(\tiltheta^*)\}^T \approx \Sigma^{-1}, \\
\cov_{\phi}(\hnabla_{\mu} \L_2) & = \cov_q (\nabla_{\tiltheta} \ell(\tiltheta) + \Sigma^{-1} \tiltheta) \approx \{\nabla^2_{\tiltheta} \ell(\tiltheta^*) + \Sigma^{-1}\} \Sigma \{\nabla^2_{\tiltheta} \ell(\tiltheta^*) + \Sigma^{-1}\}^T \approx 0,
\end{aligned}
\end{equation*}
which supports the claim that $\hnabla_{\mu} \L_2$ is less noisy than $\hnabla_{\mu} \L_1$ at convergence. From our experience, the less noisy estimators contributes greatly to improved convergence of the stochastic variational algorithm \citep[see also][]{Roeder2017}.

\subsection{Algorithm implementation} \label{sec:Alg_imp}
As the update for $\vech(C)$ in \eqref{updates} does not ensure diagonal elements of $C$ remain positive, we introduce lower triangular matrix $C^*$ such that $C^*_{ii} = \log(C_{ii})$ and $ C^*_{ij} = C_{ij}$ if $i\neq j$, and apply stochastic gradient updates to $C^*$ instead. Let $D^C$ be a diagonal matrix of order $d(d+1)/2$ with diagonal given by $\vech(J^C)$ and $J^C$ be a $d \times d$ matrix with diagonal given by $\diag(C)$ and ones everywhere else. Then $\nabla_{\vech(C^*)} \L = D_C \nabla_{\vech(C)} \L$. The stochastic variational algorithm using $\hnabla_{\mu} \L_2$ and $\hnabla_{\vech(C)} \L_2$ is summarized in Algorithm 1.

\begin{Algorithm}
\centering
\parbox{0.9\textwidth}{
\hrule
Initialize $\mu^{(1)} = 0$ and $C^{(1)} = \text{blockdiag}(I_{nr}, 0.1I_g)$. For $t=1, \dots, T$,
\begin{enumerate}[1.]
\vspace{-2mm}
\itemsep 0em 
\item Generate $s \sim N(0, I_d)$.
\item Compute $\tiltheta^{(t)} = C^{(t)} s + \mu^{(t)}$ and $\mathcal{G}^{(t)} = \nabla_{\tiltheta} \ell(\tiltheta^{(t)}) + {C^{(t)}}^{-T} s$.
\item Update $\mu^{(t+1)}=\mu^{(t)} + \rho_t \mathcal{G}^{(t)}$ and $\vech(C'^{(t+1)}) = \vech(C'^{(t)}) + \rho_t D_C\vech  \{\mathcal{G}^{(t)} s^T\}.$
\item Obtain $C^{(t+1)}$ from ${C^*}^{(t+1)}$.
\end{enumerate}
\hrule}
\caption{RVB algorithm.}\label{Alg1}
\end{Algorithm} 

For computing the stepsize $\rho_t$, we use Adam \citep{Kingma2014}, which automatically adapts to different parameters. There are some parameters in Adam which can be used to fine-tune the stepsize but we find that the default values work satisfactorily. For diagnosing the convergence of Algorithm \ref{Alg1}, we compute an unbiased estimate of $\L$ at each iteration, $\hat{\L} = \ell(\tiltheta) - \log q(\tiltheta|\mu, C)$ where $\tiltheta$ is computed in step 2 of Algorithm \ref{Alg1}, and $\mu$ and $C$ are the current estimates. As these estimates are stochastic, we consider the lower bound averaged over every 1000 iterations and monitor the path of these means. At the beginning, the means tend to increase monotonically. However, when the algorithm is close to convergence, these means start bouncing around the true maximum. Hence we consider the gradient of a least square regression line fitted to the past $\tau$ means and Algorithm 1 is terminated once the gradient becomes negative. This process is illustrated in Figure \ref{gradlr}. For the experiments, we set $\tau=5$. When the number of means is less than $\tau$, we only fit the regression lines to existing means.
\begin{figure}
\centering
\makebox{\includegraphics[height=175pt, angle=270]{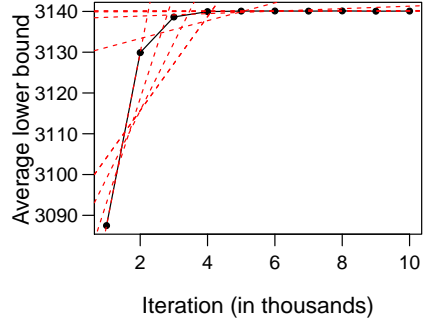}}
\caption{Plot of average lower bound against iteration for epilepsy data Model II. Dotted lines are the fitted least square regression lines, whose gradients decrease to zero.} \label{gradlr}
\end{figure}

\subsection{Marginal posterior distribution of the local variables} \label{sec:b_i}
The RVB algorithm does not return the posterior distributions of $\{b_i\}$ directly as it considers a Gaussian approximation for transformed local variables $\{\tilb_i\}$. Marginal posterior distributions of $\{b_i\}$ can be estimated using simulation: 
\begin{enumerate}
\itemsep 0em 
\item Generate $\theta_G$ from $q(\theta_G)$. 
\item For $i=1, \dots, n$,
\vspace{-2mm}
\begin{enumerate}
\item Generate $\tilb_i$ from $q(\tilb_i)$.
\item Compute $\lambda_i$ and $L_i$ from $\theta_G$ using \eqref{lambdas1} or \eqref{lambdas2}. 
\item Compute $b_i = L_i \tilb_i + \lambda_i$.
\end{enumerate}
\end{enumerate}
While this takes more work, a possible advantage is that the posterior distributions of $\{b_i\}$ are not constrained to be Gaussian and hence may be better able to accommodate any skewness present in the true marginal posterior $p(b_i|y)$.

\section{Gradient of the log joint density} \label{sec_gradient}
To implement Algorithm 1, we require $\nabla_{\tiltheta}\ell(\tiltheta) = [\nabla_{\tilb_1}\ell(\tiltheta), \dots, \nabla_{\tilb_n}\ell(\tiltheta), \nabla_{\theta_G}\ell(\tiltheta)]$. As $b_i = L_i \tilb_i + \lambda_i$, $p(\tilb_i|\theta_G) = p(b_i|\theta_G)|L_i|$. Hence $p(y, \tiltheta) = p(y, \theta) \prod_{i=1}^n |L_i|$ and the log joint density of the model in \eqref{mod} after reparametrization is
\begin{equation*}
\ell(\tiltheta) = \log p(\theta_G) + \sum_{i=1}^n \{\log p(y_i, b_i| \theta_G) + \log |L_i|\}. 
\end{equation*}
The gradients $\nabla_{\tilb_i}\ell(\tiltheta)$ for $i=1, \dots, n$, and $\nabla_{\theta_G}\ell(\tiltheta)$ are derived in Theorem \ref{thm2}. This result is applicable to any model of the form in \eqref{mod}. Proof of Theorem \ref{thm2} is given in the supplementary material.
\begin{theorem} \label{thm2}
For $i=1, \dots, n$, let $a_i = \nabla_{b_i} \log p(y_i, b_i|\theta_G)$, $B_i = L_i^T a_i \tilb_i^T $ and $\tilde{B}_i = \bar{B}_i + \bar{B}_i^T - \dg(B_i)$. Then
$\nabla_{\tilb_i}\ell(\tiltheta) = L_i^T a_i$ for $i=1, \dots, n$. Suppose $\theta_G$ is partitioned as $[\theta_{G_1}^T, \dots, \theta_{G_M}^T]^T$. For $m=1, \dots, M$,
\begin{equation*}
\begin{aligned}
\nabla_{\theta_{G_m}} \ell(\tiltheta) &= \sum_{i=1}^n (\nabla_{\theta_{G_m}} \lambda_i) a_i + \frac{1}{2}\sum_{i=1}^n \{\nabla_{\theta_{G_m}}\vec(\Lambda_i)\} \vec (\Lambda_i^{-1} + L_i^{-T} \tilde{B}_i L_i^{-1}) \\
&\quad + \sum_{i=1}^n \nabla_{\theta_{G_m}} \log p(y_i, b_i|\theta_G) + \nabla_{\theta_{G_m}} \log p(\theta_G).
\end{aligned}
\end{equation*}
\end{theorem}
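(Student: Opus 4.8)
The plan is to compute the two gradients separately, exploiting that only the $i$-th summand of $\ell(\tiltheta)$ depends on $\tilb_i$, whereas every summand depends on $\theta_{G_m}$ both explicitly and implicitly through $b_i = L_i\tilb_i + \lambda_i$ and through $\log|L_i|$. For $\nabla_{\tilb_i}\ell(\tiltheta)$ the argument is immediate: since $L_i$ and $\lambda_i$ are functions of $\theta_G$ only, the sole term in $\ell$ depending on $\tilb_i$ is $\log p(y_i, b_i|\theta_G)$, and it does so through $b_i$. The Jacobian is $\partial b_i/\partial\tilb_i = L_i$, so the chain rule gives $\nabla_{\tilb_i}\ell(\tiltheta) = L_i^T \nabla_{b_i}\log p(y_i, b_i|\theta_G) = L_i^T a_i$.

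For $\nabla_{\theta_{G_m}}\ell(\tiltheta)$ I would organize the dependence into four contributions summed over $i$. First, the prior contributes $\nabla_{\theta_{G_m}}\log p(\theta_G)$ directly. Second, holding $b_i$ fixed, the explicit dependence of $\log p(y_i, b_i|\theta_G)$ on $\theta_{G_m}$ gives $\nabla_{\theta_{G_m}}\log p(y_i, b_i|\theta_G)$. Third, the dependence through $b_i = L_i\tilb_i + \lambda_i$ splits by linearity into a $\lambda_i$ part and an $L_i$ part; the $\lambda_i$ part yields $(\nabla_{\theta_{G_m}}\lambda_i)a_i$ by the chain rule, since $a_i = \nabla_{b_i}\log p(y_i, b_i|\theta_G)$. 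Fourth, $\log|L_i|$ contributes directly: writing $\log|L_i| = \tfrac12\log|\Lambda_i|$ (as $\Lambda_i = L_iL_i^T$) and applying $\nabla_{\vec(\Lambda_i)}\log|\Lambda_i| = \vec(\Lambda_i^{-1})$ with the chain rule gives $\tfrac12\{\nabla_{\theta_{G_m}}\vec(\Lambda_i)\}\vec(\Lambda_i^{-1})$, supplying the $\Lambda_i^{-1}$ piece of the middle term.

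The crux is the remaining $L_i$ part: the dependence of $\log p(y_i, b_i|\theta_G)$ on $\theta_{G_m}$ through $L_i$ with $\tilb_i, \lambda_i$ held fixed. Using differentials this equals $a_i^T(dL_i)\tilb_i = \tr(\tilb_i a_i^T dL_i)$; substituting $dL_i = L_i\Phi_i$ with $\Phi_i = L_i^{-1}dL_i$ and using $\tilb_i a_i^T L_i = B_i^T$ gives $\tr(B_i^T\Phi_i)$. To re-express $\Phi_i$ via $d\Lambda_i$, I would differentiate $\Lambda_i = L_iL_i^T$ to obtain $A_i = L_i^{-1}(d\Lambda_i)L_i^{-T} = \Phi_i + \Phi_i^T$; because $\Lambda_i$ is symmetric $A_i$ is symmetric, and because $\Phi_i$ is lower triangular (a product of lower-triangular matrices) this forces $\Phi_i = \bar{A}_i - \tfrac12\dg(A_i)$. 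Substituting and using the bookkeeping identities $\tr(B_i^T\bar{A}_i) = \tr(\bar{B}_i^T A_i)$ and $\tr(B_i^T\dg(A_i)) = \tr(\dg(B_i)A_i)$, then symmetrizing $\bar{B}_i^T$ against the symmetric $A_i$, collapses the expression to $\tfrac12\tr(\tilde B_i A_i)$ with $\tilde B_i = \bar B_i + \bar B_i^T - \dg(B_i)$. Finally substituting $A_i = L_i^{-1}(d\Lambda_i)L_i^{-T}$ and cycling the trace gives $\tfrac12\tr(L_i^{-T}\tilde B_i L_i^{-1}d\Lambda_i)$, so the chain rule through $\vec(\Lambda_i)$ produces $\tfrac12\{\nabla_{\theta_{G_m}}\vec(\Lambda_i)\}\vec(L_i^{-T}\tilde B_i L_i^{-1})$. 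Adding this to the $\log|L_i|$ contribution yields the stated middle term, and collecting all four contributions over $i$ gives the result.

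The main obstacle is this third step: relating $dL_i$ to $d\Lambda_i$ through the Cholesky factorization and carefully tracking the lower-triangular and diagonal projections (encoded by $\bar{\cdot}$ and $\dg$) so that they assemble into the symmetric matrix $\tilde B_i$. The symmetry of $A_i$ is precisely what legitimizes the symmetrization, and keeping the factor $\tfrac12$ consistent between the formula for $\Phi_i$ and the final trace is the easiest place to make an error.
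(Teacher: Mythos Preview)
Your proposal is correct and follows essentially the same route as the paper: both hinge on the Cholesky differential identity $L_i^{-1}\,dL_i = \bar{A}_i - \tfrac{1}{2}\dg(A_i)$ with $A_i = L_i^{-1}(d\Lambda_i)L_i^{-T}$, and then push this through the term $a_i^T(dL_i)\tilb_i$ and the log-determinant. The only cosmetic difference is that the paper packages the subsequent algebra using $\vec$/Kronecker and elimination--duplication matrix identities (its Lemma~S1), whereas you work directly with traces and the projection identities $\tr(B_i^T\bar{A}_i)=\tr(\bar{B}_i^T A_i)$, $\tr(B_i^T\dg(A_i))=\tr(\dg(B_i)A_i)$; both produce $\tfrac{1}{2}\tr(\tilde{B}_i A_i)$ and hence the stated formula.
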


\subsection{Gradients of reparametrized generalized linear mixed models} 
For the GLMM in Section \ref{sec: app_GLMMs}, after deriving $\lambda_i$ and $\Lambda_i$ as in \eqref{lambdas1} or \eqref{lambdas2}, we reparametrize the model by transforming each $b_i$ so that $\tilb_i = L_i^{-1} (b_i - \lambda_i)$, where $L_iL_i^T = \Lambda_i$. From Theorem \ref{thm2}, to evaluate $\nabla_{\tilb_i} \ell(\tiltheta)$, we require $a_i = Z_i^T (y_i - g_i(\eta_i) )- \Omega b_i$ for $i=1, \dots, n$. 
As $\theta_G$ is partitioned into two components, $\beta$ and $\omega$, we use Theorem \ref{thm2} to find $\nabla_{\beta} \ell(\tiltheta)$ and $\nabla_{\omega} \ell(\tiltheta)$. We have $\nabla_{\beta} \log p(\theta_G) =  - \beta/ \sigma_\beta^2$, $\nabla_{\beta} \log p(y_i, b_i|\theta_G) = X_i^T (y_i - g_i(\eta_i) )$ and the results in Lemma \ref{lem2}.
\begin{lemma} \label{lem2}
Let $D^W$ be a diagonal matrix of order $r(r+1)/2$ where the diagonal is given by $\vech(J^W)$, and $J^W$ is an $r \times r$ matrix with $J^W_{ii} = W_{ii}$ and $J^W_{ij} = 1$ if $i \neq j$. In addition, let $u$ be a vector of length $r$ where the $i$th element is $r-i+2$. Then 
\begin{enumerate}
\itemsep 0em
\item $\nabla_{\omega} \log p(y_i, b_i|\theta_G) = D^W\vech(W^{-T} - b_ib_i^T W) $,
\item $\nabla_\omega  \log p(\theta_G) = D^W  \vech \{(\nu-r-1) W^{-T} - S^{-1} W\} + \vech(\diag(u))$.
\end{enumerate}
\end{lemma}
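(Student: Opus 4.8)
The plan is to observe that $\omega$ enters the relevant log densities only through the precision matrix $\Omega = WW^T$, to differentiate with respect to the free lower-triangular entries $\vech(W)$, and then to pass to $\omega$ by a single chain-rule step. First I would isolate the $\omega$-dependent terms. In $\log p(y_i, b_i|\theta_G)$ the likelihood contribution $\sum_j\{y_{ij}\eta_{ij} - h_{ij}(\eta_{ij})\}$ depends on $(\beta, b_i)$ through $\eta_i$ but not on $\omega$, so only $-\tfrac12 b_i^T\Omega b_i + \tfrac12\log|\Omega|$ survives differentiation. For the second identity, since the prior factorizes as $p(\theta_G) = p(\beta)p(\omega)$, we have $\nabla_\omega\log p(\theta_G) = \nabla_\omega\log p(\omega)$, and I would work from the displayed expression for $\log p(\omega)$, whose $\omega$-dependent part is $\tfrac{\nu-r-1}{2}\log|\Omega| - \tfrac12\tr(S^{-1}\Omega) + \sum_{i=1}^r(r-i+2)\log W_{ii}$.

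The key reduction is the chain rule through the log-reparametrization of the diagonal. Because $W^*_{ii} = \log W_{ii}$ while $W^*_{ij} = W_{ij}$ off the diagonal, the map $\omega \mapsto \vech(W)$ acts elementwise, with $W_{ii} = \exp(W^*_{ii})$ and $W_{ij} = W^*_{ij}$. Hence the Jacobian $\partial\vech(W)/\partial\omega$ is exactly the diagonal matrix $D^W$ (diagonal entries $W_{ii}$ at diagonal positions, $1$ elsewhere), and since it is symmetric, $\nabla_\omega f = D^W\,\nabla_{\vech(W)}f$ for any smooth $f$. This reduces both parts to computing $\nabla_{\vech(W)}$ of the terms above.

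For these I would use three matrix-calculus computations, extracting in each case only the lower-triangular entries via $\vech$ (which is legitimate because we differentiate directly with respect to the free entries $W_{ij}$, $i\ge j$). A direct index computation gives $\partial(b_i^T WW^T b_i)/\partial W_{ij} = 2(b_ib_i^TW)_{ij}$, so $\nabla_{\vech(W)}(-\tfrac12 b_i^T\Omega b_i) = \vech(-b_ib_i^TW)$, and likewise $\nabla_{\vech(W)}(-\tfrac12\tr(S^{-1}\Omega)) = \vech(-S^{-1}W)$. Since $\det W = \prod_k W_{kk}$, we have $\log|\Omega| = 2\sum_k\log W_{kk}$; its gradient in the free entries is $2\,\vech(W^{-T})$, where the point to note is that $W^{-T}$ is upper triangular, so $\vech(W^{-T})$ retains only the diagonal $1/W_{kk}$, matching the fact that $\log|\det W|$ depends on no strictly-lower entry. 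Assembling the first identity gives $\nabla_{\vech(W)}\log p(y_i, b_i|\theta_G) = \vech(W^{-T} - b_ib_i^TW)$, and multiplying by $D^W$ yields the first claim.

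For the second claim the same pieces give $\nabla_{\vech(W)}\log p(\omega) = \vech\{(\nu-r-1)W^{-T} - S^{-1}W\} + g$, where $g$ is the gradient of $\sum_i(r-i+2)\log W_{ii}$, i.e. the vector with entries $(r-i+2)/W_{ii}$ at diagonal positions and $0$ elsewhere. Left-multiplying by $D^W$ sends the first term to $D^W\vech\{(\nu-r-1)W^{-T} - S^{-1}W\}$, while in $D^W g$ the diagonal Jacobian factor $W_{ii}$ cancels the $1/W_{ii}$, leaving $r-i+2 = u_i$ on the diagonal and $0$ elsewhere, i.e. $\vech(\diag(u))$. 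This is the subtle point and the only real obstacle: tracking why the last term emerges outside the $D^W$ as $\vech(\diag(u))$ rather than inside it — the cancellation of $1/W_{ii}$ against the $W_{ii}$ coming from $\exp(W^*_{ii})$ — and, relatedly, checking that $\vech$ of each unconstrained gradient correctly returns the constrained lower-triangular gradient. Combining the two displayed pieces gives the stated formula for $\nabla_\omega\log p(\theta_G)$.
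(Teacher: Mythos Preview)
Your proposal is correct. The route, however, differs from the paper's. You factor the chain rule as $\omega \to \vech(W) \to f$, computing $\nabla_{\vech(W)}$ of each scalar piece by direct index calculations and then left-multiplying by the diagonal Jacobian $D^W$. The paper instead works with matrix differentials and passes through $\vec(\Omega)$: it first establishes $\d\vec(\Omega) = 2N_r(W\otimes I)E_r^T D^W\,\d\omega$, then writes $\d\log p(y_i,b_i\mid\theta_G) = \tfrac12\vec(\Omega^{-1}-b_ib_i^T)^T\,\d\vec(\Omega)$ and simplifies $D^W E_r(W^T\otimes I)\vec(\Omega^{-1}-b_ib_i^T)$ to $D^W\vech(W^{-T}-b_ib_i^TW)$. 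For the extra term in part~(b) the paper simply notes $\log W_{ii}=W^*_{ii}$, so $\sum_i(r-i+2)\,\d W^*_{ii}$ yields $\vech(\diag(u))$ directly, without your $1/W_{ii}\cdot W_{ii}$ cancellation. Your approach is more elementary and avoids the elimination/commutation/$N_r$ machinery; the paper's vec/Kronecker formulation is heavier here but is the same apparatus reused verbatim in the proofs of Lemmas~3 and~4, where the intermediate differentials $\d\vec(\Omega)$ and $\d\vec(\Lambda_i)$ recur.
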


Next, we need to find $\nabla_\beta \vec(\Lambda_i)$, $\nabla_{\beta} \lambda_i$, $\nabla_\omega \vec(\Lambda_i)$ and $\nabla_\omega \lambda_i$ for approaches 1 and 2. These results are then substituted into Theorem \ref{thm2} to obtain the results in Lemma \ref{lem3} and \ref{lem4}. Proofs of Lemmas \ref{lem2},  \ref{lem3} and \ref{lem4} are given in the supplementary material. 

\begin{lemma} \label{lem3}
For approach 1, $\nabla_\beta \ell(\tiltheta) = \sum_{i=1}^n X_i^T (y_i - g_i(\eta_i) - H_i(\hat{\eta}_i) Z_i \Lambda_i a_i ) - \beta/ \sigma_\beta^2$ and
\begin{equation*}
\begin{aligned}
\nabla_\omega \ell(\tiltheta) &=  \vech(\diag(u)) + D^W  \vech \big[ (n + \nu-r-1) W^{-T} - S^{-1} W  \\
& \quad - \sum_{i=1}^n (b_ib_i^T + \Lambda_i a_i \lambda_i^T + \lambda_i a_i^T\Lambda_i  + \Lambda_i + L_i\tilde{B}_i L_i^T) W \big].
\end{aligned}
\end{equation*}
\end{lemma}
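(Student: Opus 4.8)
The plan is to substitute the approach-1 forms of $\lambda_i$ and $\Lambda_i$ in \eqref{lambdas1} into the general gradient of Theorem \ref{thm2} under the partition $\theta_G = [\beta^T, \omega^T]^T$, and then absorb the likelihood and prior contributions already tabulated in Lemma \ref{lem2}. The only new ingredients are $\nabla_\beta\lambda_i$, $\nabla_\omega\lambda_i$ and $\nabla_\omega\vec(\Lambda_i)$; everything else is assembly. The $\beta$-gradient is immediate: in approach 1 the precision $\Lambda_i^{-1} = \Omega + Z_i^T H_i(\heta_i)Z_i$ is free of $\beta$, so the middle (curvature) term of Theorem \ref{thm2} drops out, and $\lambda_i$ is affine in $\beta$ with $\nabla_\beta\lambda_i = -X_i^T H_i(\heta_i)Z_i\Lambda_i$. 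Hence $(\nabla_\beta\lambda_i)a_i = -X_i^T H_i(\heta_i)Z_i\Lambda_i a_i$, and adding the supplied terms $\nabla_\beta\log p(y_i,b_i|\theta_G) = X_i^T(y_i - g_i(\eta_i))$ and $\nabla_\beta\log p(\theta_G) = -\beta/\sigma_\beta^2$ and summing over $i$ yields the stated $\nabla_\beta\ell(\tiltheta)$.

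The $\omega$-gradient carries the real work. All $\omega$-dependence enters through $\Omega = WW^T$, with $\lambda_i = \Lambda_i k_i$ for $k_i = Z_i^T\{y_i - g_i(\heta_i) + H_i(\heta_i)(\heta_i - X_i\beta)\}$ constant in $\omega$. Rather than assemble the Jacobians, I would contract them immediately against $a_i$ and $\vec(R_i)$, where $R_i = \Lambda_i^{-1} + L_i^{-T}\tilde{B}_i L_i^{-1}$, using $(\nabla_\omega\lambda_i)a_i = \nabla_\omega(a_i^T\Lambda_i k_i)$ and $\{\nabla_\omega\vec(\Lambda_i)\}\vec(R_i) = \nabla_\omega\tr(\Lambda_i R_i)$ with $a_i$, $R_i$ held fixed. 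With $d\Lambda_i = -\Lambda_i\{(dW)W^T + W(dW)^T\}\Lambda_i$ and the identity $\tr(MW(dW)^T) = \tr((MW)^T dW)$, each scalar differential reduces to $-\tr(W^T(N + N^T)dW)$ for the relevant coefficient matrix $N$. Since for lower-triangular $W$ under the log-diagonal parametrization $\omega$ a differential $\tr(G^T dW)$ gives $D^W\vech(G)$ (the superdiagonal of $G$ being discarded by $\vech$), and since $\lambda_i a_i^T\Lambda_i + \Lambda_i a_i\lambda_i^T$ and $S_i = \Lambda_i R_i\Lambda_i$ are symmetric, this produces the two pieces $-D^W\vech\{(\Lambda_i a_i\lambda_i^T + \lambda_i a_i^T\Lambda_i)W\}$ and $-D^W\vech\{\Lambda_i R_i\Lambda_i W\}$.

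The closing step uses $L_iL_i^T = \Lambda_i$: from $\Lambda_i L_i^{-T} = L_i$ and $L_i^{-1}\Lambda_i = L_i^T$ the curvature piece telescopes as $\Lambda_i R_i\Lambda_i = \Lambda_i + \Lambda_i L_i^{-T}\tilde{B}_i L_i^{-1}\Lambda_i = \Lambda_i + L_i\tilde{B}_i L_i^T$. Adding the two $\omega$-contributions to $\sum_i\nabla_\omega\log p(y_i,b_i|\theta_G)$ and $\nabla_\omega\log p(\theta_G)$ from Lemma \ref{lem2} — which supply $nW^{-T} - \sum_i b_ib_i^TW$, $(\nu-r-1)W^{-T} - S^{-1}W$ and the lone $\vech(\diag(u))$ — and collecting everything under one $D^W\vech\{\cdot\}$ gives the claimed expression. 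I expect the $\omega$-bookkeeping to be the main obstacle: keeping the Cholesky/log-diagonal chain rule consistent (the $D^W$ factor and the lower-triangular restriction behind $\vech$), and checking the cancellation $\Lambda_i L_i^{-T}\tilde{B}_i L_i^{-1}\Lambda_i = L_i\tilde{B}_i L_i^T$. The $\beta$-gradient and the final tallying of the Lemma \ref{lem2} terms are routine by comparison.
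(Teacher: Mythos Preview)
Your proposal is correct and follows essentially the same route as the paper: both compute $\nabla_\beta\lambda_i$, $\nabla_\omega\lambda_i$, $\nabla_\omega\vec(\Lambda_i)$ for the approach-1 forms in \eqref{lambdas1} and substitute into Theorem \ref{thm2} together with Lemma \ref{lem2}. The only cosmetic difference is that the paper builds the full Jacobians via Kronecker/commutation-matrix identities and then contracts, whereas you contract first against $a_i$ and $\vec(R_i)$ using trace differentials; the resulting terms, including the simplification $\Lambda_i R_i\Lambda_i = \Lambda_i + L_i\tilde{B}_iL_i^T$, coincide.
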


\begin{lemma} \label{lem4}
For approach 2, let $\alpha_i = \tfrac{1}{2}h_i'''(X_i\beta + Z_i \hat{b}_i) \odot \diag\{ Z_i ( \Lambda_i + L_i \tilde{B}_i L_i^T ) Z_i^T \}$.
\begin{equation*}
\begin{aligned}
\nabla_\beta \ell(\tiltheta) &= \sum_{i=1}^n X_i^T \{y_i - g_i(\eta_i) - H_i(X_i \beta + Z_i \hat{b}_i) Z_i \Lambda_i (a_i - Z_i^T \alpha_i) - \alpha_i \} - \beta/ \sigma_\beta^2. \\
\nabla_\omega \ell(\tiltheta) &= \vech(\diag(u)) + D^W \vech[ (n + \nu-r-1) W^{-T} - S^{-1} W \\
&- \sum_{i=1}^n \{b_ib_i^T + \Lambda_i (a_i - Z_i^T \alpha_i ) \hat{b}_i^T + \hat{b}_i (a_i - Z_i^T \alpha_i )^T \Lambda_i + \Lambda_i + L_i \tilde{B}_i L_i^T \} W ].
\end{aligned}
\end{equation*}
\end{lemma}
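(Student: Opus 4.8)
The plan is to specialize the general gradient formula of Theorem \ref{thm2} to the partition $\theta_G = [\beta^T, \omega^T]^T$ and to supply the four model-specific derivatives $\nabla_\beta \lambda_i$, $\nabla_\beta \vec(\Lambda_i)$, $\nabla_\omega \lambda_i$ and $\nabla_\omega \vec(\Lambda_i)$ for approach 2. The remaining pieces are already available: $\nabla_\beta \log p(y_i, b_i|\theta_G) = X_i^T(y_i - g_i(\eta_i))$, $\nabla_\beta \log p(\theta_G) = -\beta/\sigma_\beta^2$, and the $\omega$-parts are given by Lemma \ref{lem2}. The essential difference from approach 1 is that $\lambda_i = \hat{b}_i$ is now defined implicitly through the stationarity equation $Z_i^T(y_i - g_i(X_i\beta + Z_i\hat{b}_i)) = \Omega \hat{b}_i$, and that $\Lambda_i^{-1} = Z_i^T H_i(X_i\beta + Z_i \hat{b}_i) Z_i + \Omega$ depends on $\theta_G$ both explicitly and through $\hat{b}_i$; this implicit dependence is what generates the third-derivative correction $\alpha_i$ that is absent from Lemma \ref{lem3}.

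First I would obtain $\nabla_\beta \hat{b}_i$ and $\partial\hat{b}_i/\partial\omega_k$ by implicit differentiation of the stationarity equation. Differentiating and solving yields $\nabla_\beta \hat{b}_i = -X_i^T H_i Z_i \Lambda_i$ and $\partial \hat{b}_i/\partial \omega_k = -\Lambda_i (\partial \Omega/\partial \omega_k) \hat{b}_i$, where $H_i = H_i(X_i\beta + Z_i\hat{b}_i)$; since $\lambda_i = \hat{b}_i$ these are exactly $\nabla_\beta \lambda_i$ and $\nabla_\omega \lambda_i$. Next I would differentiate $\Lambda_i^{-1}$ using $\d \Lambda_i = -\Lambda_i (\d \Lambda_i^{-1}) \Lambda_i$, noting that $\d \Lambda_i^{-1} = Z_i^T \diag(h_i''' \odot \d\eta_i^*) Z_i + \d\Omega$ with $\eta_i^* = X_i\beta + Z_i\hat{b}_i$. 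Contracting the $h'''$ part against $\vec(M_i)$, where $M_i = \Lambda_i^{-1} + L_i^{-T}\tilde{B}_i L_i^{-1}$ is symmetric, collapses via $\Lambda_i M_i \Lambda_i = \Lambda_i + L_i \tilde{B}_i L_i^T$ (a consequence of $L_i L_i^T = \Lambda_i$), and extracting the diagonal of $Z_i(\Lambda_i + L_i\tilde{B}_iL_i^T)Z_i^T$ produces precisely the vector $\alpha_i$.

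The assembly then proceeds term by term. For $\beta$, the contributions $-X_i^T H_i Z_i \Lambda_i a_i$ from $(\nabla_\beta\lambda_i)a_i$ and $+X_i^T H_i Z_i \Lambda_i Z_i^T\alpha_i$ from the $\Lambda_i$-term combine into $-X_i^T H_i Z_i \Lambda_i(a_i - Z_i^T\alpha_i)$, while the explicit $X_i$ in $\eta_i^*$ yields the separate $-X_i^T\alpha_i$; adding $\sum_i X_i^T(y_i - g_i(\eta_i))$ and $-\beta/\sigma_\beta^2$ gives the stated $\nabla_\beta\ell$. For $\omega$, I would convert each trace in $\d\Omega$ into a derivative in $W$ through $\Omega = WW^T$, using that $\tr(G\,\d\Omega)$ contributes $(G + G^T)W$, and then into $\omega$ via the matrix $D^W$ of Lemma \ref{lem2}. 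The $\d\hat{b}_i$ pieces arising from $(\nabla_\omega\lambda_i)a_i$ and from the $h'''$ part pair up as $-\{\Lambda_i(a_i - Z_i^T\alpha_i)\hat{b}_i^T + \hat{b}_i(a_i - Z_i^T\alpha_i)^T\Lambda_i\}W$, while the direct $\d\Omega$ in $\Lambda_i^{-1}$ yields $-(\Lambda_i + L_i\tilde{B}_iL_i^T)W$; here the $\tfrac12$ in Theorem \ref{thm2} is cancelled by the factor $2$ from the symmetrization $(G+G^T)$. Combining with the $-b_ib_i^T W$ and $(n+\nu-r-1)W^{-T} - S^{-1}W$ terms from Lemma \ref{lem2} reproduces the stated $\nabla_\omega\ell$.

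I expect the main obstacle to be the bookkeeping rather than any single hard step: one must track the implicit dependence of $\hat{b}_i$ on $\theta_G$, the explicit dependence of $H_i(\eta_i^*)$ on $\theta_G$, and the chain $\omega \to W \to \Omega$ simultaneously, while keeping the symmetrization factors consistent so that the $\tfrac12$ of Theorem \ref{thm2} and the factor $2$ from $(G+G^T)$ cancel and the third-derivative terms assemble into the compact $\alpha_i$. Verifying that the same $\alpha_i$ appears inside $a_i - Z_i^T\alpha_i$ in the symmetric $\omega$-pair and in the $-X_i^T\alpha_i$ of the $\beta$-gradient is the delicate consistency check that closes the argument.
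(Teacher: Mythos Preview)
Your proposal is correct and follows essentially the same route as the paper: implicit differentiation of the stationarity equation for $\hat b_i$, chain-rule differentiation of $\Lambda_i^{-1}$ picking up the $h'''$ term, contraction against $\vec(\Lambda_i^{-1}+L_i^{-T}\tilde B_iL_i^{-1})$ to produce $\alpha_i$, and assembly via Theorem \ref{thm2} and Lemma \ref{lem2}. The only difference is presentational: the paper carries the $\omega$-computation through the explicit $\vec$/Kronecker/$E_r$/$N_r$ machinery (introducing auxiliary matrices $U_i$, $V_i$ for the diagonal $h'''$ blocks), whereas you phrase the same steps in the equivalent trace language $\tr(G\,\d\Omega)\mapsto (G+G^T)W$, but the underlying algebra and the way the $\tfrac12$ and symmetrization factors cancel are identical.
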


\section{Extension to large data sets} \label{sec:ext}
In this section, we discuss how the RVB algorithm can be extended to large data using a divide and recombine strategy \citep{Broderick2013, Tran2016}. Suppose the $n$ observations in the data are partitioned into $V$ parts such that $y = (y^1, \dots, y^V)^T$ and let $\tilb^v$ be the set of transformed local variables corresponding to $y^v$. The true posterior, 
\begin{multline*}
p(\tiltheta|y) = p(\tilb,\theta_G|y) 
\propto p(\theta_G)\prod_{v=1}^V  p(y^v|\tilb^v, \theta_G) p(\tilb^v|\theta_G)\\
\propto \frac{\prod_{v=1}^V \{p(y^v|\tilb^v, \theta_G) p(\tilb^v|\theta_G) p(\theta_G) \}}{p(\theta_G)^{V-1}} 
\propto \frac{\prod_{v=1}^V p(\theta_G, \tilb^v| y^v)}{p(\theta_G)^{V-1}}.
\end{multline*}
If we estimate $p(\theta_G, \tilb^v| y^v)$ by our variational approximation $q^v (\theta_G) q^v(\tilb^v)$, which is obtained using the portion $y^v$ of the data only, then
\begin{equation*}
p(\tilb,\theta_G|y)  \propto \frac{\prod_{v=1}^V \{q^v (\theta_G) q^v(\tilb^v)\}}{p(\theta_G)^{V-1}} \;\; \text{approximately}.
\end{equation*}
Due to the assumption of independent variational posteriors for the local and global variables, we can integrate out the random effects $\tilb$ on both sides to obtain $p(\theta_G|y) \propto \prod_{v=1}^V q^v (\theta_G) / p(\theta_G)^{V-1}$. Suppose $p(\theta_G)$ is $N(\mu_0, \Sigma_0)$. As $q^v(\theta_G)$ is Gaussian, say $N(\mu_v, \Sigma_v)$,
\begin{equation*}
p(\theta_G|y) \propto  \exp\left[ -\frac{1}{2} \left\{  \sum_{v=1}^V (\theta_G - \mu_v)^T \Sigma_v^{-1} (\theta_G - \mu_v) - (V-1) (\theta_G - \mu_0)^T \Sigma_0^{-1} (\theta_G - \mu_0)  \right\}  \right].
\end{equation*}
Thus the distribution of $p(\theta_G|y)$ can be approximated by $N(\mu, \Sigma)$, where
\begin{equation*}
\Sigma = \left\{ \sum_{v=1}^V \Sigma_v^{-1} - (V-1)\Sigma_0^{-1} \right\}^{-1}, \quad
\mu = \Sigma \left\{  \sum_{v=1}^V \Sigma_v^{-1} \mu_v - (V-1)  \Sigma_0^{-1} \mu_0 \right\}.
\end{equation*}

Another possible way of extending RVB to large data sets is to fix $q(\tilb_i)$ as $N(0, I_r)$ and optimize only $q(\theta_G)$, since $p(b_i|\theta_G, y)$ is approximately $N(0, I_r)$ after transformation. To increase efficiency, one can also replace $\nabla_{\theta_{G_m}} \ell(\tiltheta)$ by an unbiased estimate, computed by choosing a random sample of $M$ subjects from $\{1, \dots, n\}$ at each iteration, say $\mathcal{S} = \{i_1, \dots, i_M\}$, and replacing $\sum_{i=1}^n$ in Theorem \ref{thm2} by $n\sum_{i \in \mathcal{S}}/M$. Further work is required to determine the accuracy of such an approach.

\section{Experimental results} \label{sec_results}
We present the results of fitting the RVB algorithm to GLMMs and also use a large dataset to investigate the quality of results obtained via dataset partitioning and  parallel processing. Results of the RVB algorithm are compared with GVA \citep{Tan2018} and INLA, with the posterior distributions estimated using MCMC via RStan regarded as ground truth. We refer to the RVB algorithm implemented using approaches 1 and 2 as RVB1 and RVB2 respectively.

GVA similarly approximates $p(\theta|y)$ by $N(\mu, \Sigma)$ and is also based on the stochastic variational algorithm \citep{Titsias2014}. The posterior dependence structure is captured using a Cholesky decomposition $TT^T$ of the precision matrix $\Sigma^{-1}$, where $T$ is lower triangular. $T$ and $\Sigma^{-1}$ are assumed to be sparse matrices of the form,
\begin{equation*} 
\begin{aligned}
T &= \begin{bmatrix}
T_{11}  & \ldots & 0 & 0 \\
\vdots  & \ddots & \vdots & \vdots\\
0 & \ldots & T_{nn} & 0 \\
T_{n+1,1} & \ldots & T_{n+1,n} & T_{n+1,n+1} \\
\end{bmatrix}, \,
\Sigma^{-1} &= \begin{bmatrix}
\Sigma^{-1}_{11} & \ldots & 0 & \Sigma^{-1}_{1,n+1} \\
\vdots & \ddots & \vdots & \vdots \\
0 & \ldots & \Sigma^{-1}_{nn} & \Sigma^{-1}_{n,n+1} \\
\Sigma^{-1}_{n+1,1} & \ldots & \Sigma^{-1}_{n+1,n} & \Sigma^{-1}_{n+1,n+1}.
\end{bmatrix},
\end{aligned}
\end{equation*}
where the block matrices $\Sigma^{-1}_{11}$, $\dots$, $\Sigma^{-1}_{nn}$, $\Sigma^{-1}_{n+1,n+1}$ correspond to $b_1$, $\dots,$ $b_n$, $\theta_G$ respectively. As $\{b_i\}$ are independent conditional on $\theta_G$ a posteriori, a block diagonal structure is assumed for these components while $\Sigma^{-1}_{n+1, i}$ captures the conditional posterior dependence between $b_i$ and $\theta_G$. On the other hand, RVB minimizes posterior dependence between local and global variables through model reparametrization and considers a Cholesky decomposition of $\Sigma$, which is of a block diagonal structure. The number of variational parameters in RVB is smaller than GVA by $nrg$ (corresponding to absence of the off-diagonal blocks $T_{n+1,1}, \dots, T_{n+1,n}$). This reduction can be significant when $n$, $r$ and $g$ are large. GVA and RVB both account for posterior dependence, albeit in different manners, and we observe that RVB can often achieve a better posterior approximation and higher lower bound than GVA at a faster convergence rate. The code for RVB and GVA is written in Julia version 0.6.4 and are available as supplementary material. \cite{Ormerod2012} consider non-Bayesian Gaussian variational approximate inference for GLMMs (also known as GVA), which derives estimates of model parameters by optimizing a lower bound on the log-likelihood. They use Gaussian approximations of the conditional distributions of random effects and then integrate out the random effects using adaptive Gauss-Hermite quadrature.  

For comparability, the same priors are used for all algorithms and we set $\sigma_\beta^2 = 100$. MCMC is performed using either the centered or noncentered parametrization depending on which is more efficient, and we run four chains in parallel with 25,000 iterations each. The first half of each chain is used as warmup and the remaining 50,000 iterations from all four chains are used for inference with no thinning. We checked the trace plots for convergence and ensured that the potential scale reduction $\hat{R} \leq 1.001$ and effective sample size per iteration is at least 0.1. All algorithms are run on a Intel Core i9-9900K CPU @ 3.60Ghz, 16.0 GB RAM. Lower bounds are estimated using 1000 simulations in each case and exclude constants independent of the variational parameters. The stopping criterion described in Section \ref{sec:Alg_imp} was used for RVB1, RVB2 and GVA.

\subsection{Simulations}
We simulate data from the Poisson, Bernoulli and binomial GLMMs to compare the performance of RVB1 and RVB2 in different scenarios, and identify the circumstances in which RVB2 is preferred to RVB1 despite it being computationally more intensive. Consider the random intercept model,
\begin{equation*}
\begin{aligned}
\eta_{ij} &= \beta_0 + \beta_1 x_{ij} + b_i, \quad b_i \sim \N(0, \sigma^2), \quad  \text{for} \quad i=1, \dots, n, \, j = 1, \,  \dots n_i.
\end{aligned}
\end{equation*}
We let $n =500$, $n_i = 7$ for each $i$ and $\sigma =1.5$. MCMC results reported in Table \ref{tab_sim} correspond to the centered parametrization.

For the Poisson model, we let $x_{ij} = (j-4)/10$ and consider $\beta_0 = -2.5$, $\beta_1 = -2$ for model I and $\beta_0 = 1.5$, $\beta_1 = 0.5$ for model II. Simulated responses from model I consist of small counts, with 82.2\% zeros and 99.3\% smaller or equal to 5. In contrast, responses for model II consist of large counts with 12.7\% zeros, 42.0\% in the interval $[1,5]$ and 45.3\% greater than 5. Results in Table \ref{tab_sim} indicate that RVB2 attains a higher lower bound and hence provides a more accurate posterior approximation than both RVB1 and GVA for model I. In particular, the posterior variance of the global parameters is captured more accurately by RVB2. For model II, RVB1 performs as well as RVB2 and both perform better than GVA. These observations are in line with the discussion in Section \ref{Sec_diff} that RVB2 works better than RVB1 for data with small counts. However, RVB2 is slower than RVB1 for both models even though it converges in the same or smaller number of iterations. GVA takes much more iterations to converge than RVB, especially for model II. For MCMC, centering is much faster than noncentering (484.8 s for model I and 840.9 s for model II) and has much higher effective sample sizes. INLA provides a fit closest to MCMC in both cases and is the fastest.

For the Bernoulli model, we set $\beta_0 = -2.5$, $\beta_2 = 4.5$ and randomly draw $x_{ij} \sim \text{Bernoulli}(0.5)$ for model I. For model II, $\beta_0 = 0$, $\beta_2 = 1$ and $x_{ij} = (j-4)/10$. This results in $\{\mu_{ij}\}$ being closer to 0 or 1 for model I, and more concentrated about 0.5 for model II (see Figure \ref{fig_mus}). Table \ref{tab_sim} shows that RVB2 produces a fit for the global parameters closest to MCMC for both models, among all other methods. As expected, RVB1 performs worse than RVB2 for model I and nearly as well for model II. For MCMC, noncentering takes more than twice as long (385.3 s for model I and 461.4 s for model II), although effective sample sizes are higher than centering for model I.

We use the same settings for the binomial model as we did for the Bernoulli model, and set $m_{ij} = 20$ uniformly. For model I, RVB1's performance is still weaker than other variational methods even though $m_{ij}$ is large. For model II, RVB1 performs as well as RVB2. All methods produce marginal posterior distributions for the global parameters that are very close to that of MCMC. RVB2 required the least number of iterations to converge for both models among the variational methods. For MCMC, noncentering required a shorter runtime of 939.7 s for model I and 1466.9 s for model II, but the effective sample size per iteration for $\beta_0$ is less than 0.1 for both models.

\begin{table}
\caption{ \label{tab_sim}
Simulated data: estimates of posterior mean and standard deviation for each parameter, runtime in seconds (number of iterations $(\times 10^3)$ in brackets) and lower bound.} 
\centering
\fbox{
\begin{tabular}{r|r|r|ccccc}
\hline
&  & & GVA & RVB1 & RVB2 & INLA & MCMC \\  \hline
\parbox[c]{1mm}{\multirow{10}{*}{\rotatebox[origin=c]{90}{Poisson}}} & \multirow{5}{*}{I} & $\beta_0$ & -2.46 $\pm$ 0.08 & -2.45 $\pm$ 0.08 & -2.45 $\pm$ 0.10 & -2.48 $\pm$ 0.10 & -2.51 $\pm$ 0.11 \\
&& $\beta_1$ & -2.26 $\pm$ 0.16 & -2.26 $\pm$ 0.15 & -2.26 $\pm$ 0.16 & -2.26 $\pm$ 0.16 & -2.26 $\pm$ 0.16 \\
& & $\sigma$ & 1.52 $\pm$ 0.06 & 1.51 $\pm$ 0.06 & 1.51 $\pm$ 0.08 & 1.55 $\pm$ 0.09 & 1.61 $\pm$ 0.09 \\
& & time & 27.7 (19) & 20.1 (15) & 27.6 (11) & 1.6 & 107.1 \\
& & $\hat{L}$ & -1316.9 & -1316.7 & -1316.3 & - & - \\   \cline{2-8}
& \multirow{5}{*}{II} & $\beta_0$ & 1.55 $\pm$ 0.06 & 1.55 $\pm$ 0.07 & 1.55 $\pm$ 0.07 & 1.55 $\pm$ 0.07 & 1.55 $\pm$ 0.07 \\
& & $\beta_1$  & 0.53 $\pm$ 0.02 & 0.53 $\pm$ 0.02 & 0.53 $\pm$ 0.02 & 0.53 $\pm$ 0.02 & 0.53 $\pm$ 0.02 \\
& & $\sigma$  & 1.47 $\pm$ 0.05 & 1.47 $\pm$ 0.05 & 1.47 $\pm$ 0.05 & 1.47 $\pm$ 0.05 & 1.47 $\pm$ 0.05 \\
& & time  & 128.7 (89) & 20.1 (15) & 33.5 (15) & 1.8 & 314.3 \\ 
& & $\hat{L}$ & 147130.5 & 147130.8 & 147130.8 & - & -  \\ \hline
\parbox[c]{1mm}{\multirow{10}{*}{\rotatebox[origin=c]{90}{Bernoulli}}}& \multirow{5}{*}{I} & $\beta_0$ & -2.54 $\pm$ 0.11 & -2.54 $\pm$ 0.11 & -2.54 $\pm$ 0.12 & -2.51 $\pm$ 0.12 & -2.55 $\pm$ 0.13 \\
& & $\beta_1$ & 4.59 $\pm$ 0.13 & 4.59 $\pm$ 0.12 & 4.59 $\pm$ 0.16 & 4.54 $\pm$ 0.16 & 4.62 $\pm$ 0.17 \\
& & $\sigma$ & 1.47 $\pm$ 0.07 & 1.47 $\pm$ 0.08 & 1.47 $\pm$ 0.10 & 1.40 $\pm$ 0.10 & 1.49 $\pm$ 0.10 \\
& & time & 29.8 (19) & 21.9 (15) & 58.5 (19) & 1.4 & 178.9 \\
& & $\hat{L}$ &  -1398.2 & -1398.2 & -1397.8 & - & - \\\cline{2-8}
& \multirow{5}{*}{II} &  $\beta_0$ & 0.07 $\pm$ 0.08 & 0.06 $\pm$ 0.08 & 0.07 $\pm$ 0.08 & 0.07 $\pm$ 0.08 & 0.07 $\pm$ 0.09 \\
& & $\beta_1$ & 1.35 $\pm$ 0.21 & 1.35 $\pm$ 0.21 & 1.35 $\pm$ 0.21 & 1.34 $\pm$ 0.20 & 1.35 $\pm$ 0.21 \\
& & $\sigma$ & 1.61 $\pm$ 0.07 & 1.60 $\pm$ 0.08 & 1.60 $\pm$ 0.08 & 1.58 $\pm$ 0.09 & 1.64 $\pm$ 0.09 \\
& & time & 33.7 (22) & 16.0 (11) & 23.2 (8) & 1.3 & 209.9 \\
& & $\hat{L}$ & -2127.3 & -2127.2 & -2127.0 & - & - \\ \hline
\parbox[c]{1mm}{\multirow{10}{*}{\rotatebox[origin=c]{90}{Binomial}}} & \multirow{5}{*}{I} & $\beta_0$ & -2.52 $\pm$ 0.07 & -2.51 $\pm$ 0.07 & -2.52 $\pm$ 0.07 & -2.52 $\pm$ 0.07 & -2.52 $\pm$ 0.07 \\
& & $\beta_1$ & 4.53 $\pm$ 0.03 & 4.53 $\pm$ 0.03 & 4.53 $\pm$ 0.03 & 4.53 $\pm$ 0.03 & 4.53 $\pm$ 0.03 \\
& & $\sigma$ & 1.49 $\pm$ 0.05 & 1.49 $\pm$ 0.05 & 1.49 $\pm$ 0.05 & 1.49 $\pm$ 0.05 & 1.49 $\pm$ 0.05 \\
& & time & 45.8 (28) & 42.7 (28) & 67.9 (19) & 1.6 & 1027.4 \\
& & $\hat{L}$ & -24301.3 & -24301.6 & -24301.3 & - & - \\  \cline{2-8}
& \multirow{5}{*}{II} &  $\beta_0$ & 0.08 $\pm$ 0.07 & 0.08 $\pm$ 0.07 & 0.08 $\pm$ 0.07 & 0.08 $\pm$ 0.07 & 0.08 $\pm$ 0.07 \\
& & $\beta_1$ & 1.09 $\pm$ 0.05 & 1.09 $\pm$ 0.05 & 1.09 $\pm$ 0.05 & 1.09 $\pm$ 0.05 & 1.09 $\pm$ 0.05 \\
& & $\sigma$ & 1.52 $\pm$ 0.05 & 1.52 $\pm$ 0.05 & 1.52 $\pm$ 0.05 & 1.52 $\pm$ 0.05 & 1.52 $\pm$ 0.05 \\
& & time & 44.2 (27) & 35.4 (24) & 37.3 (11) & 1.2 & 1212.4 \\
& & $\hat{L}$ &-37531.5 & -37531.4 & -37531.4 & - & - \\
\end{tabular}}
\end{table}
\begin{figure}
\centering
\makebox{\includegraphics[height=250pt, angle=270]{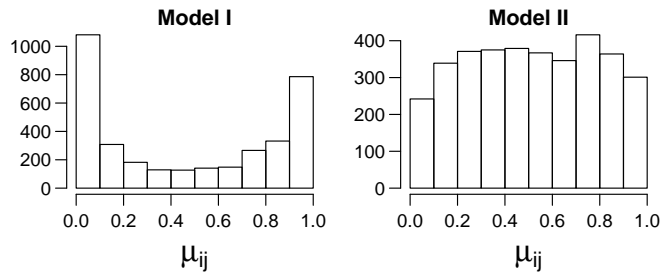}}
\caption{\label{fig_mus}Distribution of $\mu_{ij}$ for data simulated from Bernoulli and binomial models.}
\end{figure}

The simulations support recommendations in Section \ref{Sec_diff} that RVB1 is sufficient for count data without an excess of zeros, and binary data or binomial outcomes whose success probabilities are not concentrated at 0 or 1. Otherwise, RVB2 is preferred. In addition, we observe that RVB is able to yield improvements in estimating the posterior variance as compared to GVA, which has a tendency towards underestimation.

\subsection{Epilepsy data}
In the epilepsy data of \cite{Thall1990} ({\tt R} package {\tt MASS}, {\tt data(epil)}), $n=59$ epileptics were assigned either a new drug Progabide or a placebo randomly. The response $y_{ij}$ is the number of epileptic seizures of patient $i$ in the two weeks before clinic visit $j$ for $j=1, \dots, 4$. The covariates for patient $i$ are $\text{Base}_i$ (log of 1/4 the number of baseline seizures), $\text{Trt}_i$  (1 for drug treatment and 0 for placebo), $\text{Age}_i$ (log of age of patient at baseline, centered at zero), $\text{Visit}_{ij}$ (coded as $-0.3$, $-0.1$, $0.1$ and $0.3$ for $j=1, \dots, 4$ respectively) and V4, an indicator for the fourth visit. We consider the Poisson random intercept and slope models of \cite{Breslow1993}:
\begin{equation*}
\begin{aligned}
\text{Model I}&: \log \mu_{ij} = \beta_0+\beta_{\text{Base}} \text{Base}_i+\beta_{\text{Trt}} \text{Trt}_i + \beta_{\text{Base} \times \text{Trt}} \text{Base}_i \times \text{Trt}_i +\beta_{\text{Age}} \text{Age}_i  \\
& + \beta_{\text{V4}} \text{V4}_{ij} +b_i, \quad b_i \sim N(0, \sigma^2),  \\
\text{Model II}&: \log \mu_{ij} = \beta_0+\beta_{\text{Base}} \text{Base}_i+\beta_{\text{Trt}} \text{Trt}_i + \beta_{\text{Base} \times \text{Trt}} \text{Base}_i \times \text{Trt}_i +\beta_{\text{Age}} \text{Age}_i, \\
& +\beta_{\text{Visit}} \text{Visit}_{ij} +b_{i1} + b_{i2} \text{Visit}_{ij},
\quad \begin{bmatrix} b_{i1} \\ b_{i2} \end{bmatrix} \sim N(0, \Omega^{-1}), \,  \Omega^{-1} = \begin{bmatrix} \sigma_1^2 & \rho \sigma_1 \sigma_2 \\ \rho \sigma_1 \sigma_2 & \sigma_2^2 \end{bmatrix}.
\end{aligned}
\end{equation*}
For model I, the prior for $\sigma^{-2}$ is Gamma(0.5, 0.0151) and for model II, the prior for $\Omega$ is $W(3, S)$, where $S_{11} = 11.0169$, $S_{12} = -0.1616$ and $S_{22} = 0.5516$.

\begin{table}
\caption{ \label{tab_epil}
Epilepsy data. Estimates of posterior mean and standard deviation for each parameter, runtime in seconds (number of iterations $(\times 10^3)$ in brackets) and lower bound.} 
\centering
\fbox{\begin{tabular}{r|r|ccccc}
& & GVA & RVB1 & RVB2 & INLA & MCMC \\ \hline
\multirow{9}{*}{I} & $\beta_0$ & 0.27 $\pm$ 0.20 & 0.26 $\pm$ 0.27 & 0.27 $\pm$ 0.27 & 0.27 $\pm$ 0.27 & 0.26 $\pm$ 0.27 \\
& $\beta_{\text{Base}}$ & 0.88 $\pm$ 0.10 & 0.88 $\pm$ 0.13 & 0.88 $\pm$ 0.13 & 0.88 $\pm$ 0.14 & 0.89 $\pm$ 0.14 \\
& $\beta_{\text{Trt}}$ & -0.93 $\pm$ 0.40 & -0.94 $\pm$ 0.40 & -0.94 $\pm$ 0.41 & -0.94 $\pm$ 0.42 & -0.94 $\pm$ 0.42 \\
& $\beta_{\text{Base} \times \text{Trt}}$ & 0.33 $\pm$ 0.20 & 0.34 $\pm$ 0.21 & 0.34 $\pm$ 0.21 & 0.34 $\pm$ 0.21 & 0.34 $\pm$ 0.21 \\
& $\beta_{\text{Age}}$ & 0.48 $\pm$ 0.35 & 0.48 $\pm$ 0.36 & 0.47 $\pm$ 0.36 & 0.48 $\pm$ 0.36 & 0.48 $\pm$ 0.37 \\
& $\beta_{\text{V4}}$ & -0.16 $\pm$ 0.05 & -0.16 $\pm$ 0.05 & -0.16 $\pm$ 0.05 & -0.16 $\pm$ 0.05 & -0.16 $\pm$ 0.05 \\
& $\sigma$ & 0.52 $\pm$ 0.06 & 0.53 $\pm$ 0.06 & 0.53 $\pm$ 0.06 & 0.53 $\pm$ 0.06 & 0.53 $\pm$ 0.06 \\
& time & 15.2 (67) & 2.1 (10) & 3.1 (10) & 0.8 & 63.0 \\
& $\hat{\L}$ & 3130.7 & 3132.3 & 3132.4 & - & - \\ \hline
\multirow{9}{*}{II} & $\beta_0$ & 0.21 $\pm$ 0.21 & 0.21 $\pm$ 0.26 & 0.21 $\pm$ 0.26 & 0.21 $\pm$ 0.26 & 0.21 $\pm$ 0.27 \\
& $\beta_{\text{Base}}$ & 0.89 $\pm$ 0.10 & 0.89 $\pm$ 0.13 & 0.89 $\pm$ 0.13 & 0.89 $\pm$ 0.13 & 0.89 $\pm$ 0.14 \\
& $\beta_{\text{Trt}}$  & -0.93 $\pm$ 0.39 & -0.94 $\pm$ 0.40 & -0.94 $\pm$ 0.41 & -0.93 $\pm$ 0.41 & -0.93 $\pm$ 0.41 \\
& $\beta_{\text{Base} \times \text{Trt}}$ & 0.34 $\pm$ 0.20 & 0.34 $\pm$ 0.20 & 0.34 $\pm$ 0.20 & 0.34 $\pm$ 0.21 & 0.34 $\pm$ 0.21 \\
& $\beta_{\text{Age}}$ & 0.47 $\pm$ 0.34 & 0.48 $\pm$ 0.35 & 0.48 $\pm$ 0.36 & 0.48 $\pm$ 0.36 & 0.48 $\pm$ 0.36 \\
& $\beta_{\text{Visit}}$ & -0.26 $\pm$ 0.16 & -0.28 $\pm$ 0.16 & -0.28 $\pm$ 0.17 & -0.27 $\pm$ 0.17 & -0.27 $\pm$ 0.17 \\
& $\sigma_1$ & 0.51 $\pm$ 0.06 & 0.52 $\pm$ 0.06 & 0.52 $\pm$ 0.06 & 0.52 $\pm$ 0.06 & 0.52 $\pm$ 0.06 \\
& $\sigma_2$ & 0.77 $\pm$ 0.09 & 0.77 $\pm$ 0.14 & 0.77 $\pm$ 0.14 & 0.75 $\pm$ 0.14 & 0.76 $\pm$ 0.14 \\
& $\rho$ & 0.01 $\pm$ 0.17 & 0.01 $\pm$ 0.21 & 0.01 $\pm$ 0.22 & 0.01 $\pm$ 0.22 & 0.01 $\pm$ 0.23 \\
& time & 25.5 (70) & 4.8 (10) & 8.7 (10) & 1.2 & 144.9 \\
& $\hat{\L}$ & 3137.9 & 3140.1 & 3140.2 & - & - \\
\end{tabular}}
\end{table}

Counts in this data are relatively large, with only 9.7\% zeros and 48.8\% in [1,5]. Thus we expect RVB1 to perform as well as RVB2. From results in Table \ref{tab_epil}, RVB achieves higher lower bounds than GVA for both models and converges in a fraction of the time. RVB2 performs slightly better than RVB1. GVA captures the posterior means well but underestimates the posterior variance of several global parameters. Figure \ref{fig_epilLB} shows that RVB converges towards the mode of the lower bound much faster than GVA. In the first 1000 iterations, RVB has achieved a lower bound more than 3000 while GVA only reaches slightly over 2000. We observe again that for data with large counts, GVA converges very slowly, taking $\sim$7 times as many iterations as RVB. 
\begin{figure}
\centering
\makebox{\includegraphics[height=250pt, angle=270]{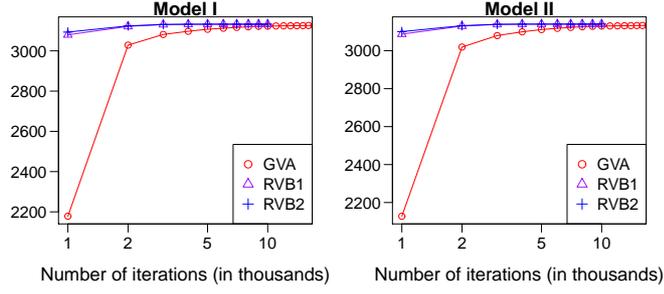}}
\caption{Epilepsy data. Average lower bound attained by GVA, RVB1, RVB2 every $10^3$ iterations.} \label{fig_epilLB}
\end{figure}
MCMC results are for centering; noncentering takes more than thrice as long (194.6 s for model I and 478.6 s for model II). Both RVB and INLA  produce very good posterior approximations of the global parameters, which are almost indistinguishable from MCMC.

Figure \ref{fig_epilbmeansd} shows that the posterior means of $\{\tilb_i\}$ estimated using RVB are somewhat close to zero while the standard deviations are close to one. This holds more strongly for RVB2 than RVB1 as Gaussian approximation of the conditional distributions of $\{b_i\}$ is more accurate in RVB2 (mode found using Newton-Raphson). 
\begin{figure}
\centering
\makebox{\includegraphics[height=400pt, angle=270]{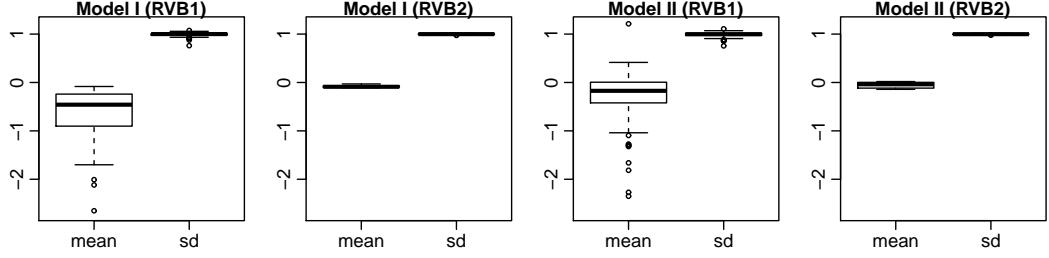}}
\caption{Epilepsy data. Boxplots of posterior means and standard deviations of $\{\tilb_i\}$.} \label{fig_epilbmeansd}
\end{figure}
These observations echo the proposition that $\{\tilb_i\}$ will be approximately uncorrelated a posteriori with zero means and unit variances after model reparametrization. Indeed, this property is very useful in improving the convergence of RVB as it makes initializing and optimizing variational parameters for $\{\tilb_i\}$ much easier. In high-dimensional problems with large number of local variables, computation time can be reduced by exploiting this feature.

Next we compare the marginal posteriors of random effects $\{b_i\}$ estimated using GVA and RVB with MCMC. For GVA, variational posterior of each $b_i$ is Gaussian, where the mean and variance can be obtained directly from the approximating density. For RVB, we use the approach in Section \ref{sec:b_i} to obtain 50000 samples of each $b_i$.
\begin{figure}
\centering
\makebox{\includegraphics[height=370pt, angle=270]{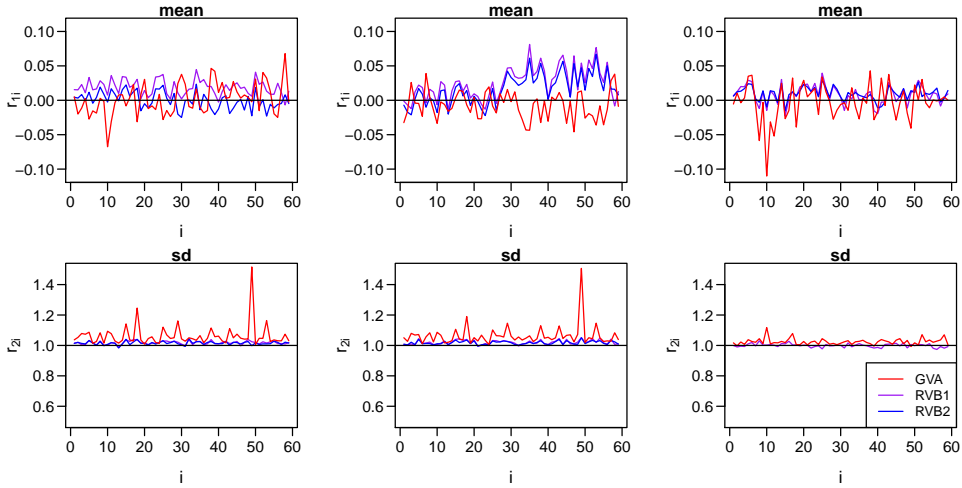}}
\caption{Epilepsy data. Plots of $r_{1i}$ and $r_{2i}$ against $i$ for GVA, RVB1 and RVB2. First column is for Model I, second and third columns are for $b_{i1}$ and $b_{i2}$ of Model II respectively.} \label{fig_epil_diff}
\end{figure}
Figure \ref{fig_epil_diff} plots 
\begin{equation} \label{rs}
r_{1i} = (\text{mean}_{b_i}^\text{VA} -\text{mean}_{b_i}^\text{MCMC})/\text{sd}_{b_i}^\text{VA} 
\quad \text{and} \quad 
r_{2i} = \text{sd}_{b_i}^\text{MCMC}/\text{sd}_{b_i}^\text{VA}
\end{equation}
where $\text{mean}_{b_i}^\text{method}$ and $\text{sd}_{b_i}^\text{method}$ denote the marginal posterior mean and standard deviation of $b_i$ estimated using a certain method. There are some instances where GVA underestimates the standard deviation quite severely while results of RVB are more uniform across all subjects and closer to MCMC generally. The posterior mean of $b_{i2}$ was severely underestimated for subject 10.

\subsection{Seeds data}
In the seeds germination data \citep{Crowder1978} ({\tt R} package {\tt hglm}, {\tt data(seeds)}), the response $y_i$ is the number of seeds that germinated out of $m_i$, which were brushed on plate $i$ for $i=1, \dots, 21$. This data arise from a $2^2$ factorial experiment and the factors are type of seed ($\text{seed}_i =1$ if O. aegyptica 73 and 0 if O. aegyptica 75) and type of root extract ($\text{extract}_i =1$ if cucumber and 0 if bean). We consider the binomial GLMM for handling overdispersion \citep{Breslow1993}, where $y_i \sim \text{binomial}(m_i, p_i)$,
\begin{equation*}
\text{logit} (p_i) = \beta_0 + \beta_{\text{seed}} \text{seed}_i + \beta_{\text{extract}} \text{extract}_i + b_i, \quad b_i \sim N(0, \sigma^2).
\end{equation*}
The prior for $\sigma^{-2}$ is Gamma$(0.5, 0.0544)$. 

\begin{figure}
\centering
\makebox{\includegraphics[height=370pt, angle=270]{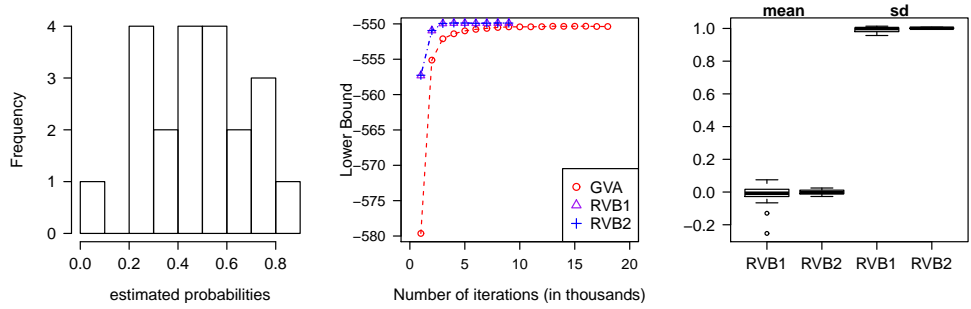}}
\caption{Seeds data. Left: histogram of $\{y_i/m_i\}$. Center: lower bound attained after every $10^3$ iterations. Right: boxplots of posterior means and standard deviations of $\{\tilb_i\}$.} \label{fig_seed_LB}
\end{figure}
A histogram of the raw probabilities $\{y_i/m_i\}$ (Figure \ref{fig_seed_LB}, left) does not indicate any concentration close to 0 or 1. Hence we postulate that RVB1 will perform as well as RVB2, and this is confirmed in Table \ref{tab_seed} (they achieve the same lower bound). RVB also achieves a higher lower bound than GVA, converging in half the number of iterations.  
\begin{table}
\caption{ \label{tab_seed}
Seeds data. Estimates of posterior mean and standard deviation for each parameter, runtime in seconds (number of iterations $(\times 10^3)$ in brackets) and lower bound.} 
\centering
\fbox{
\begin{tabular}{p{1.55cm}ccccc}
 & GVA & RVB1 & RVB2 & INLA & MCMC \\   \hline
$\beta_0$  & -0.38 $\pm$ 0.18 & -0.39 $\pm$ 0.18 & -0.39 $\pm$ 0.18 & -0.38 $\pm$ 0.19 & -0.38 $\pm$ 0.19 \\
$\beta_{\text{seed}}$ & -0.37 $\pm$ 0.23 & -0.36 $\pm$ 0.23 & -0.36 $\pm$ 0.23 & -0.37 $\pm$ 0.24 & -0.37 $\pm$ 0.24 \\
$\beta_{\text{extract}}$  & 1.04 $\pm$ 0.22 & 1.03 $\pm$ 0.22 & 1.03 $\pm$ 0.22 & 1.03 $\pm$ 0.23 & 1.03 $\pm$ 0.23 \\
$\sigma$ & 0.36 $\pm$ 0.07 & 0.35 $\pm$ 0.11 & 0.35 $\pm$ 0.11 & 0.36 $\pm$ 0.12 & 0.36 $\pm$ 0.12 \\
time  & 3.1 (18) & 1.6 (9) & 2.0 (9) & 0.5 & 19.4 \\
$\hat{\L}$ & -550.4 & -549.9 & -549.9 & - & - \\
\end{tabular}}
\end{table}
The center plot of Figure \ref{fig_seed_LB} shows a trend similar to that observed in the epilepsy data, where RVB converges to the mode much faster than GVA. The right plot shows that the means and standard deviations of $\{\tilb_i\}$ are concentrated around zero and one respectively (RVB2 more so than RVB1), which suggests that the affine transformation is effective in normalizing the random effects. Marginal posteriors of the global parameters estimated by INLA are closest to and indistinguishable from MCMC. MCMC results are from centering; noncentering takes 16.3 s but the effective sample size for $\sigma$ is lower than in centering. RVB also performs quite well while GVA underestimates the posterior variance of $\sigma$. Figure \ref{fig_seed_diff} summarizes the differences in mean and standard deviation of the marginal posteriors of $\{b_i\}$ between the variational methods and MCMC. RVB is generally able to capture the posterior means and standard deviations better than GVA although both methods underestimate the true standard deviation. 
\begin{figure}
\centering
\makebox{\includegraphics[height=300pt, angle=270]{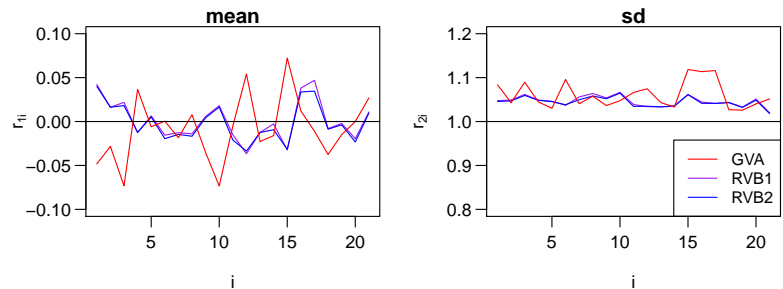}}
\caption{Seeds data. Plots of $r_{1i}$ and $r_{2i}$ against $i$ for GVA, RVB1 and RVB2.} \label{fig_seed_diff}
\end{figure}

\subsection{Toenail data}
This dataset ({\tt R} package {\tt HSAUR3}, {\tt data("toenail")}) contains the results of a clinical trial for comparing two oral antifungal treatments for toenail infection for 294 patients \citep{DeBacker1998}. The response $y_{ij}$ of patient $i$ at the $j$th clinic visit, $1 \leq j \leq 7$, is 1 if the degree of separation of the nail plate from the nail bed is moderate or severe and 0 if none or mild. For the $i$th patient, 250mg of terbinafine ($\text{Trt}_i =1$) or 200mg of itraconazole ($\text{Trt}_i =0$) is received per day, while $t_{ij}$ denotes the time in months when patient is evaluated at the $j$th visit. We consider the random intercept model,
\begin{equation*}
\text{logit} (p_{ij}) = \beta_0 + \beta_{\text{Trt}} \text{Trt}_i + \beta_t t_{ij} + \beta_{\text{Trt} \times t} \text{Trt}_i \times t_{ij} + b_i, \quad b_i \sim N(0, \sigma^2).
\end{equation*}
for $i=1, \dots, 294$, $1 \leq j \leq 7$. The prior for $\sigma^{-2}$ is Gamma(0.5, 0.4962). 

Results in Table \ref{tab_toenail} indicate that RVB2 achieved the highest lower bound, followed by GVA and RVB1.
\begin{table}
\caption{ \label{tab_toenail}
Toenail. Estimates of posterior mean and standard deviation for each parameter, runtime in seconds (number of iterations $(\times 10^3)$ in brackets) and lower bound.} 
\centering
\fbox{
\begin{tabular}{p{1.5cm}ccccc}
 & GVA & RVB1 & RVB2 & INLA & MCMC \\   \hline
$\beta_0$ & -3.22 $\pm$ 0.34 & -3.15 $\pm$ 0.31 & -3.23 $\pm$ 0.38 & -3.40 $\pm$ 0.41 & -3.51 $\pm$ 0.46 \\
$\beta_{\text{Trt}}$  & -0.76 $\pm$ 0.48 & -0.74 $\pm$ 0.45 & -0.75 $\pm$ 0.51 & -0.79 $\pm$ 0.50 & -0.82 $\pm$ 0.59 \\
$\beta_t$ & -1.64 $\pm$ 0.17 & -1.60 $\pm$ 0.14 & -1.64 $\pm$ 0.18 & -1.60 $\pm$ 0.17 & -1.71 $\pm$ 0.19 \\
$\beta_{\text{Trt} \times t}$ & -0.58 $\pm$ 0.27 & -0.54 $\pm$ 0.21 & -0.56 $\pm$ 0.27 & -0.56 $\pm$ 0.27 & -0.60 $\pm$ 0.29 \\
$\sigma$ & 3.54 $\pm$ 0.19 & 3.47 $\pm$ 0.16 & 3.56 $\pm$ 0.28 & 3.61 $\pm$ 0.32 & 4.10 $\pm$ 0.39 \\
time & 24.1 (29) & 16.0 (21) & 32.4 (15) & 1.1 & 361.7 \\
$\hat{\L}$ & -645.8 & -646.6 & -645.1 & - & - \\
\end{tabular}}
\end{table}
For binary data, it is difficult to estimate if the success probabilities are close to 0 or 1 and we recommend using RVB2 as default. RVB2 converged in the smallest number of iterations compared with RVB1 and GVA.
\begin{figure}
\centering
\makebox{\includegraphics[height=415pt, angle=270]{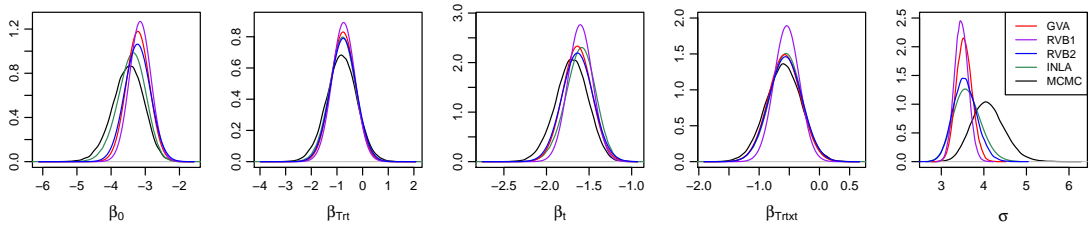}}
\caption{Toenail data. Plots of marginal posterior distributions of global parameters.} \label{fig_toenail}
\end{figure}
From Figure \ref{fig_toenail}, RVB2 performs as well as INLA for $\beta_{\text{Trt}}$, slightly better for $\beta_t$ and $\beta_{\text{Trt} \times t}$ and worse for $\beta_0$ and $\sigma$. However, it still improves significantly on GVA and RVB1. For MCMC, noncentering takes a longer time (404.3 s) than centering.

\subsection{HERS data}
We consider a large longitudinal data set from the Heart and Estrogen/Progestin Study \citep[HERS,][]{Hulley1998} available at \url{www.biostat.ucsf.edu/vgsm/data.html}. The study aims to determine if estrogen plus Progestin therapy reduces the risk of coronary heart disease (CHD) events for post-menopausal women with CHD. In this clinical trial, $2763$ women were randomly assigned to a hormone or placebo group and followed up for the next five years with an annual clinic visit. Some patients did not turn up for all 5 visits and data for certain covariates were missing. Here we consider 2031 women for whom data on all covariates are available. The binary response $y_{ij}$ is an indicator of whether the systolic blood pressure of patient $i$ is higher than 140 at the $j$th visit. We consider the random intercept model,
\begin{equation*}
\begin{aligned}
\text{logit} (p_{ij}) &= \beta_0 + \beta_{\text{age}} \text{age}_i +\beta_{\text{BMI}} \text{BMI}_{ij} +\beta_{\text{HTN}} \text{HTN}_{ij} + \beta_{\text{visit}} \text{visit}_{ij} + b_i, 
\end{aligned}
\end{equation*}
where $b_i \sim N(0, \sigma^2)$ for $i=1, \dots, 2031$, $0 \leq j \leq 5$. For patient $i$, $\text{age}_i$ is the age of patient at baseline, $\text{BMI}_{ij}$ is the body mass index at the $j$th visit, $\text{HTN}_{ij}$ is an indicator of whether high blood pressure medication is taken at the $j$th visit and $\text{visit}_{ij}$ is coded as $-1$, $-0.6$, $-0.2$, 0.2, 0.6, 1 for $j=0, 1, \dots, 5$ respectively. The covariates BMI and age are normalized before fitting the model. The prior for $\sigma^{-2}$ is Gamma(0.5, 0.5079).

\begin{table}
\caption{\label{tab_hers}
HERS. Estimates of posterior mean and standard deviation for each parameter, runtime in seconds (number of iterations $(\times 10^3)$ in brackets) and lower bound.} 
\centering
\fbox{
\begin{tabular}{p{1.3cm}ccccc}
 & GVA & RVB1 & RVB2 & INLA & MCMC \\   \hline
$\beta_0$ & -0.75 $\pm$ 0.08 & -0.75 $\pm$ 0.10 & -0.75 $\pm$ 0.10 & -0.77 $\pm$ 0.10 & -0.76 $\pm$ 0.11 \\
$\beta_{\text{age}}$ & 0.50 $\pm$ 0.05 & 0.50 $\pm$ 0.05 & 0.50 $\pm$ 0.05 & 0.50 $\pm$ 0.05 & 0.51 $\pm$ 0.06 \\
$\beta_{\text{BMI}}$ & 0.22 $\pm$ 0.05 & 0.21 $\pm$ 0.05 & 0.21 $\pm$ 0.05 & 0.22 $\pm$ 0.05 & 0.22 $\pm$ 0.05 \\
$\beta_{\text{HTN}}$ & -0.35 $\pm$ 0.09 & -0.35 $\pm$ 0.11 & -0.35 $\pm$ 0.11 & -0.34 $\pm$ 0.10 & -0.38 $\pm$ 0.11 \\
$\beta_{\text{visit}}$ & 0.22 $\pm$ 0.05 & 0.22 $\pm$ 0.05 & 0.23 $\pm$ 0.05 & 0.22 $\pm$ 0.05 & 0.23 $\pm$ 0.05 \\
$\sigma$ & 1.90 $\pm$ 0.04 & 1.89 $\pm$ 0.05 & 1.90 $\pm$ 0.06 & 1.85 $\pm$ 0.06 & 2.00 $\pm$ 0.07 \\
time & 429.5 (24) & 162.0 (11) & 240.0 (11) & 4.5 & 3415.7 \\
$\hat{\L}$ & -5041.9 & -5041.4 & -5041.1 & - & - \\
\end{tabular}}
\end{table}
\begin{figure}
\centering
\makebox{\includegraphics[height=415pt, angle=270]{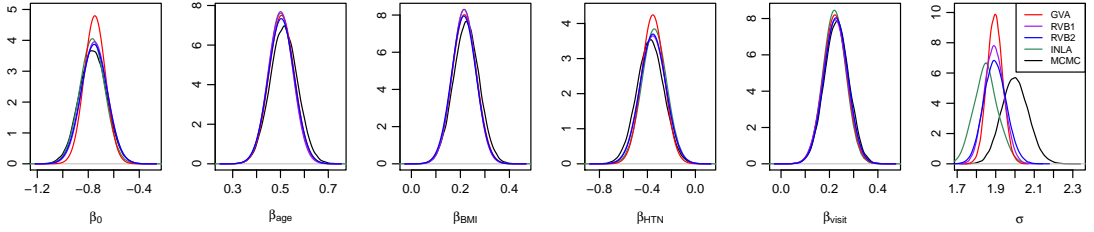}}
\caption{HERS data. Marginal posterior distributions of global parameters.} \label{fig_hers}
\end{figure}
From Table \ref{tab_hers}, RVB2 achieves the highest lower bound followed by RVB1 and GVA. RVB2 provides a very good fit to the marginal posteriors of the global parameters, which is better than INLA, although it still underestimates the mean and standard deviation of $\sigma$ (see Figure \ref{fig_hers}). The improvement it provides over GVA in estimating the posterior variance is evident. Both RVB1 and RVB2 converge in 11 iterations, which is less than half that taken by GVA. As a result, RVB2 is faster than GVA even though each iteration of RVB2 is computationally more intensive. Figure \ref{fig_hers_bmeansd} shows that GVA converges to the mode of the lower bound very slowly, unlike RVB1 and RVB2. For MCMC, centering (11000.2 s) takes more than thrice as long as noncentering and warnings on exceeding maximum treedepth were issued. 
\begin{figure}
\centering
\makebox{\includegraphics[height=300pt, angle=270]{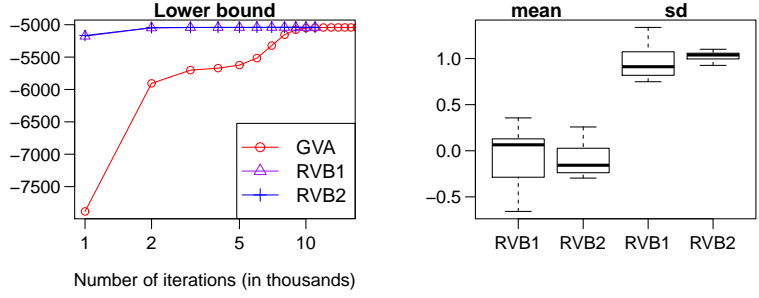}}
\caption{HERS. Lower bound and boxplots of posterior means and standard deviations of $\{\tilb_i\}$.} \label{fig_hers_bmeansd}
\end{figure}
The boxplots in Figure \ref{fig_hers_bmeansd} show that the means and variances of $\{\tilb_i\}$ are again close to zero and one respectively. However, the variation seems to be larger than previous examples, even for RVB2. This may suggest that approximating $p(b_i|\theta_G, y)$ by a Gaussian distribution is not as effective for logistic mixed models due to skewness or heavy tails. A direction for future work is to use skewed Gaussians or mixture of Gaussians to improve conditional posterior approximation.

We investigate the performance of the divide and recombine strategy by partitioning the subjects randomly into three groups, each with 677. RVB1 and RVB2 were applied to the three partial datasets in parallel and results are combined using techniques in Section \ref{sec:ext}. Here we use a N$(0,10^2)$ prior on $\omega$ instead of a Gamma prior on $\sigma^{-2}$ for greater ease in applying the divide and recombine strategy. This experiment was replicated ten times. Table \ref{tab_hers2} shows the results of RVB1 and RVB2 (with a normal prior on $\omega$) and corresponding results obtained using the divide and recombine strategy. The runtime is reduced on average by 5.9 times for RVB1 and 4.2 times for RVB2. We also compute an estimate of the lower bound attained in each experiment and the mean and standard deviation across the ten trials are reported. RVB2 performs better than RVB1 in this respect. Estimates of the global parameters obtained using the divide and recombine strategy are quite close to that obtained using the full algorithm and the standard deviation is less than or equal to 0.01 in all cases.  
\begin{table}
\caption{\label{tab_hers2}
HERS. Estimates of posterior mean and standard deviation for each parameter, runtime in seconds (number of iterations $(\times 10^3)$ in brackets) and lower bound. For the divide and recombine strategy, the means over ten replicates are reported with the standard deviation in brackets. A normal prior is used for $\omega$.} 
\centering
\fbox{
\begin{tabular}{r|cc|cc}
 & RVB1 & RVB1 (divide) & RVB2 & RVB2 (divide)   \\   \hline
$\beta_0$ & -0.75 $\pm$ 0.10 & -0.75(0.01) $\pm$ 0.10(0.00) & -0.75 $\pm$ 0.10 & -0.76(0.01) $\pm$ 0.10(0.00) \\
$\beta_{\text{age}}$ & 0.50 $\pm$ 0.05 & 0.50(0.00) $\pm$ 0.05(0.00) & 0.50 $\pm$ 0.05 & 0.50(0.00) $\pm$ 0.06(0.00) \\
$\beta_{\text{BMI}}$ & 0.22 $\pm$ 0.05 & 0.22(0.01) $\pm$ 0.05(0.00) & 0.21 $\pm$ 0.05 & 0.22(0.01) $\pm$ 0.05(0.00) \\
$\beta_{\text{HTN}}$ & -0.36 $\pm$ 0.11 & -0.36(0.01) $\pm$ 0.11(0.00) & -0.35 $\pm$ 0.11 & -0.35(0.01) $\pm$ 0.11(0.00) \\
$\beta_{\text{visit}}$ & 0.22 $\pm$ 0.05 & 0.22(0.00) $\pm$ 0.05(0.00) & 0.23 $\pm$ 0.05 & 0.23(0.00) $\pm$ 0.05(0.00) \\
$\omega$ & -0.64 $\pm$ 0.03 & -0.64(0.00) $\pm$ 0.03(0.00) & -0.64 $\pm$ 0.03 & -0.63(0.01) $\pm$ 0.03(0.00) \\
time & 162.1 & 27.3 (2.8) & 233.3 & 55.1 (6.1) \\
$\hat{\L}$  & -5040.7 & -5174.4 (2.2) & -5040.3 & -5095.0 (1.3) \\
\end{tabular}}
\end{table}

\section{Conclusion} \label{sec_conclusion}
We propose a model reparametrization approach to improve VB inference for hierarchical models, where the local variables are transformed via an affine transformation to minimize their posterior dependency on the global variables. The resulting Gaussian variational approximation (obtained using stochastic gradient ascent) is low-dimensional and can be readily extended to large datasets using a ``divide and recombine" strategy via parallel processing. In the application to GLMMs, we find that RVB1, which relies on data-based estimates of linear predictors, works well for Poisson models with large counts and binomial  outcomes where the success probabilities are not concentrated at 0 or 1. Binary data are more challenging to fit and RVB2, which searches for the conditional posterior modes of the random effects, works better for these models. The performance of MCMC in Stan relies heavily on the model parametrization both in terms of mixing and efficiency, and it is not always easy to find a parametrization that works well in both aspects. INLA is very fast and produces very good approximations of the marginal posteriors of the global parameters for Poisson and binomial models, but its performance is more varied for binary data and RVB2 can perform better sometimes. In addition, RVB provides an approximation of the full joint posterior distribution which can be very useful. Compared to GVA, RVB is often able to yield improvements in both convergence rate and accuracy of posterior approximation. The results obtained from GLMMs are overall very promising and it will be interesting to investigate the performance of RVB for more complex models.

\bibliographystyle{chicago}
\bibliography{ref}

\begin{center}
\Large \bf Supplementary material
\end{center}

\setcounter{section}{0} \renewcommand{\thesection}{S\arabic{section}}
\setcounter{figure}{0} \renewcommand{\thefigure}{S\arabic{figure}}
\setcounter{table}{0} \renewcommand{\thetable}{S\arabic{table}}
\setcounter{equation}{0} \renewcommand{\theequation}{S\arabic{equation}}
\setcounter{lemma}{0} \renewcommand{\thelemma}{S\arabic{lemma}}

\section{Default conjugate prior} \label{Sec_Wishart}
\cite{Kass2006} propose a default conjugate inverse Wishart prior, $IW(\rho, \rho R)$, for the $r \times r$ covariance matrix $\Omega^{-1}$, which is designed to be minimally informative by taking $\rho$ to be small, and the scale matrix $R$ is determined through first-stage data variability based on a prior guess for $\Omega^{-1}$. They begin by considering the GLM obtained by pooling all data and setting $b_i=0$ for all $i$. Let $\phi$ denote the dispersion parameter and $V(\mu_{ij})$ denote the variance function of the GLM. Suppose $\hat{\beta}$ is an estimate of the regression coefficients of the GLM. Define $W_i(\hat{\beta})$ as the $n_i \times n_i$ diagonal weight matrix of the GLM, whose $j$th diagonal entry is given by $\{\phi V(\mu_{ij}) [g'(\mu_{ij})]^2 \}^{-1}$ evaluated at $\mu_{ij}=\hat{\mu}_{ij} = g^{-1} (X_{ij}^T \hat{\beta})$. The authors suggested taking 
\begin{equation*}
\rho = r \quad \text{and} \quad R= \left( \frac{1}{n} \sum_{i=1}^n Z_i^T W_i(\hat{\beta}) Z_i \right)^{-1},
\end{equation*}
For the Poisson and binomial model, the dispersion parameter is one. The $j$th diagonal entry of $W_i(\hat{\beta})$ is given by 
$\hat{\mu} _{ij}= \exp(X_{ij}^T \hat{\beta})$ for the Poisson model and ${\hat{\mu}_{ij} (m_{ij} - \hat{\mu}_{ij})}/{m_{ij}} = {m_{ij}\exp(X_{ij}^T \hat{\beta})}{\{1 + \exp(X_{ij}^T \hat{\beta}) \}^{-2}}$ for the binomial model. See Table \ref{default} for details. An inverse Wishart prior $IW(\rho, \rho R)$ for $\Omega^{-1}$ is equivalent to a Wishart prior $W(\nu, S)$ for $\Omega$, where $\nu = \rho$ and $S = R^{-1}/\rho$. If $r=1$, the Wishart prior $W(\nu, S)$ reduces to $\text{Gamma}(\nu/2, S^{-1}/2)$. 

For applications where $r=2$, INLA crashed when we set $\nu=2$, even though RStan and the variational methods worked well. Hence we modify the default conjugate prior slightly by setting $\rho=r+1$ instead so that $\nu=r+1$ and $S = R^{-1}/(r+1)$ so that results are comparable with INLA.

\begin{table}
\caption{Estimate of the mean, variance functio, link function and derivative of the link function for computing the default conjugate prior. \label{default}}
\centering
\fbox{\begin{tabular}{l|cccc}
Model & $\hat{\mu}_{ij}$ & $V(\mu_{ij})$ & $g(\mu_{ij})$ & $g'(\mu_{ij})$ \\[1mm]  \hline
Poisson$(\mu_{ij})$ & $\exp(X_{ij}^T \hat{\beta})$ & $\mu_{ij}$ & $\log(\mu_{ij})$ & $1/\mu_{ij}$ \\ [1mm]
Binomial$\left( m_{ij}, \frac{\mu_{ij}}{m_{ij}} \right)$ &  $\frac{m_{ij}\exp(X_{ij}^T \hat{\beta})}{1 + \exp(X_{ij}^T \hat{\beta})}$  &  $\frac{\mu_{ij}(m_{ij} - \mu_{ij})}{m_{ij}} $ &  $\log\big(\frac{\mu_{ij}}{m_{ij} - \mu_{ij}}\big)$ & $\frac{m_{ij}}{\mu_{ij} (m_{ij} - \mu_{ij})}$ \\
\end{tabular}}
\end{table}

\section{Induced prior $ p(\omega)$}
Recall that $W_{ii} = \exp(W_{ii}^*)$ where $W_{ii} > 0$ and $W_{ii}^* \in \mathbb{R}$. Let $D^W$ be a diagonal matrix of order $r(r+1)/2$ where the diagonal is given by $\vech(J^W)$, and $J^W$ is an $r \times r$ matrix with $J^W_{ii} = W_{ii}$ and $J^W_{ij} = 1$ is $i \neq j$. Let $\d$ denote the differential operator \cite[see e.g.][]{Magnus1999}. Differentiating w.r.t. $\omega$, 
\begin{equation*}
\d \vec(W) = E_r^T \d \vech(W) = E_r^T D^W \d\omega.
\end{equation*}
Since $\Omega = WW^T$, differentiating w.r.t. $\omega$, 
\begin{equation} \label{dvecOmega}
\begin{aligned}
\d\vec(\Omega) & =  \vec\{ (\d W) W^T + W (\d W)^T\} \\
& =  (W \otimes I) \d \vec(W) + (I \otimes W) K_r \d \vec(W) \\
& = 2N_r (W \otimes I) \d \vec(W) \\
& = 2 N_r (W \otimes I) E_r^T D^W \d\omega. \\
\therefore \frac{\partial \vec(\Omega)}{\partial \omega} &= 2D^W E_r (W^T \otimes I) N_r.
\end{aligned}
\end{equation}

We have
\begin{equation*}
p(\omega) = p(\Omega) \bigg|\frac{\partial \vech(\Omega)}{\partial \omega} \bigg|. 
\end{equation*}
As $\vech(\Omega) = E_r \vec(\Omega)$, $\d \vech(\Omega) = E_r \d \vec(\Omega) = 2 E_r N_r (W \otimes I) E_r^T D^W \d\omega$ from \eqref{dvecOmega}. Hence 
\begin{equation*}
\begin{aligned}
\frac{\partial \vech(\Omega)}{\partial \omega} &= 2D^W E_r (W^T \otimes I) N_r E_r^T
\end{aligned}
\end{equation*}
and
\begin{equation*}
\begin{aligned}
\bigg|\frac{\partial \vech(\Omega)}{\partial \omega} \bigg| = 2^{r(r+1)/2}|D^W| \, |E_r (W^T \otimes I) N_r E_r^T|.
\end{aligned}
\end{equation*}
We have $|D^W| = \prod_{i=1}^r W_{ii}$ and 
\begin{equation*}
\begin{aligned}
|E_r (W^T \otimes I) N_r E_r^T| & =  |E_r (W^T \otimes I) D_r E_r N_r E_r^T| \\
& = |E_r (W^T \otimes I) D_r| \,|E_r N_r E_r^T| \\
& = 2^{-r(r-1)/2} \prod_{i=1}^r W_{ii}^{r-i+1}.
\end{aligned}
\end{equation*}
In evaluating the above expression, we have made use of the following results from \cite{Magnus1980}.
\begin{itemize}
\item Lemma 3.5 (ii): $N_r =  D_r E_r N_r$
\item Lemma 3.4 (i): $| E_r N_r E_r^T| = 2^{-r(r-1)/2}$.
\item Lemma 4.1 (iii): $|E_r (W^T \otimes I) D_r|  = \prod_{i=1}^r W_{ii}^{r-i+1}$.
\end{itemize}
Hence 
\begin{equation*}
\log p(\omega) = \log p(\Omega) + r \log 2 + \sum_{i=1}^r (r-i+2) \log (W_{ii}).
\end{equation*}

\section{Proof of Lemma 1}

\begin{proof}
If $\nexists$ $\eta_{ij} \in \mathbbm{R}$ such that $p(y_{ij}|\eta_{ij})$ attains its maximum value, then $\nexists$ $\eta_{ij} \in \mathbbm{R}$ such that $h_{ij}'(\eta_{ij}) = y_{ij}$. Since $h_{ij}'(\eta_{ij})$ is a continuous monotone increasing function ($h_{ij}''(\eta_{ij}) = \var(y_{ij}) \geq 0$), $h_{ij}'(\eta_{ij})$ must be either bounded below or bounded above by $y_{ij}$. If $h_{ij}'(\eta_{ij})$ is bounded below by $y_{ij}$, then $\exists$ $y_{ij} \leq L < \infty $ such that $\lim_{\eta_{ij} \rightarrow -\infty} h_{ij}'(\eta_{ij}) = L$. Otherwise, $\exists$ $-\infty < L \leq y_{ij}$ such that $\lim_{\eta_{ij} \rightarrow \infty} h_{ij}'(\eta_{ij}) = L$. Thus we have $\lim_{\eta_{ij} \rightarrow c} h_{ij}'(\eta_{ij}) = L$, where $c=-\infty$ or $c=\infty$. By L'Hospital's rule ($\lim_{x \rightarrow c} f(x)/g(x) = \lim_{x \rightarrow c} f'(x)/g'(x)$ if $\lim_{x \rightarrow c}|g(x)| = \infty$), 
\begin{equation*}
\begin{gathered}
0 = \lim_{\eta_{ij} \rightarrow c} \frac{h_{ij}'(\eta_{ij})}{\eta_{ij}} = \lim_{\eta_{ij} \rightarrow c} h_{ij}''(\eta_{ij}), \\
L = \lim_{\eta_{ij} \rightarrow c} \frac{h_{ij}'(\eta_{ij}) \eta_{ij}}{\eta_{ij}}  =  \lim_{\eta_{ij} \rightarrow c} \{h_{ij}''(\eta_{ij}) \eta_{ij} + h_{ij}'(\eta_{ij}) \}.
\end{gathered}
\end{equation*}
Hence, $\lim_{\eta_{ij} \rightarrow c} h_{ij}''(\eta_{ij}) = 0$ and $\lim_{\eta_{ij} \rightarrow c}  h_{ij}''(\eta_{ij}) \eta_{ij} = 0$. Since $\var(y_{ij}) \rightarrow 0$ as $\eta_{ij} \rightarrow c$, $p(y_{ij}|\eta_{ij})$ approaches a degenerate distribution with support only on $L$ ($\because$ $E(y_{ij}) \rightarrow L$). Since $p(y_{ij}|\eta_{ij})$ is concave down and the likelihood of observing $y_{ij}$ is maximized as $\eta_{ij} \rightarrow c$, $L = y_{ij}$ and $\lim_{\heta_{ij} \rightarrow c} h_{ij}'(\eta_{ij}) = y_{ij}$.
\end{proof}

\section{Proof of Theorem 1}
\begin{proof}
First, $\Lambda_i$ is well-defined for any value of $\heta_i$ since $h''(\heta_{ij}) \geq 0$. The $j$th element of $\hg_i   + \hH_i(\heta_i - X_i \beta)$ is 
\begin{equation} \label{jelement}
y_{ij} - h'(\heta_{ij}) + h_{ij}''(\heta_{ij}) \heta_{ij} - h_{ij}''(\heta_{ij}) X_{ij}^T \beta,
\end{equation}
which is well-defined if $\heta_{ij}^\ML$ exist. If $\heta_{ij}^\ML$ does not exist, then from Lemma \ref{lemA}, \eqref{jelement} has a limit of zero as $\heta_{ij} \rightarrow c_{ij}$. Thus the limit of $\lambda_{i}$ exists as $\heta_i \rightarrow c_i$. If $\heta_{ij}^\ML$ does not exist for $j=1, \dots, n_i$, then $\lim_{\heta_i \rightarrow c_i} \{\hg_i   + \hH_i(\heta_i - X_i \beta) \}= 0$ which implies that $\lim_{\heta_i \rightarrow c_i} \lambda_i = 0$.
\end{proof}

\section{Unbiased estimates of stochastic gradients}\label{apx:stocgrad}
As $\tiltheta = Cs + \mu$, differentiating $\ell(\tiltheta)$ with respect to $\mu$ and $\vech(C)$ separately, 
\begin{equation*}
\begin{aligned}[t]
\d \{ \ell(\tiltheta) \} &= \nabla_{\tiltheta} \ell(\tiltheta)^T \d\mu,
\end{aligned}
\qquad
\begin{aligned}[t]
\d \{ \ell(\tiltheta) \} &= \nabla_{\tiltheta} \ell(\tiltheta)^T (\d C) s \\
&= (s^T \otimes  \nabla_{\tiltheta} \ell(\tiltheta)^T)\d\vec(C) \\
&= \vec(\nabla_{\tiltheta} \ell(\tiltheta) s^T)^T E_d^T \d\vech(C) \\
&= \vech(\nabla_{\tiltheta} \ell(\tiltheta) s^T)^T \d\vech(C).
\end{aligned}
\end{equation*}
Therefore $\nabla _\mu \ell(\tiltheta) = \nabla_{\tiltheta} \ell(\tiltheta)$ and $\nabla _{\vech(C)} \ell(\tiltheta) = \vech(\nabla_{\tiltheta} \ell(\tiltheta) s^T)$. In addition,
\begin{equation*}
\d \log |C| = \tr(C^{-1} \d C) = \vec(C^{-T})^T E_d^T \d\vech(C) = \vech(C^{-T})^T  \d\vech(C).
\end{equation*}
Hence $\nabla_{\vech(C)} \log |C| = \vech(C^{-T})$. For the second estimator, differentiating $\log q(\tiltheta|\mu, C)$ with respect to $\tiltheta$,
\begin{equation*}
\begin{aligned}
\d \log q(\tiltheta|\mu, C) = -(\tiltheta - \mu)^T C^{-T} C^{-1} \d\tiltheta = - s^T C^{-1} \d\tiltheta,
\end{aligned}
\end{equation*}
Hence $\nabla_{\tiltheta} \log q(\tiltheta|\mu, C) = -C^{-T}s$. Differentiating with respect to $C$,
\begin{equation*}
\begin{aligned}
 - \d \{(\tiltheta - \mu)^T C^{-T} C^{-1} (\tiltheta - \mu)\} &= s^T \{(\d C^T) C^{-T} + C^{-1} (\d C) \} s \\
&=\{ (s^T C^{-1} \otimes s^T) K_d + (s^T \otimes s^T C^{-1})\} \d\vec(C) \\
&=\{ \vec(ss^T C^{-1} )^T K_d + \vec(C^{-T}ss^T)^T\} \d\vec(C) \\
&=2\vec(C^{-T}ss^T)^TE_d^T \d\vech(C) \\
&=2\vech(C^{-T}ss^T)^T \d\vech(C) \\
\end{aligned}
\end{equation*}
Hence $\nabla_{\vech(C)} \{ \log q(\tiltheta|\mu, C)  \} = \vech( C^{-T}ss^T-C^{-T} )$.

\section{Proof of Theorem 2}
 \begin{definition}
For any square matrix $A$, let $k(\cdot)$ denote a function such that 
$$k(A) = \bar{A} -  \tfrac{1}{2}\dg(A).$$
\end{definition}
Recall that $\dg(A)$ denotes the diagonal matrix derived from $A$ by setting non-diagonal elements to zero and $\bar{A}$ denotes the lower triangular matrix derived from $A$ by setting all superdiagonal elements to zero. 

\begin{lemma} \label{lem1}
Consider the differential with respect to $\theta_G$. Let $A_i = L_i^{-1} \d \Lambda_i L_i^{-T}$ for $i=1, \dots, n$. Then
\begin{enumerate}
\itemsep 0em
\item $\d L_i = L_i k( A_i )$,
\item $\vec\{k(A_i)\} = \tfrac{1}{2}  E_r^T D_r^T(L_i^{-1} \otimes L_i^{-1} ) \d\vec(\Lambda_i)$,
\item $\tr(L_i^{-1} \d L_i)  =  \tfrac{1}{2} \vec(\Lambda_i^{-1})^T \d\vec(\Lambda_i)$,
\item $a_i^T \d b_i = \tfrac{1}{2}\vec (L_i^{-T} \tilde{B}_i L_i^{-1})^T \d \vec(\Lambda_i)  + a_i^T\d\lambda_i$.
\end{enumerate}
\end{lemma}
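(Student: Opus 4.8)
The common engine for all four parts is differentiation of the Cholesky relation $\Lambda_i = L_i L_i^T$ together with the fact that $L_i$, $L_i^{-1}$ and $\d L_i$ are all lower triangular. Writing $M_i = L_i^{-1}\d L_i$ (lower triangular), I would first establish (i). Differentiating $\Lambda_i = L_i L_i^T$ gives $\d\Lambda_i = (\d L_i)L_i^T + L_i(\d L_i)^T$, and pre/post-multiplying by $L_i^{-1}$ and $L_i^{-T}$ yields $A_i = M_i + M_i^T$. Since $M_i$ is lower triangular and $A_i$ is symmetric, $M_i$ is uniquely recovered as the lower triangle of $A_i$ with its diagonal halved, i.e. $M_i = \bar{A}_i - \tfrac12\dg(A_i) = k(A_i)$; hence $\d L_i = L_i k(A_i)$.

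Part (iii) then drops out: $\tr(L_i^{-1}\d L_i) = \tr\{k(A_i)\}$, and because $\bar{A}_i$ and $\dg(A_i)$ carry the same diagonal as $A_i$, one has $\tr\{k(A_i)\} = \tfrac12\tr(A_i)$; cyclicity of the trace with $L_i^{-T}L_i^{-1} = \Lambda_i^{-1}$ then gives $\tr(A_i) = \tr(\Lambda_i^{-1}\d\Lambda_i) = \vec(\Lambda_i^{-1})^T\d\vec(\Lambda_i)$, using that $\Lambda_i^{-1}$ is symmetric. For (ii), vectorizing $A_i = L_i^{-1}\d\Lambda_i L_i^{-T}$ gives $\vec(A_i) = (L_i^{-1}\otimes L_i^{-1})\d\vec(\Lambda_i)$, so it suffices to prove the elimination/duplication identity $\vec\{k(A)\} = \tfrac12 E_r^T D_r^T\vec(A)$ for symmetric $A$. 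I would verify this entrywise: $D_r^T\vec(A)$ has $(m,n)$-component ($m\geq n$) equal to $A_{mn}+A_{nm}$ off the diagonal and $A_{mm}$ on it, and $E_r^T$ re-embeds these into the lower triangle with zeros above, producing $2\bar{A} - \dg(A) = 2k(A)$ when $A=A^T$.

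The real work is (iv). The differential being with respect to $\theta_G$, so that $\tilb_i$ is held fixed, $\d b_i = (\d L_i)\tilb_i + \d\lambda_i$, whence $a_i^T\d b_i = a_i^T(\d L_i)\tilb_i + a_i^T\d\lambda_i$; substituting (i) and cycling the (scalar) trace gives $a_i^T(\d L_i)\tilb_i = \tr\{k(A_i)B_i^T\}$ with $B_i = L_i^T a_i\tilb_i^T$. The crux is the identity $\tr\{k(A_i)B_i^T\} = \tfrac12\tr(A_i\tilde{B}_i)$. Expanding both as bilinear pairings between the entries of $A_i$ and those of $B_i$, the left side restricts the pairing to the lower triangle of $A_i$ (halving the diagonal), whereas on the right $\tilde{B}_i = \bar{B}_i + \bar{B}_i^T - \dg(B_i)$ symmetrizes the lower triangle of $B_i$; the symmetry $A_{i,mn} = A_{i,nm}$ is precisely what folds the strictly-upper contribution onto the strictly-lower one so that the two pairings agree. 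Finally, cyclicity gives $\tfrac12\tr(A_i\tilde{B}_i) = \tfrac12\tr(\d\Lambda_i\,L_i^{-T}\tilde{B}_i L_i^{-1})$, which in vectorized form equals $\tfrac12\vec(L_i^{-T}\tilde{B}_i L_i^{-1})^T\d\vec(\Lambda_i)$, using symmetry of $\d\Lambda_i$ and $\tilde{B}_i$.

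I expect the entrywise bookkeeping in this last identity — correctly matching the triangular pieces of $k(A_i)$ and $\tilde{B}_i$ under the symmetry of $A_i$ — to be the main obstacle; everything else is standard matrix differential calculus built on (i).
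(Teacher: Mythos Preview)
Your proposal is correct. Parts (i) and (iii) coincide with the paper's argument essentially verbatim. For (ii) and (iv), however, your route differs from the paper's. The paper proves (ii) by assembling several identities from \cite{Magnus1980} (on $E_r$, $K_r$, $D_r$ and the interaction of $D_rE_r$ with $(L_i^{-1}\otimes L_i^{-1})$), and then proves (iv) by feeding (ii) back in: it writes $a_i^T(\d L_i)\tilb_i=(\tilb_i\otimes L_i^Ta_i)^T\vec\{k(A_i)\}$, substitutes the expression from (ii), and uses $D_r\vech(B_i)=\vec(\tilde B_i)$ to land on $\tfrac12\vec(L_i^{-T}\tilde B_iL_i^{-1})^T\d\vec(\Lambda_i)$. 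You instead verify $\vec\{k(A)\}=\tfrac12E_r^TD_r^T\vec(A)$ for symmetric $A$ by a direct entrywise check, and for (iv) you bypass (ii) entirely, establishing the trace identity $\tr\{k(A_i)B_i^T\}=\tfrac12\tr(A_i\tilde B_i)$ by pairing entries and exploiting the symmetry of $A_i$, then finishing with trace cyclicity. Your approach is more elementary and self-contained, requiring no external lemmas; the paper's is more mechanical once the Magnus toolkit is in hand and makes the appearance of $\tilde B_i$ look inevitable through $D_r\vech(B_i)$. Either way the ``entrywise bookkeeping'' you flag as the main obstacle is exactly the content of the step $D_r\vech(B_i)=\vec(\tilde B_i)$ in the paper's version, so you have correctly located where the substance lies.
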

\begin{proof}
For (a), to find the differential of the Cholesky factor $L_i$, we differentiate $L_i L_i^T = \Lambda_i$ and then multiply by $L_i^{-1}$ on the left and $L_i^{-T}$ on the right \citep{Murray2016}:
\begin{equation*}
\begin{gathered}
(\d L_i) L_i^T + L_i (\d L_i)^T = \d \Lambda_i, \\
L_i^{-1} \d L_i + (\d L_i)^T L_i^{-T} = L_i^{-1} \d \Lambda_i L_i^{-T}.
\end{gathered}
\end{equation*}
 On the left-hand side, the first term is lower triangular and the second term is upper triangular (transpose of the first term). The term on the right-hand side is $A_i$. Thus, $L_i^{-1} \d L_i = k( A_i )$ which implies that $\d L_i = L_i k( A_i )$.

For (b), we make use of some results from \cite{Magnus1980}:
\begin{itemize}
\item Lemma 3.3 (i): $\vec(\bar{A_i}) = E_r^T E_r \vec(A_i)$ 
\item Lemma 3.3 (iv): $\vec(\dg(A_i)) = E_r^T E_r K_r E_r^T E_r \vec(A_i)$.
\item Lemma 3.4 (ii) and Lemma 3.6 (iii): $2I_{r(r+1)/2} - E_r K_r E_r^T = D_r^T D_r $.
\item Lemma 4.4 (i): $D_r E_r(L_i^{-1} \otimes L_i^{-1} ) D_r = (L_i^{-1} \otimes L_i^{-1} ) D_r$.
\end{itemize}
Combining these results with the Kronecker product property that $\vec(A_i) = (L_i^{-1} \otimes L_i^{-1}) \d\vec(\Lambda_i)$, we have
\begin{equation*}
\begin{aligned}
\vec\{k(A_i)\} 
&= \vec(\bar{A_i}) - \tfrac{1}{2}\vec\{ \dg(A_i) \} \\
&= E_r^T E_r \vec(A_i) - \tfrac{1}{2}E_r^T E_r K_r E_r^T E_r \vec(A_i)  \\
&= \tfrac{1}{2} E_r^T (2I_{r(r+1)/2}  - E_r K_r E_r^T) E_r \vec(A_i) \\
&= \tfrac{1}{2} E_r^T D_r^T D_r  E_r  (L_i^{-1} \otimes L_i^{-1} )\d\vec(\Lambda_i) \\
&= \tfrac{1}{2} E_r^T D_r^T D_r E_r(L_i^{-1} \otimes L_i^{-1} ) D_r \d\vech(\Lambda_i) \\
&= \tfrac{1}{2} E_r^T D_r^T(L_i^{-1} \otimes L_i^{-1} ) \d\vec(\Lambda_i).
\end{aligned}
\end{equation*}

For (c), we can use the result from Lemma \ref{lem1} (a) to obtain
\begin{equation*}
\tr(L_i^{-1} \d L_i) = \tr(k(A_i)) =  \tfrac{1}{2} \tr(A_i)  =  \tfrac{1}{2} \tr(L_i^{-T} L_i^{-1} \d \Lambda_i) =  \tfrac{1}{2} \vec(\Lambda_i^{-1})^T \d\vec(\Lambda_i).
\end{equation*}

For (d), differentiating $b_i = L_i \tilb_i + \lambda_i$ with respect to $\theta_{G}$, we have $\d b_i = (\d L_i) \tilb_i + \d\lambda_i$. Hence $a_i^T \d b_i = a_i^T(\d L_i) \tilb_i + a_i^T\d\lambda_i$. Note that $(u \otimes v) = \vec(vu^T)$ if $u$ and $v$ are vectors and recall that $B_i = L_i^T a_i \tilde{b}_i^T$, $\tilde{B}_i = \bar{B}_i + \bar{B}_i^T - \dg(B_i)$. From Lemma \ref{lem1} (a) and (b), 
\begin{equation*}
\begin{aligned}
a_i^T (\d L_i) \tilb_i = a_i^TL_i k(A_i) \tilb_i  
&= (\tilb_i \otimes L_i^T a_i)^T \vec(k(A_i)) \\
&= \tfrac{1}{2} \vec(B_i)^T E_r^T D_r^T(L_i^{-1} \otimes L_i^{-1} ) \d\vec(\Lambda_i) \\
&= \tfrac{1}{2} \{ (L_i^{-T} \otimes L_i^{-T}) D_r \vech(B_i) \}^T \d\vec(\Lambda_i) \\
&= \tfrac{1}{2} \vec (L_i^{-T} \tilde{B}_i L_i^{-1})^T \d \vec(\Lambda_i).
\end{aligned}
\end{equation*}
\end{proof}

\begin{proof}[Theorem 2]
Note that $b_i = L_i \tilb_i + \lambda_i$. Differentiating $\ell(\tiltheta)$ with respect to $\tilb_i$, we obtain 
\begin{equation*}
\begin{aligned}
\d \ell(\tiltheta) &= \{\nabla_{b_i} \log p(y_i, b_i|\theta_G)\}^T \d b_i \\
&= a_i^T L_i \d\tilb_i \\
\implies \nabla_{\tilb_i}\ell(\tiltheta) &= L_i^T a_i.
\end{aligned}
\end{equation*}

Differentiating $\ell(\tiltheta)$ with respect to $\theta_{G_m}$ for $m=1, \dots, M$, 
\begin{equation*}
\begin{aligned}
\d \ell(\tiltheta) &= \sum_{i=1}^n [a_i^T \d b_i + \{\nabla_{\theta_{G_m}} \log p(y_i, b_i|\theta_G)\}^T \d\theta_{G_m} + \tr(L_i^{-1} \d L_i)] \\
& \quad + \{\nabla_{\theta_{G_m}} \log p(\theta_G) \}^T \d\theta_{G_m}.
\end{aligned}
\end{equation*}
The first two terms are obtained using chain rule as $ \log p(y_i, b_i|\theta_G)$ depends on $\theta_G$ directly as well as through $b_i$. Substituting the expressions of $\tr(L_i^{-1} \d L_i)$ and $a_i^T \d b_i$ from Lemma \ref{lem1} (c) and (d) respectively, we obtain 
\begin{equation*}
\begin{aligned}
\d \ell(\tiltheta) &= \sum_{i=1}^n [a_i^T\d\lambda_i  + \tfrac{1}{2}\vec (L_i^{-T} \tilde{B}_i L_i^{-1})^T \d \vec(\Lambda_i)  + \tfrac{1}{2} \vec(\Lambda_i^{-1})^T \d\vec(\Lambda_i)] \\
& \quad + \sum_{i=1}^n \{\nabla_{\theta_{G_m}} \log p(y_i, b_i|\theta_G)\}^T \d\theta_{G_m} + \{\nabla_{\theta_{G_m}} \log p(\theta_G) \}^T \d\theta_{G_m}.\\
\therefore \nabla_{\theta_{G_m}} \ell(\tiltheta) &=  \sum_{i=1}^n [ (\nabla_{\theta_{G_m}} \lambda_i) a_i + \tfrac{1}{2} \{\nabla_{\theta_{G_m}} \vec(\Lambda_i)\}  \vec (L_i^{-T} \tilde{B}_i L_i^{-1} + \Lambda_i^{-1})] \\
& \quad + \sum_{i=1}^n \{\nabla_{\theta_{G_m}} \log p(y_i, b_i|\theta_G)\} + \{\nabla_{\theta_{G_m}} \log p(\theta_G) \}.
\end{aligned}
\end{equation*}
\end{proof}

\section{Proof of Lemma 2}
\begin{proof}
We differentiate each expression below with respect to $\omega$. 
For (a),  
\begin{equation*}
\begin{aligned}
\d  \log p(y_i, b_i|\theta_G) 
&= \tfrac{1}{2} \d [\log |\Omega| - b_i^T \Omega b_i] \\
& = \tfrac{1}{2} [ \tr(\Omega^{-1} \d\Omega) - \tr(b_i b_i^T \d\Omega) ] \\
& = \tfrac{1}{2} [ \vec(\Omega^{-1})  - \vec(b_i b_i^T) ]^T\d \vec(\Omega) \\
& = \vec(\Omega^{-1} - b_i b_i^T)^T N_r (W \otimes I) E_r^T D^W \d \omega \\
\therefore \nabla_{\omega} \log p(y_i, b_i|\theta_G)  
& = D^W E_r (W^T \otimes I) \vec(\Omega^{-1} - b_i b_i^T) \\
& = D^W E_r \vec(\Omega^{-1}W - b_i b_i^TW) \\
& = D^W \vech(W^{-T} - b_i b_i^TW).
\end{aligned}
\end{equation*}

For (b), recall that 
\begin{equation*}
\log p(\omega) = \tfrac{\nu-r-1}{2} \log|\Omega| - \tfrac{1}{2} \tr(S^{-1}\Omega) + \sum_{i=1}^r  (r-i+2) \log (W_{ii}) + C_3,
\end{equation*}
where $C_3$ is a constant independent of $\omega$. Hence
\begin{equation*}
\begin{aligned}
\d \log p(\theta_G) 
&= \tfrac{\nu-r-1}{2} \tr(\Omega^{-1}\d\Omega) - \tfrac{1}{2} \tr(S^{-1}\d\Omega) + \sum_{i=1}^r (r-i+2) \d W_{ii}^* \\
& = \vec \{  (\nu-r-1) \Omega^{-1} - S^{-1} \} N_r (W \otimes I) E_r^T D^W \d\omega + \vech(\diag(u))^T \d\omega\\
\end{aligned}
\end{equation*}
where $u$ is a vector of length $r$ where the $i$th element is $r-i+2$.
\begin{equation*}
\begin{aligned}
\nabla_\omega  \log p(\theta_G)  
&= D^W  E_r (W^T \otimes I)\vec \{  (\nu-r-1) \Omega^{-1} - S^{-1} \} + \vech(\diag(u))\\
&= D^W  \vech \{(\nu-r-1) W^{-T} - S^{-1} W\} + \vech(\diag(u)).
\end{aligned}
\end{equation*}
\end{proof}

\section{Proof of Lemma 3}
\begin{proof}
In the first approach, we have 
\begin{equation*}
\Lambda_i = (\Omega + Z_i^T H_i(\hat{\eta}_i) Z_i)^{-1}, \quad 
\lambda_i = \Lambda_i  Z_i^T \{ y_i - g_i(\hat{\eta}_i) + H_i(\hat{\eta}_i)(\heta_i - X_i \beta) \}.
\end{equation*}

Differentiating each of the above expressions with respect to $\beta$, $\nabla_\beta \vec(\Lambda_i) = 0$ since $\Lambda_i$ is independent of $\beta$ and 
\begin{equation*}
\begin{aligned}
\d \lambda_i = - \Lambda_i  Z_i^T H_i(\hat{\eta}_i) X_i \d \beta
\implies \nabla_\beta  \lambda_i  = - X_i^T H_i(\hat{\eta}_i) Z_i \Lambda_i.
\end{aligned}
\end{equation*}
We also have $\nabla_{\beta} \log p(\theta_G) =  - \beta/ \sigma_\beta^2$ and $\nabla_{\beta} \log p(y_i, b_i|\theta_G) = X_i^T (y_i - g_i(\eta_i) )$. Plugging these into Theorem \ref{thm2}
\begin{equation*}
\begin{aligned}
\nabla_\beta \ell(\tiltheta) &= - \sum_{i=1}^n  X_i^T H_i(\hat{\eta}_i) Z_i \Lambda_i a_i + \sum_{i=1}^n X_i^T (y_i - g_i(\eta_i) ) - \beta/ \sigma_\beta^2 \\
&= \sum_{i=1}^n X_i^T (y_i - g_i(\eta_i) - H_i(\hat{\eta}_i) Z_i \Lambda_i a_i ) - \beta/ \sigma_\beta^2.
\end{aligned}
\end{equation*}

Differentiating $\lambda_i$ and $\Lambda_i$ with respect to $\omega$, we have 
\begin{equation*}
\begin{aligned}
\d \vec(\Lambda_i) 
& = - \vec \{\Lambda_i  (\d \Omega) \Lambda_i\} \\
&= - (\Lambda_i \otimes \Lambda_i) \d\vec(\Omega) \\
& = - 2 (\Lambda_i \otimes \Lambda_i) N_r (W \otimes I)  E_r^T D^W \d\omega \\
& = - 2 N_r (\Lambda_iW \otimes \Lambda_i) E_r^T D^W \d\omega. \\
\therefore \nabla_{\omega} \vec(\Lambda_i) &= - 2 D^W E_r (W^T\Lambda_i \otimes \Lambda_i) N_r.
\end{aligned}
\end{equation*}
In addition, 
\begin{equation*}
\begin{aligned}
\d \lambda_i  
&=  (\d\Lambda_i) Z_i^T \{ y_i - g_i(\hat{\eta}_i) + H_i(\hat{\eta}_i)(\heta_i - X_i \beta) \} \\
&=  - \Lambda_i (\d \Omega) \Lambda_i Z_i^T \{ y_i - g_i(\hat{\eta}_i) + H_i(\hat{\eta}_i)(\heta_i - X_i \beta) \} \\
&= - \Lambda_i  (\d \Omega) \lambda_i \\
& = - (\lambda_i^T \otimes \Lambda_i) 2N_r (W \otimes I) E_r^T D^W \d\omega \\
& = - \{(\lambda_i^T \otimes \Lambda_i)K_r + (\lambda_i^T \otimes \Lambda_i) \} (W \otimes I) E_r^T D^W \d\omega \\
& = - (\Lambda_i \otimes \lambda_i^T + \lambda_i^T \otimes \Lambda_i) (W \otimes I) E_r^T D^W \d\omega \\
& = - ( \Lambda_i W \otimes \lambda_i^T + \lambda_i^TW \otimes \Lambda_i) E_r^T D^W \d\omega. \\
\therefore \nabla_{\omega} \lambda_i &= - D^W E_r( W^{T} \Lambda_i \otimes \lambda_i + W^T\lambda_i \otimes \Lambda_i).
\end{aligned}
\end{equation*}
Plugging these results and Lemma \ref{lem2} into Theorem \ref{thm2},
\begin{equation*}
\begin{aligned}
\nabla_\omega \ell(\tiltheta) 
&= D^W \sum_{i=1}^n \vech(W^{-T} - b_ib_i^T W) - D^W E_r \sum_{i=1}^n (W^T \Lambda_i \otimes \lambda_i + W^T\lambda_i \otimes \Lambda_i)a_i \\
& \quad -D^W E_r \sum_{i=1}^n (W^T\Lambda_i \otimes \Lambda_i) \vec (\Lambda_i^{-1} + L_i^{-T} \tilde{B}_i L_i^{-1}) + \vech(\diag(u))\\
& \quad + D^W  \vech \{(\nu-r-1) W^{-T} - S^{-1} W\}  \\
&= \vech(\diag(u)) + D^W  \vech \big[ (n + \nu-r-1) W^{-T} -  S^{-1}W \\
& \quad + \sum_{i=1}^n (b_ib_i^T + \lambda_i a_i^T\Lambda_i + \Lambda_i a_i \lambda_i^T + \Lambda_i + L_i\tilde{B}_i L_i^T)W \big].
\end{aligned}
\end{equation*}
\end{proof}

\section{Proof of Lemma 4} 
In the second approach, we consider a Taylor expansion of 
\[
\log p(b_i|\theta_G, y_i) = y_i^T Z_i b_i - \sum_{j=1}^{n_i} h_{ij} (X_{ij}^T \beta + Z_{ij}^T b_i)- \frac{1}{2} b_i^T \Omega b_i + C_2,
\]
about $\hat{b}_i$, which satisfies 
\begin{equation} \label{bihat}
Z_i^T (y_i - g_i(X_i \beta + Z_i \hat{b}_i)) = \Omega \hat{b}_i,
\end{equation}
and 
\[
\lambda_i = \hat{b}_i, \quad \Lambda_i = \{Z_i^T H_i(X_i \beta+ Z_i \hat{b}_i) Z_i +  \Omega\}^{-1}.
\]
Differentiating \eqref{bihat} w.r.t. $\beta$ implicitly,
\begin{equation*}
\begin{aligned}
- Z_i^T H_i(X_i \beta + Z_i \hat{b}_i) (X_i \d \beta + Z_i \d \hat{b}_i &)= \Omega \d \hat{b}_i \\
- Z_i^T H_i(X_i \beta + Z_i \hat{b}_i) X_i \d \beta &= \Lambda_i^{-1} \d \hat{b}_i \\
\implies \nabla_\beta \lambda_i = \nabla_\beta \hat{b}_i &= - X_i^T H_i(X_i \beta + Z_i \hat{b}_i) Z_i \Lambda_i.
\end{aligned}
\end{equation*}
Differentiating $\Lambda_i$ w.r.t. $\beta$,
\begin{equation*}
\begin{aligned}
\d \Lambda_i &= - \Lambda_i Z_i^T \diag\{ h_i'''(X_i \beta + Z_i \hat{b}_i) \odot (X_i \d \beta+ Z_i \d \hat{b}_i) \} Z_i \Lambda_i \\
\d \Lambda_i &= - \Lambda_i Z_i^T \diag\{ h_i'''(X_i \beta + Z_i \hat{b}_i) \odot (X_i \d \beta + Z_i (\nabla_\beta \hat{b}_i)^T \d \beta) \} Z_i \Lambda_i \\
\d \vec(\Lambda_i) &= - (\Lambda_i Z_i^T \otimes \Lambda_i Z_i^T) E_{n_i}^T \vech[\diag\{ h_i'''(X_i \beta + Z_i \hat{b}_i) \odot (X_i + Z_i (\nabla_\beta \hat{b}_i)^T)\d \beta \}] \\
\nabla_\beta \vec(\Lambda_i) &= -U_i^T E_{n_i} (Z_i \Lambda_i \otimes Z_i \Lambda_i)
\end{aligned}
\end{equation*}
where $U_i$ is a $n_i(n_i+1)/2 \times p$ matrix where each row corresponds to an element of $\vech(I_{n_i})$ and only the rows corresponding to the diagonal elements are nonzero. The row corresponding to the $j$th diagonal element is given by $h_{ij}'''(X_{ij}^T \beta + Z_{ij}^T \hat{b}_i) (X_{ij}^T + Z_{ij}^T(\nabla_\beta \hat{b}_i)^T)$.

Let $\alpha_i = \tfrac{1}{2} h_i'''(X_i\beta + Z_i \hat{b}_i) \odot \diag\{ Z_i ( \Lambda_i + L_i \tilde{B}_i L_i^T ) Z_i^T \}$. Substituting these results in Theorem \ref{thm2}, 
\begin{equation*}
\begin{aligned}
\nabla_\beta \ell(\tiltheta) &= - \sum_{i=1}^n X_i^T H_i(X_i \beta + Z_i \hat{b}_i) Z_i \Lambda_i a_i + \sum_{i=1}^n X_i^T (y_i - g_i(\eta_i) ) \\
&\quad  - \tfrac{1}{2}\sum_{i=1}^n U_i^T E_{n_i}  (Z_i \Lambda_i \otimes Z_i \Lambda_i) \vec (\Lambda_i^{-1} + L_i^{-T} \tilde{B}_i L_i^{-1})
- \beta/ \sigma_\beta^2 \\
&= \sum_{i=1}^n \big[X_i^T \{y_i - g_i(\eta_i) - H_i(X_i \beta + Z_i \hat{b}_i) Z_i \Lambda_i a_i\} - \tfrac{1}{2} U_i^T \vech\{ Z_i ( \Lambda_i +   L_i \tilde{B}_i L_i^T ) Z_i^T \} \big] \\
& \quad - \beta/ \sigma_\beta^2 \\
&= \sum_{i=1}^n X_i^T \{y_i - g_i(\eta_i) - H_i(X_i \beta + Z_i \hat{b}_i) Z_i \Lambda_i a_i\} - \beta/ \sigma_\beta^2 \\
& \quad - \tfrac{1}{2} \sum_{i=1}^n  (X_i^T + \nabla_\beta \hat{b}_i Z_i^T) [h_i'''(X_i\beta + Z_i \hat{b}_i) \odot \diag\{ Z_i ( \Lambda_i + L_i \tilde{B}_i L_i^T ) Z_i^T \} ] \\
&= \sum_{i=1}^n X_i^T \{y_i - g_i(\eta_i) - H_i(X_i \beta + Z_i \hat{b}_i) Z_i \Lambda_i (a_i - Z_i^T \alpha_i) - \alpha_i \} - \beta/ \sigma_\beta^2,
\end{aligned}
\end{equation*}

Differentiating \eqref{bihat} w.r.t. $\omega$ implicitly,
\begin{equation*}
\begin{gathered}
- Z_i^T H_i(X_i \beta + Z_i \hat{b}_i) Z_i \d \hat{b}_i = \d\Omega \hat{b}_i + \Omega \d \hat{b}_i \\
\d \hat{b}_i = - \Lambda_i \d\Omega \hat{b}_i = - (\hat{b}_i^T \otimes \Lambda_i) \d\vec(\Omega) = - (\hat{b}_i^T \otimes \Lambda_i) 2 N_r (W \otimes I) E_r^T D^W \d\omega\\
\d \hat{b}_i= - (\hat{b}_i^T \otimes \Lambda_i + \Lambda_i\otimes \hat{b}_i^T ) (W \otimes I) E_r^T D^W \d\omega\\
\d \hat{b}_i= - ( \hat{b}_i^TW \otimes \Lambda_i + \Lambda_iW \otimes \hat{b}_i^T) E_r^T D^W \d\omega\\
\implies \nabla_\omega \lambda_i = \nabla_\omega \hat{b}_i = - D^W E_r (W^T \hat{b}_i \otimes \Lambda_i + W^T \Lambda_i \otimes \hat{b}_i) \\
\end{gathered}
\end{equation*}

Differentiating $\Lambda_i$ w.r.t. $\omega$,
\begin{equation*}
\begin{aligned}
\d \Lambda_i &= - \Lambda_i [Z_i^T \diag\{ h'''(X_i \beta + Z_i \hat{b}_i) \odot Z_i \d \hat{b}_i \} Z_i + \d\Omega] \Lambda_i \\
\d \Lambda_i &= - \Lambda_i Z_i^T \diag\{ h'''(X_i \beta + Z_i \hat{b}_i) \odot Z_i (\nabla_\omega \hat{b}_i)^T \d \omega \} Z_i \Lambda_i - \Lambda_i \d\Omega \Lambda_i\\
\d \vec(\Lambda_i) &= - (\Lambda_i Z_i^T \otimes \Lambda_i Z_i^T) E_{n_i}^T \d\vech[\diag\{ h'''(X_i \beta + Z_i \hat{b}_i) \odot Z_i (\nabla_\omega \hat{b}_i)^T\d \omega \}] \\
&\quad - (\Lambda_i \otimes \Lambda_i) 2 N_r (W \otimes I) E_r^T D^W \d\omega\\
\d \vec(\Lambda_i) &= - (\Lambda_i Z_i^T \otimes \Lambda_i Z_i^T) E_{n_i}^T \d\vech[\diag\{ h'''(X_i \beta + Z_i \hat{b}_i) \odot Z_i (\nabla_\omega \hat{b}_i)^T\d \omega \}] \\
&\quad - 2N_r (\Lambda_iW \otimes \Lambda_i) E_r^T D^W \d\omega\\
\nabla_\omega \vec(\Lambda_i) &= - V_i^T E_{n_i}  (Z_i \Lambda_i \otimes Z_i \Lambda_i) - 2D^W E_r(W^T \Lambda_i \otimes \Lambda_i)N_r 
\end{aligned}
\end{equation*}
where $V_i$ is a $n_i(n_i+1)/2 \times r$ matrix where each row corresponds to an element of $\vech(I_{n_i})$ and only the rows corresponding to the diagonal elements are nonzero. The row corresponding to the $j$th diagonal element is given by $h'''(X_{ij}^T \beta + Z_{ij}^T \hat{b}_i) Z_{ij}^T(\nabla_\omega \hat{b}_i)^T$.

Hence we have
\begin{equation*}
\begin{aligned}
&\tfrac{1}{2}\{\nabla_\omega \vec(\Lambda_i)\} \vec (\Lambda_i^{-1} + L_i^{-T} \tilde{B}_i L_i^{-1}) \\
& = \{ - \tfrac{1}{2} V_i^T E_{n_i}  (Z_i \Lambda_i \otimes Z_i \Lambda_i) - D^W E_r(W^T \Lambda_i \otimes \Lambda_i)N_r\} \vec (\Lambda_i^{-1} + L_i^{-T} \tilde{B}_i L_i^{-1})  \\ 
& = - \tfrac{1}{2} V_i^T \vech (Z_i (\Lambda_i + L_i \tilde{B}_i L_i^T) Z_i^T)  - D^W\vech \{(\Lambda_i + L_i \tilde{B}_i L_i^T )W \} \\
& =  - \tfrac{1}{2} \nabla_\omega \hat{b}_i Z_i^T  [h'''(X_i\beta + Z_i \hat{b}_i) \odot \diag\{ Z_i ( \Lambda_i + L_i \tilde{B}_i L_i^T ) Z_i^T \} ]  - D^W\vech \{(\Lambda_i + L_i \tilde{B}_i L_i^T )W \} \\
& = D^W E_r (W^T \hat{b}_i \otimes \Lambda_i + W^T \Lambda_i \otimes \hat{b}_i)  Z_i^T \alpha_i - D^W\vech \{(\Lambda_i + L_i \tilde{B}_i L_i^T )W \} \\
& = D^W\vech \{(\Lambda_i Z_i^T \alpha_i \hat{b}_i^T + \hat{b}_i \alpha_i^T Z_i \Lambda_i - \Lambda_i - L_i \tilde{B}_i L_i^T )W \}.
\end{aligned}
\end{equation*}

Substituting these results into Theorem \ref{thm2}, 
\begin{equation*}
\begin{aligned}
& \nabla_\omega \ell(\tiltheta) = \vech(\diag(u)) + D^W  \vech \{(\nu-r-1) W^{-T} - S^{-1} W\}  \\
& \quad + D^W \sum_{i=1}^n \vech \{(\Lambda_i Z_i^T \alpha_i \hat{b}_i^T + \hat{b}_i \alpha_i^T Z_i \Lambda_i - \Lambda_i - L_i \tilde{B}_i L_i^T )W \} \\
& \quad + D^W \sum_{i=1}^n \vech(W^{-T} - b_ib_i^T W) - D^W E_r \sum_{i=1}^n (W^T \hat{b}_i \otimes \Lambda_i + W^T \Lambda_i \otimes \hat{b}_i) a_i\\
&= \vech(\diag(u)) + D^W  \vech \{(n + \nu-r-1) W^{-T} - S^{-1} W\} \\
& \quad + D^W \sum_{i=1}^n \vech \{(\Lambda_i Z_i^T \alpha_i \hat{b}_i^T + \hat{b}_i \alpha_i^T Z_i \Lambda_i - \Lambda_i - L_i \tilde{B}_i L_i^T - b_ib_i^T- \Lambda_i a_i \hat{b}_i^T - \hat{b}_i a_i^T \Lambda_i)W \} \\
&= \vech(\diag(u)) + D^W \vech\{ (n + \nu-r-1) W^{-T} - S^{-1} W \\
&\quad  - \sum_{i=1}^n (\Lambda_i (a_i - Z_i^T \alpha_i ) \hat{b}_i^T + \hat{b}_i (a_i - Z_i^T \alpha_i )^T \Lambda_i + \Lambda_i + L_i \tilde{B}_i L_i^T + b_ib_i^T )W \}.
\end{aligned}
\end{equation*}

\end{document}